\titlespacing*{\section}{0pt}{2.0ex plus 0.6ex minus 0.4ex}{1.0ex plus 0.3ex}
\titlespacing*{\subsection}{0pt}{1.5ex plus 0.5ex minus 0.3ex}{0.8ex plus 0.2ex}
\titlespacing*{\subsubsection}{0pt}{1.2ex plus 0.4ex minus 0.2ex}{0.6ex plus 0.2ex}
\titlespacing*{\paragraph}{0pt}{1.0ex plus 0.3ex minus 0.2ex}{0.8em}
\setlist{
  topsep=2pt,
  itemsep=1pt,
  parsep=0pt,
  partopsep=0pt
}
\def\thm@space@setup{%
  \thm@preskip=6pt plus 2pt minus 2pt
  \thm@postskip=6pt plus 2pt minus 2pt
}
\newlength{\lyxlabelwidth}      
\providecommand{\factname}{Fact}
\theoremstyle{remark}
\newtheorem{claim}{Claim}
\theoremstyle{plain}
\newtheorem{proposition}{Proposition}
\theoremstyle{plain}
\newtheorem{fact}{\protect\factname}
\newtheorem{lemma}{Lemma}
\newtheorem{theorem}{Theorem}
\newtheorem{corollary}{Corollary}
\theoremstyle{definition}
\newtheorem{example}{Example}
\newtheorem{remark}{Remark}
\newtheorem{definition}{Definition}
\newtheorem*{proposition*}{Proposition}
\newtheoremstyle{named} 
{} 
{} 
{} 
{} 
{\bfseries} 
{:} 
{.5em} 
{#3} 
\newcommand{\casename}{Case}
\newcommand{\subcasename}{Subcase}
\newlist{casenv}{enumerate}{4}
\setlist[casenv]{leftmargin=*,align=left,widest={iiii}}
\setlist[casenv,1]{label={{\textit\ \casename} \arabic*.},ref=\arabic*}
\setlist[casenv,2]{label={{\textit\ \subcasename} \roman*.},ref=\roman*}
\setlist[casenv,3]{label={{\textit\ \subcasename\ \alph*.}},ref=\alph*}
\setlist[casenv,4]{label={{\textit\ \subcasename} \arabic*.},ref=\arabic*}
\newif\ifshowcomments
  \newcommand{\DF}[1]{{\color{blue}[DF: #1] }}
  \newcommand{\JC}[1]{{\color{orange}[JC: #1]}}
  \newcommand{\DF}[1]{}
  \newcommand{\JC}[1]{}
\newcommand{\del}[1]{\textcolor{blue}{\xout{#1}}}
\newcommand{\add}[1]{\textcolor{red}{\uwave{#1}}}
\newcommand{\defterm}[1]{\emph{#1}}
\newcommand{\defprop}[1]{\emph{#1}}
\newcommand{\prop}[1]{\textsc{#1}}
\DeclareRobustCommand{\BAL}{\prop{BAL}\xspace}
\DeclareRobustCommand{\IR}{\prop{IR}\xspace}
\DeclareRobustCommand{\WELB}{\prop{WELB}\xspace}
\DeclareRobustCommand{\SP}{\prop{SP}\xspace}
\DeclareRobustCommand{\TP}{\prop{TP}\xspace}   
\DeclareRobustCommand{\DSP}{\prop{DSP}\xspace} 
\DeclareRobustCommand{\SDSP}{\prop{SDSP}\xspace}
\DeclareRobustCommand{\MAR}{\prop{MAR}\xspace}
\DeclareRobustCommand{\PE}{\prop{PE}\xspace}
\DeclareRobustCommand{\IGE}{\prop{IGE}\xspace}
\DeclareRobustCommand{\acksize}{\fontsize{9}{11}\selectfont}
\title{Axioms for Top Trading Cycles in Multi-Object Reallocation\thanks{\acksize 
We are especially grateful to Ivan Balbuzanov for his guidance and support throughout this project. We also thank Açelya Altuntaş, Onur Kesten, Bettina Klaus, Alexandru Nichifor, Szilvia Pápai, Tom Wilkening, and Steven Williams, as well as seminar participants at HEC Lausanne, Deakin University, the 2024 Economic Theory Festival at UNSW, ESAM 2024, and MATCH-UP 2024 for their helpful comments and suggestions. Feng gratefully acknowledges financial support from the National Natural Science Foundation of China No. 72503028. Coreno received financial support from an Australian Government RTP Scholarship, a University of Melbourne FBE-GRATS Scholarship, and an M. A. Bartlett Scholarship. Coreno and Feng gratefully acknowledge financial support from the Swiss National Science Foundation (SNSF) through Project 100018-231777. This article is based upon work supported by the National Science Foundation under Grant No.~DMS-1928930 and by the Alfred P.~Sloan Foundation under grant G-2021-16778, while authors were in residence at the Simons Laufer Mathematical Sciences Institute (formerly MSRI), Berkeley, CA, Fall 2023.}}
\author{Jacob Coreno\textsuperscript{a,}\thanks{Correspondence to: Faculty of Business and Economics, University of Lausanne, Internef 530, 1015 Lausanne,
Switzerland.}
\and 
Di Feng\textsuperscript{b}}
\begin{document}
\maketitle

\renewcommand{\thefootnote}{\alph{footnote}}

\footnotetext[1]{University of Lausanne, Switzerland. Contact: \protect\href{mailto:jacob.coreno@unil.ch}{jacob.coreno@unil.ch}.}
\footnotetext[2]{Dongbei University of Finance and Economics, China. Contact: \protect\href{mailto:dfeng@dufe.edu.cn}{dfeng@dufe.edu.cn}.}

\vspace{-2.0em}
\begin{abstract}
This paper studies multi-object reallocation without monetary transfers, where agents initially own multiple indivisible objects and have strict preferences over bundles (e.g., shift exchange among workers at a firm). Focusing on marginal rules that elicit only rankings over individual objects, we provide axiomatic characterizations of the generalized Top Trading Cycles rule (TTC) on the lexicographic and responsive domains. On the lexicographic domain, TTC is characterized by balancedness, individual-good efficiency, the worst-endowment lower bound, and either truncation-proofness or drop strategy-proofness. On the responsive domain, TTC is the unique marginal rule satisfying individual-good efficiency, truncation-proofness, and either the worst-endowment lower bound or individual rationality. In the Shapley--Scarf housing market, TTC is characterized by Pareto efficiency, individual rationality, and truncation-proofness. Finally, on the conditionally lexicographic domain, the augmented Top Trading Cycles rule is characterized by balancedness, Pareto efficiency, the worst-endowment lower bound, and drop strategy-proofness. The conditionally lexicographic domain is a maximal domain on which Pareto efficiency coincides with individual-good efficiency.
\end{abstract}

\noindent \textbf{Keywords:} Top Trading Cycles; multi-object exchange; responsive preferences; lexicographic preferences; truncation-proofness; worst-endowment lower bound.

\noindent \textbf{JEL Classification:} C78; D47; D71.

\section{Introduction}\label{sec:introduction}
\renewcommand{\thefootnote}{\arabic{footnote}}

We study \defterm{multi-object reallocation} without monetary transfers.
A problem instance consists of a group of agents, an initial endowment in which each agent owns a set of heterogeneous and indivisible objects, and each agent's strict preferences over bundles. A \defterm{rule} maps reported preferences to a feasible reassignment. When each agent is endowed with a single object, the model reduces to the classic \defterm{housing market} of \citet{shapley1974}.

Reallocation problems arise in a variety of applications. 
Firms plan shift schedules months in advance, and employees exchange their assigned shifts (e.g., \citet{manjunath2021}). Universities exchange students through programs such as the Tuition Exchange Program in the United States and Erasmus in Europe
(e.g., \citet{dur2019two,bloch2020}). Living-donor organ exchange programs provide further examples featuring both single- and multi-object exchange (e.g., \citet{roth2004,roth2005,ergin2017dual,ergin2020efficient}).\footnote{Related work studies course allocation as a \emph{multi-object assignment} problem without endowments (e.g., \citet{budish2011combinatorial,budish2012,bichler2021}).}

Multi-object reallocation confronts two difficulties that are largely absent in single-object reallocation.
The first is the vast number of possible bundles, which makes it infeasible for agents to accurately report their preferences over bundles in practice.\footnote{For instance, in shift reallocation with only 20 shifts, there are ${20 \choose 5} = 15{,}504$ distinct bundles of five shifts.} As \citet[p.~331]{roth2015experiments} explains, ``a practical mechanism must simplify the language in which preferences can be reported, and by doing so it will restrict which preferences can be reported.'' Accordingly, we focus primarily on \defprop{marginal} rules, under which agents report only rankings over individual objects (their \defterm{marginal preferences}).\footnote{This reporting format is used, for example, in the National Resident Matching Program (NRMP): hospitals submit ``rank-order lists'' over individual doctors despite having potentially complex preferences over sets of doctors (e.g., \citet{roth1999redesign}, \citet{milgrom2009assignment,milgrom2011critical}).}

When working with marginal rules, we focus on two preference domains in which marginal preferences are informative about the underlying bundle comparisons. On the lexicographic domain, marginal rules are without loss of generality because an agent's ranking over individual objects fully determines her preferences over bundles. On the more general responsive domain (\citet{roth1985college}), marginal preferences remain informative---especially when comparing bundles that differ by one object---but they need not determine all possible bundle comparisons. Both domains rule out complementarities, which cannot be expressed via marginal preferences.\footnote{To address complementarities, Section~\ref{Results for CL preferences} considers a richer but still succinct reporting language based on ``lexicographic preference trees''.}

The second difficulty is the tension among standard objectives relating to efficiency, voluntary participation, and incentive compatibility.
While the ideal desiderata---\defprop{Pareto efficiency (\PE)}, \defprop{individual rationality (\IR)}, and \defprop{strategy-proofness (\SP)}---are jointly attainable in housing markets,\footnote{In housing markets, these properties uniquely characterize the Top Trading Cycles rule (\citet{ma1994}).} they become incompatible once some agent owns more than one object (\citet{sonmez1999}).
Much of the axiomatic literature sidesteps this incompatibility by insisting on strategy-proofness and sacrificing one of the other desiderata. Strategy-proofness is especially restrictive in multi-object environments.
Under mild auxiliary axioms, combining \SP with \PE leads to \defterm{Sequential Dictatorships} (\citet{papai2001,klaus2002,ehlers2003scw,hatfield2009strategy}), while combining \SP with \IR leads to \defterm{Segmented Trading Cycles rules} (\citet{papai2003,papai2007}). Either way, insisting on strategy-proofness forces major concessions---either by effectively ignoring the initial ownership structure
or by accepting severe inefficiency.

In this paper, we circumvent the incompatibility by making modest, practically motivated compromises relative to the ideal notions of efficiency and incentive compatibility. We study the \defterm{generalized Top Trading Cycles rule (TTC)}, a marginal rule based on a multi-object extension of the celebrated Top Trading Cycles algorithm (attributed to David Gale by \citet{shapley1974}).
We show that TTC comes close to the unattainable ideals: it delivers meaningful guarantees regarding efficiency, participation, and incentive compatibility. More concretely, we characterize TTC on various domains using \defprop{balancedness (\BAL)}, \defprop{individual-good efficiency (\IGE)}, the participation guarantees \IR and the \defprop{worst-endowment lower bound (\WELB)}, and the incentive requirements \defprop{truncation-proofness (\TP)} and \defprop{drop strategy-proofness (\DSP)}.

On the lexicographic domain, TTC is characterized by \BAL, \PE, \WELB, and either \TP or \DSP (Theorems~\ref{Multi-unit characterization with TP} and~\ref{Multi-unit characterization with drop SP}).
On the responsive domain, where no marginal rule can satisfy \IGE, \IR, and \DSP (Corollary~\ref{impossibility1}), TTC is characterized as the unique marginal rule satisfying \IGE, \TP, and either \WELB or \IR (Theorems~\ref{responsive characterization} and~\ref{responsive characterization with IR}).
In housing markets, our axioms simplify, and TTC is characterized by \PE, \IR, and \TP (Theorem~\ref{Theorem: Single-unit characterization}). We briefly discuss these properties below, deferring formal definitions and further intuition to Section~\ref{sec:Properties}. 

Balancedness (\BAL) requires that
each agent receive as many objects as in her endowment. In many applications this is imposed as a feasibility constraint.\footnote{In shift reallocation it may reflect contractual constraints on the number of shifts per worker (\citet{manjunath2021,manjunath2024}). Similar constraints arise in time banks and exchange programs (\citet{andersson2021organizing,biro2022,dur2019two}).} We treat \BAL as an axiom, rather than a model restriction, so that we can study its interaction with other properties (see Lemmata~\ref{individual-good + IR implies WELB} and~\ref{lemma:BAL implied on responsive domain}).

Individual rationality (\IR) and the worst-endowment lower bound (\WELB) are distinct participation guarantees. \IR ensures that the outcome weakly Pareto dominates the initial allocation; that is, every agent enjoys a bundle at least as good as her endowment. \WELB is an object-wise guarantee that ensures no agent receives any object ranked below the worst in her endowment according to her marginal preferences. \IR and \WELB coincide in single-object reallocation. In multi-object problems the properties diverge, but \IR implies \WELB under mild additional structure.\footnote{See Lemma~\ref{individual-good + IR implies WELB} and the alternative preference domains in the surrounding discussion.} TTC satisfies both guarantees on each domain we study.

For efficiency, we use Pareto efficiency (\PE) when it is attainable and individual-good efficiency (\IGE) as a tractable relaxation. \IGE relaxes \PE by ruling out mutually beneficial exchanges where each participating agent trades a single object. Such trades are comparatively easy to identify and execute, and under responsive preferences they can be detected using only marginal preferences. The two properties coincide in single-object reallocation and multi-object reallocation with lexicographic preferences. On the responsive domain, the relaxation is substantive: full \PE is computationally demanding (e.g., \citet{de2009complexity,aziz2019efficient}) and, among marginal rules, incompatible with \IR (\citet{manjunath2024}).

Drop strategy-proofness (\DSP) and truncation-proofness (\TP) protect against simple manipulations of agents' marginal preferences. 
An agent implements a \defterm{drop strategy} by demoting an object she does not own to the bottom of her marginal preference list (\citet{altuntacs2023}); she implements a \defterm{truncation strategy} by demoting every such object that she ranks below a chosen cutoff. Intuitively, truncation strategies can be thought of as ``shortening the list of acceptable objects'' in the standard sense from two-sided matching (e.g., \citet{mongell1991sorority,roth1999truncation,ehlers2008}). \DSP and \TP require that no agent can improve her assignment using the corresponding class of manipulations.

In addition to marginal preferences, we also consider a richer---yet still succinct---reporting language that can express complementarities. We study \defterm{conditionally lexicographic} preferences, which admit a compact representation via ``lexicographic preference trees'' and preserve the equivalence between \IGE and \PE found under lexicographic preferences. On this domain, we study the \defterm{Augmented Top Trading Cycles rule (ATTC)} introduced by \citet{fujita2018}. We show that ATTC satisfies natural extensions of our key properties and is characterized by \BAL, \PE, \WELB, and \DSP (Theorem~\ref{characterization of ATTC}).
Finally, we show that conditionally lexicographic preferences form a maximal domain on which \PE and \IGE are equivalent (Proposition~\ref{maximal domain 1}); beyond this boundary, ruling out Pareto-improving single-object exchanges no longer guarantees full Pareto efficiency.

The paper proceeds as follows. Section~\ref{sec:model} introduces the model, preference domains, properties, and TTC. Sections~\ref{results for lexicographic preferences} and~\ref{sec:responsivedomain} present our characterizations on the lexicographic and responsive domains, and Section~\ref{sec:SS} specializes the results to housing markets. Section~\ref{Results for CL preferences} studies conditionally lexicographic preferences and characterizes ATTC. Section~\ref{related literature} discusses related literature. The appendices formalize lexicographic preference trees, contain omitted proofs, and provide additional examples.

\section{Model} 
\label{sec:model}
Let $N=\left\{ 1,\dots,n\right\} $ be a finite set of $n \geq 2$ \defterm{agents}.
Let $O$ be a finite set of heterogeneous and indivisible \defterm{objects} such that $|O| \geq n$. A \defterm{bundle} is a subset of $O$. Let $2^{O}$ denote the set of bundles. We denote generic elements of $O$ by lowercase letters (e.g., $x,y,z$), and generic elements of $2^O$ by uppercase letters (e.g., $X,Y,Z$). To simplify notation, when there is no risk of confusion, we identify a
singleton set $\left\{ x\right\}$ with the element $x$ itself. For example, we write $X \cup x$ to denote $X \cup \{x\}$.

An \defterm{allocation} is a function $\mu:N\to2^{O}$ such that (i) for all $i \in N$, $\mu(i) \neq \emptyset$, (ii)
for all $i,j\in N$, $i\neq j$ implies $\mu\left(i\right)\cap\mu\left(j\right)=\emptyset$,
and (iii) $\bigcup_{i\in N}\mu\left(i\right)=O$. Thus, an allocation $\mu$ can be represented as a profile $\left(\mu_{i}\right)_{i\in N}$
of nonempty, pairwise disjoint bundles satisfying $\bigcup_{i\in N}\mu_{i}=O$. For each $i\in N$,
$\mu_{i}$ is called agent $i$'s \defterm{assignment} at $\mu$. Let $\mathcal{A}$ denote the set of allocations.

The \defterm{initial allocation}, also referred to as the \defterm{endowment allocation}, is denoted by $\omega=\left(\omega_{i}\right)_{i\in N}$.
For each $i\in N$, $\omega_{i}$ is called agent~$i$'s
\defterm{endowment}. The initial \defterm{owner} of an object $o$ is the unique agent $i$ with $o \in \omega_i$.

Each agent $i$ has a \defterm{(strict) preference} relation $P_{i}$ on the set of bundles. We assume that $P_i$ belongs to some specified subset $\mathcal{P}_i$ of all strict preference relations on $2^O$.
In subsequent sections, we impose further structure on the sets $\mathcal{P}_i$. If agent $i$ prefers bundle $X$
to $Y$, then we write $X \mathrel{P_i} Y$. Let $R_{i}$ denote the
``at least as good as'' relation associated with $P_{i}$, defined by
$X \mathrel{R_i} Y$ if and only if ($X \mathrel{P_i} Y$ or $X=Y$).\footnote{Formally, $R_{i}$ is a linear order (i.e., a \emph{complete}, \emph{transitive}, and \emph{antisymmetric}
binary relation) on $2^O$ , and $P_{i}$ is the strict (i.e., \emph{irreflexive} and \emph{asymmetric}) part of $R_{i}$.} Given a nonempty bundle
$X\in2^{O}$, $\max_{P_{i}}\left(X\right)$ denotes the most-preferred
object in $X$ at $P_i$, i.e., $\max_{P_{i}}\left(X\right)=x$ if $x\in X$
and $x \mathrel{R_i} y$ for all $y\in X$. Similarly, $\min_{P_{i}}\left(X\right)$
denotes the least-preferred object in $X$ at $P_i$, i.e., $\min_{P_{i}}\left(X\right)=x$ if
$x\in X$ and $y \mathrel{R_i} x$ for all $y\in X$. A \defterm{preference profile} is an indexed family $P=\left(P_{i}\right)_{i\in N}$ of preference relations. The \defterm{domain} is the set $\mathcal{P} \coloneqq \prod_{i \in N} \mathcal{P}_i$, representing all possible preference profiles under consideration.

An object reallocation problem (or simply a \defterm{problem}) is a triple $(N,\omega,P)$. Since $(N,\omega)$ remains fixed throughout, we identify a problem with its preference profile $P$. Thus, the domain $\mathcal{P} = \prod_{i \in N} \mathcal{P}_i$ of preference profiles represents the set of all problems.

A \defterm{rule (on $\mathcal{P}$)} is a function $\varphi:\mathcal{P} \to \mathcal{A}$ that associates with each preference profile $P$ an allocation
$\varphi\left(P\right)$. For each $i\in N$, $\varphi_{i}\left(P\right)$
denotes agent $i$'s assignment at $\varphi\left(P\right)$.

\subsubsection*{Marginal rules}
A rule is called \defprop{marginal} if it can be implemented with a simple reporting language consisting of linear orders over individual objects. Marginal rules are attractive in applications because they ask agents only for a ranking over individual objects, rather than a ranking over all feasible bundles, and thus substantially reduce the informational and cognitive burden on participants. A prominent example is the National Resident Matching Program (NRMP), in which hospitals submit rank-order lists over individual doctors, even though they may have complex preferences over sets of doctors (see, e.g., \citet{roth1999redesign}, \citet{roth2002economist}, \citet{milgrom2009assignment,milgrom2011critical}).

Formally, given a preference relation $P_i$ on $2^O$, the \defterm{marginal preference} over a subset $X \subseteq O$ of individual objects, denoted $P_i |_X$, is the restriction of $P_i$ to singleton subsets of $X$.\footnote{Equivalently, $P_i |_X$ is the strict linear order on $X$ such that, for all $x,y \in X$, $x \mathrel{P_i |_X} y$ if and only if $x \mathrel{P_i} y$.} Similarly, $R_i |_X$ denotes the restriction of $R_i$ to singleton subsets of $X$. We often represent a marginal preference $P_i |_X$ as an ordered list of objects; for example, $P_i |_X:x_1,x_2,\dots, x_{|X|}$ means that $x_1 \mathrel{P_i} x_2 \mathrel{P_i} \cdots \mathrel{P_i} x_{|X|}$, and $P_i |_X:x_1, x_2, \dots, x_k, \dots$ means that $x_1 \mathrel{P_i} x_2 \mathrel{P_i} \cdots \mathrel{P_i} x_k \mathrel{P_i} o$ for all $o \in X \setminus \{x_1 , x_2 ,\dots, x_k\}$. 
Given a preference profile $P = (P_i)_{i \in N}$,
the \defterm{marginal preference profile} over $X$ is the profile $P |_X = (P_i |_X)_{i \in N}$.

A rule is marginal if it depends solely on agents' marginal preferences over $O$.

\begin{definition}
    A rule $\varphi$ is \defprop{marginal (MAR)} if, for all $P, P' \in \mathcal{P}$, $P|_O = P'|_O$ implies $\varphi(P) = \varphi(P')$.
\end{definition}

Marginal rules are most natural in settings where objects behave like substitutes, so that agents' preferences over bundles are captured reasonably well by their rankings over individual objects. Most of our analysis focuses on these settings (Sections~\ref{results for lexicographic preferences}--\ref{sec:SS}). The main limitation of marginal rules is that they do not allow agents to express complementarities across objects. To address complementarities, Section~\ref{Results for CL preferences} (and Appendix~\ref{Appendix: LP trees}) considers a richer but still succinct reporting language based on ``lexicographic preference trees''.

\subsection{Preference domains}\label{preference domains}

We now describe the lexicographic and responsive domains, two structured preference domains where objects behave like substitutes.

An agent has lexicographic preferences if, when evaluating distinct bundles $X$ and $Y$, she prefers the bundle containing the most-preferred object in $X \cup Y$; if the most-preferred object in $X \cup Y$ is common to $X$ and $Y$, then she prefers the bundle containing the second-most-preferred object in $X \cup Y$, and so on.\footnote{Similar ``take the best'' decision heuristics have been documented in psychology (e.g., \citet{gigerenzer1996reasoning}, \citet{gigerenzer1999simple}).} Formally, a preference relation $P_i$ on $2^O$ is \defterm{lexicographic} if, for any two distinct bundles $X$ and $Y$, 
$$
X \mathrel{P_i} Y \iff \max_{P_i}\left(X \mathbin{\triangle} Y\right) \in X,$$
where $X \mathrel{\triangle} Y = (X \setminus Y) \cup (Y \setminus X)$ is the symmetric difference between $X$ and $Y$. 
For each $i \in N$, let $\mathcal{L}_i$ be the set of lexicographic preferences on $2^{O}$, and let $\mathcal{L} \coloneqq \prod_{i \in N} \mathcal{L}_i$ be the \defterm{lexicographic domain}.

Although it is somewhat restrictive, the lexicographic domain is a natural starting point in our analysis. Each lexicographic preference $P_i \in \mathcal{L}_i$ is uniquely determined by its marginal preference $P_i |_O$ over individual objects; that is, for all $P_i, P'_i \in \mathcal{L}_i$, $P_i |_O = P'_i |_O$ if and only if $P_i = P'_i$. Consequently, any rule defined on $\mathcal{L}$ is automatically marginal. We therefore identify each $P_i \in \mathcal{L}_i$ with its associated marginal preference $P_i |_O$ and write $P_{i}:x_{1},x_{2},\dots,x_{|O|}$ if $x_{1} \mathrel{P_i} x_{2} \mathrel{P_i} \cdots \mathrel{P_i} x_{|O|}$.

Responsiveness is a more general condition first studied by \citet{roth1985college} for many-to-one matching models. An agent has responsive preferences if, for any two bundles that differ in one object, she prefers the bundle containing the more-preferred object. Formally, a preference relation $P_i$ on $2^O$ is \defterm{responsive} if, for any bundle $X$ and any objects $y,z\in O \setminus X$, 
$$
(X\cup y) \mathrel{P_i} (X\cup z) \iff y\mathrel{P_i} z.
$$
Responsive preferences rule out complementarities because the relative ranking between any two objects is independent of the other objects they are obtained with. For each $i \in N$, let $\mathcal{R}_i$ denote the set of responsive preferences on $2^O$, and let $\mathcal{R} \coloneqq \prod_{i \in N} \mathcal{R}_i$ be the \defterm{responsive domain}. 

It is helpful to compare the lexicographic and responsive domains with the domain of monotonic preferences. A preference relation $P_i$ on $2^O$ is \defterm{monotonic} if, for any bundles $X$ and $Y$, $X \mathrel{R_i} Y$ whenever $X \supseteq Y$. For each $i \in N$, let $\mathcal{M}_i$ be the set of monotonic preferences on $2^O$, and let $\mathcal{M} \coloneqq \prod_{i \in N} \mathcal{M}_i$ be the \defterm{monotonic domain}.

Every lexicographic preference is both responsive and monotonic, while responsiveness and monotonicity are logically independent. Formally, for each $i \in N$, we have
$\mathcal{L}_i \subseteq \mathcal{R}_i \cap \mathcal{M}_i$ (with strict inclusion if $|O| \geq 3$), 
$\mathcal{M}_i \nsubseteq \mathcal{R}_i$ (whenever $|O| \geq 3)$, and $\mathcal{R}_i \nsubseteq \mathcal{M}_i$.\footnote{\label{independence of responsive and monotonic domains}For example, if $O = \{a,b\}$, then $P_i:\{a\}, \emptyset, \{a, b\}, \{b\}$ and $P_i': \emptyset, \{a\}, \{b\}, \{a,b\}$ are responsive but not monotonic. If $O = \{a, b, c\}$, then $P'_i: \{a, b, c\}, \{a, c\}, \{a, b\}, \{b, c\}, \{a\} , \{b\}, \{c\}, \emptyset$ is monotonic but not responsive.}

\subsection{Properties of allocations and rules}\label{sec:Properties}
This section introduces several desirable properties of allocations and rules. Unless stated otherwise, these properties are defined for an arbitrary domain $\mathcal{P}$ of strict preferences over bundles. Our main focus will be the lexicographic and responsive domains introduced above.

Our first requirement is that each agent ends up with the same number of objects as initially endowed. An allocation $\mu$ is \defprop{balanced} if, for each agent $i \in N$, $|\mu_i| = |\omega_i|$.

\begin{definition}
\label{def: balancedness}
    A rule $\varphi$ satisfies \defprop{balancedness (\BAL)} if, for each $P \in \mathcal{P}$, $\varphi(P)$ is balanced.
    \end{definition}

Balancedness is often treated as an inviolable constraint in practical reallocation problems. In shift reallocation, for instance, it prevents overwork or underemployment and may reflect contractual constraints on the number of shifts per worker (\citet{manjunath2021,manjunath2024}). In student and tuition exchange programs, \BAL helps to maintain reciprocity and control education costs at popular schools (\citet{andersson2021organizing},  \citet{biro2022}, \citet{dur2019two}).\footnote{\citet{dur2019two} document several cases in which such long-run imbalances led to the failure of an exchange program.} We model \BAL as a property of rules, rather than as an explicit feasibility constraint, so that we can study its interaction with other properties (see Lemmata~\ref{individual-good + IR implies WELB} and~\ref{lemma:BAL implied on responsive domain}).

\subsubsection{Efficiency}\label{sec:Efficiency}
An allocation $\overline{\mu}$ \defterm{Pareto dominates} another allocation $\mu$ at a preference profile $P$ if (i) for all $i\in N$, $\overline{\mu}_{i} \mathrel{R_i} \mu_{i}$, and (ii) for some $i\in N$, $\overline{\mu}_{i} \mathrel{P_i} \mu_{i}$. An allocation $\mu $ is \defprop{Pareto efficient} at $P$ if it is not Pareto dominated at $P$ by any allocation.

\begin{definition}
\label{def: Pareto efficiency}
    A rule $\varphi$ satisfies \defprop{Pareto efficiency (\PE)} if, for each $P \in \mathcal{P}$, $\varphi(P)$ is Pareto efficient at $P$.
\end{definition}

In principle, any allocation that violates Pareto efficiency could be destabilized by a coalition of agents carrying out a mutually beneficial exchange. In practice, however, such exchanges may be hard to realize. A coalition would typically need detailed information about the agents' preferences over bundles and a way to coordinate intricate trades of multiple objects among many agents. Even with the required preference information, finding a Pareto improvement---or certifying that none exists---is computationally demanding. In particular, under additive preferences (a subclass of responsive preferences), deciding whether a Pareto improvement exists is NP-complete (e.g., \citet{de2009complexity,aziz2019efficient}).

Individual-good efficiency relaxes \PE by ruling out a restricted class of mutually beneficial exchanges---multilateral trades where each participating agent offers a single object in return for another.
Formally, a \defterm{trading cycle} at an allocation $\mu$ is a cyclic sequence
\[
C=\left(o_1, i_{1},o_2,i_{2},\dots,i_{k-1},o_k,i_{k},o_{k+1}=o_1\right)
\]
consisting of $k \geq 1$ distinct objects and $k$ distinct agents such that, for each $\ell \in \{1, \dots, k\}$, $o_\ell \in \mu_{i_\ell}$. For each $\ell \in \{1, \dots, k\}$, we interpret $o_\ell$ as the object relinquished by agent~$i_\ell$ and $o_{\ell + 1}$ as the object received. We say that $C$ is \emph{Pareto improving} at a preference profile $P$ if
\[
\text{for each } \ell \in \{1, \dots, k\}, \quad (\mu_{i_\ell} \cup o_{\ell + 1}) \setminus o_\ell \mathrel{P_{i_\ell}} \mu_{i_\ell}.
\]
An allocation $\mu$ is \defterm{individual-good efficient} at $P$ if no trading cycle at $\mu$ is Pareto improving at $P$.\footnote{This terminology is borrowed from \citet{biro2022}. Similar properties are studied in \citet{aziz2019efficient}, \citet{caspari2020}, and \citet{coreno2022}.} 

\begin{definition}
   A rule $\varphi$ satisfies \defprop{individual-good efficiency (\IGE)} if, for each $P \in \mathcal{P}$, $\varphi(P)$ satisfies individual-good efficiency at $P$. 
\end{definition}

Although \PE and \IGE coincide on the lexicographic domain (\citet{aziz2019efficient}), on more general preference domains an allocation may satisfy \IGE even when further Pareto improvements are possible through more complex trades. This loss of efficiency is the price we pay for a notion that is tractable and achievable in practice. \IGE focuses on single-object exchanges, which are relatively easy to organize,\footnote{A similar relaxation is ``pair efficiency'' (\citet{ekici2022}), which precludes only mutually beneficial \emph{bilateral} trades.} and under responsive preferences such exchanges can be detected using only the information elicited by a marginal rule. In addition, there are polynomial-time algorithms for deciding whether a given allocation satisfies \IGE on each of the domains we consider (e.g., \citet{cechlarova2014pareto,aziz2019efficient,fujita2018}).

\subsubsection{Participation guarantees}
An allocation $\mu$ is \defprop{individually rational} at a preference profile $P$ if, for each $i \in N$, $\mu_i \mathrel{R_i} \omega_i$.

\begin{definition}
\label{def: individual rationality}
    A rule $\varphi$ satisfies \defprop{individual rationality (\IR)} if, for each $P \in \mathcal{P}$, $\varphi(P)$ is individually rational at $P$.
    \end{definition}

Individual rationality is a participation guarantee that ensures no agent is made worse off by taking part in the reallocation. Our second participation guarantee restricts which individual objects may appear in an agent's assignment. An allocation $\mu$ satisfies the \defprop{worst-endowment lower bound} at a preference profile $P$ if, for each $i\in N$ and each $o \in \mu_i$, $o\mathrel{R_i} \min_{P_{i}}\left(\omega_i \right)$; that is, no agent is assigned an object that she ranks below her least-preferred endowed object.

\begin{definition}
\label{def: worst-endowment lower bound}
A rule $\varphi$ satisfies the \defprop{worst-endowment lower bound (\WELB)} if, for each $P \in \mathcal{P}$, $\varphi(P)$
satisfies the worst-endowment lower bound at $P$.
\end{definition}

Note that \WELB is formulated purely in terms of agents' marginal preferences over individual objects, whereas \IR is defined in terms of preferences over bundles.

\IR and \WELB coincide in the single-object environment, where each agent is endowed with a single object. In general multi-object problems, however, the two properties are logically independent. Intuitively, \IR permits an agent to receive an individually unattractive object as part of a sufficiently desirable bundle, whereas \WELB rules out such objects altogether. One can interpret \WELB as giving agents veto power over any individual object owned by others: by ranking an object below every object in her endowment, an agent can ensure that she is never assigned that object.

Under additional structure, \IR becomes strictly stronger than \WELB. For example, on the responsive domain, any marginal and \IR rule automatically satisfies \BAL and \WELB.

\begin{lemma}\label{individual-good + IR implies WELB}
    On the responsive domain, if a marginal rule satisfies \IR, then it satisfies \BAL and \WELB.
\end{lemma}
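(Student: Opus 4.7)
The plan is to combine marginality with \IR to derive, for each agent $j \in N$, a constraint that forces $\mu_j \coloneqq \varphi_j(P)$ to weakly dominate $\omega_j$ under a partial order induced by the marginal preference $P_j|_O$; both \BAL and \WELB then follow. Fix any $P \in \mathcal{R}$, and consider any $P'_j \in \mathcal{R}_j$ with $P'_j|_O = P_j|_O$. Marginality gives $\varphi_j(P'_j, P_{-j}) = \mu_j$, and applying \IR at $(P'_j, P_{-j})$ yields $\mu_j \mathrel{R'_j} \omega_j$. Letting $P'_j$ range over all responsive extensions of the marginal $P_j|_O$, we obtain $\mu_j \mathrel{R'_j} \omega_j$ simultaneously for every such $P'_j$.

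I would next characterize the set $\mathcal{R}_j(P_j|_O)$ of these extensions as the linear extensions of a strict partial order $D_j$ on $2^O$, where $X \mathrel{D_j} Y$ iff $X$ can be obtained from $Y$ by a finite sequence of ``improving single-object swaps,'' each replacing some $z \in Y$ by some $y \notin Y$ with $y \mathrel{P_j|_O} z$. The responsiveness axiom is exactly the single-swap generator of $D_j$: every responsive preference extends $D_j$ by transitivity, and every linear extension of $D_j$ automatically satisfies the responsiveness condition on pairs $(X \cup y, X \cup z)$. By Szpilrajn's order-extension principle, ``$\mu_j \mathrel{R'_j} \omega_j$ in every linear extension of $D_j$'' is equivalent to ``$\mu_j = \omega_j$ or $\mu_j \mathrel{D_j} \omega_j$.''

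Both alternatives force $|\mu_j| = |\omega_j|$, since $D_j$ only relates same-sized bundles; summing over $j \in N$ gives \BAL. Moreover, if $\mu_j \mathrel{D_j} \omega_j$, then along any chain of improving single-swaps transforming $\omega_j$ into $\mu_j$ the $P_j|_O$-worst element can only weakly improve at each step, so $\min_{P_j}(\mu_j) \mathrel{R_j|_O} \min_{P_j}(\omega_j)$, which is precisely \WELB. The main obstacle is verifying the characterization of $\mathcal{R}_j(P_j|_O)$ as the linear extensions of $D_j$: the forward direction is straightforward transitive chaining of single swaps, while the reverse direction requires checking that responsiveness imposes no constraint beyond the single-swap generator (since responsiveness is defined purely on pairs $(X \cup y, X \cup z)$). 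Once this characterization is in hand, \BAL and \WELB follow directly from the size-preserving and worst-element-monotone structure of $D_j$.
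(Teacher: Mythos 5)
Your argument is correct, but it takes a genuinely different route from the paper's. The paper proves the contrapositive twice by exhibiting, for each violation, a single concrete responsive preference with the given marginal: for \BAL it orders bundles first by cardinality (smaller preferred) so that every subset of $\omega_i$ beats the oversized $\mu_i$, and for \WELB it uses the freedom of responsiveness (given $|\mu_i|=|\omega_i|$ and a strictly worse minimum) to rank $\omega_i$ above $\mu_i$; in each case marginality transfers the assignment and \IR is contradicted. You instead characterize \emph{all} responsive extensions of a fixed marginal preference as the linear extensions of the swap-generated partial order $D_j$ and invoke Szpilrajn to conclude that \IR under marginality forces $\mu_j = \omega_j$ or $\mu_j \mathrel{D_j} \omega_j$; \BAL and \WELB then fall out of the cardinality-preserving and worst-element-monotone structure of $D_j$. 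All the steps check out: the equivalence between responsiveness (with prescribed marginal) and extending $D_j$ holds because totality turns the one-directional extension requirement into the biconditional in the definition of responsiveness, $D_j$ is acyclic since each swap strictly decreases the rank-sum, and Szpilrajn is elementary here as $2^O$ is finite. Your route is heavier machinery for this lemma, but it buys more: the conclusion $\mu_j \mathrel{D_j} \omega_j$ (or $\mu_j=\omega_j$) is exactly the ``strong endowment lower bound'' of \citet{altuntacs2023} discussed in Section~\ref{related literature} (a rank-improving bijection from $\omega_j$ to $\mu_j$), which strictly implies \BAL and \WELB; the paper's ad hoc constructions do not yield this. The only cosmetic quibble is the phrase ``summing over $j$'' --- \BAL is a per-agent condition, so no summation is involved.
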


A similar implication holds on any domain $\mathcal{P}^*$ that treats bundles that would violate an agent's \WELB condition as ``unacceptable'' (see, e.g., \citet{andersson2021organizing}, \citet{biro2022}).\footnote{\label{footnote other domain}Formally, consider a domain $\mathcal{P}^*$ such that, for each $i \in N$ and each $P_i \in \mathcal{P}^*_i$, any bundle that intersects $\{o \in O \mid \min_{P_i}(\omega_i) \mathrel{P_i} o\}$ is worse than any bundle that does not. Clearly, any \IR rule on $\mathcal{P}^*$ also satisfies \WELB. Furthermore, if preferences $P_i$ are assumed to be lexicographic (or responsive) when restricted to ``acceptable'' bundles (subsets of $\{o \in O \mid o \mathrel{R_i} \min_{P_i}(\omega_i)\}$), then our characterization results go through unchanged on $\mathcal{P}^*$.} We do not impose this additional restriction; instead we treat \IR and \WELB as independent properties in order to separate their respective roles in the results.

\subsubsection{Incentive properties}\label{incentive properties}

Given a preference profile $P = (P_i)_{i \in N}$ and a preference relation $P'_{i}$ for agent~$i$, we denote by $\left(P'_{i},P_{-i}\right)$ the preference profile in which agent~$i$ reports $P'_i$ and every other agent~$j \in N \setminus \{i\}$ reports $P_{j}$. Given a rule $\varphi$, we say that agent $i$ can \defterm{manipulate} $\varphi$ at $P$ by reporting $P'_i$ if $\varphi_i(P'_i, P_{-i}) \mathrel{P_i} \varphi_i(P)$. A rule is \defprop{strategy-proof} if no agent can manipulate it by reporting any preference relation.

\begin{definition}\label{def: strategy-proofness}
    A rule $\varphi$ satisfies \defprop{strategy-proofness (\SP)} if, for each $P \in \mathcal{P}$, each $i\in N$, and each $P'_i \in \mathcal{P}_i$, $\varphi_i(P) \mathrel{R_i} \varphi_i(P'_i,P_{-i})$.
\end{definition}

Because \PE, \IR, and \SP are incompatible once some agent owns more than one object (\citet{sonmez1999}, \citet{todo2014}), we will not insist on full \SP. Instead, we relax \SP by restricting the set of manipulation strategies under consideration. In particular, we focus on simple manipulations---subset-drop, drop, and truncation strategies---that act directly on agents' marginal preferences over individual objects and are especially salient when studying marginal rules (see, e.g., \citet{biro2022,biro2022serial}, \citet{altuntacs2023}).

\paragraph{Subset-drop, drop, and truncation strategies.}
Fix an agent~$i$ and a preference relation $P_i \in \mathcal{P}_i$. A subset-drop strategy for $P_i$ is a report whose marginal preference is obtained from $P_i|_O$ by ``dropping'' some subset $X \subseteq O \setminus \omega_i$ to the bottom, preserving the relative order within $X$ and within $O \setminus X$. Any preference $P'_i \in \mathcal{P}_i$ with this marginal preference is called a subset-drop strategy for $P_i$. Formally, $P'_i \in \mathcal{P}_i$ is a \defterm{subset-drop strategy} for $P_i$ if there exists $X \subseteq O \setminus \omega_i$ such that 
\begin{enumerate}
    \item[(i)] $P_i |_X = P'_i |_X$, 
    \item[(ii)] $P_i |_{O \setminus X} = P'_i |_{O \setminus X}$, and 
    \item[(iii)] for all $x \in X$ and all $y \in O \setminus X$, $y \mathrel{P'_i} x$. 
\end{enumerate}    
In this case, we say that $P'_i$ (respectively $P'_i |_O$) is obtained from $P_i$ (respectively $P_i |_O$) by \defterm{dropping} $X$. Let $\mathcal{S}_i(P_i)$ denote the set of all subset-drop strategies for $P_i$.

Within the class of subset-drop strategies, we distinguish two important subclasses. A subset-drop strategy $P'_i$ is called a \defterm{drop strategy} for $P_i$ if it is obtained by dropping a singleton subset $\{x\} \subseteq O \setminus \omega_i$, i.e., by moving a single object $x \in O \setminus \omega_i$ to the bottom of the marginal preferences. Let $\mathcal{D}_i(P_i)$ denote the set of all drop strategies for $P_i$.

A subset-drop strategy $P'_i$ is called a \defterm{truncation strategy} for $P_i$ if it is obtained by dropping a \defterm{tail subset} of the form $\{o \in O \setminus \omega_i \mid x \mathrel{R_i} o\}$ for some $x \in O$, i.e., by moving all objects that are weakly worse than some object~$x$ to the bottom of the marginal preferences. Let $\mathcal{T}_i(P_i)$ denote the set of all truncation strategies for $P_i$.

It is convenient to single out those truncation strategies with a specified cutoff object. A truncation strategy $P'_i$ is called a \defterm{truncation of $P_i$ at $y$} if it is obtained by dropping the strict tail subset $\{o \in O \setminus \omega_i \mid y \mathrel{P_i} o\}$.\footnote{\label{alternative truncation}Equivalently, $P'_i$ is a truncation of $P_i$ at $y$ if (i) $P_i |_{\omega_i} = P'_i |_{\omega_i}$, (ii) $P_i |_{O \setminus \omega_i} = P'_i |_{O \setminus \omega_i}$, and (iii) for each $x \in O \setminus \omega_i$ with $y \mathrel{P_i} x$, $\min_{P'_i}(\omega_i) \mathrel{P'_i} x$.} In this case, there is no $z \in O \setminus \omega_i$ such that $y \mathrel{P'_i} z \mathrel{P'_i} \min_{P_i}(\omega_i)$.

In the single-object environment (Section~\ref{sec:SS}), where $\omega_i$ is a singleton and thus $\omega_i = \min_{P_i}(\omega_i)$, a subset-drop strategy can be interpreted as ``declaring some of the other agents' objects unacceptable'', and a truncation strategy simply ``shortens the list of acceptable objects'' in the standard sense from two-sided matching (see, e.g., \citet{mongell1991sorority,roth1999truncation,ehlers2008}). This intuition extends to the multi-object environment, though objects ranked below $\min_{P_i}(\omega_i)$ need not be unacceptable in the usual sense.\footnote{Our results continue to hold on related domains in which bundles containing objects ranked below $\min_{P_i}(\omega_i)$ are explicitly unacceptable; see footnote~\ref{footnote other domain} and the surrounding discussion.} Nevertheless, under \WELB, a subset-drop strategy effectively ``vetoes'' some of the other agents' objects: no rule satisfying \WELB ever assigns to agent~$i$ any object ranked below $\min_{P_i}(\omega_i)$.

\begin{definition}
\label{def: truncation-proofness}
    A rule $\varphi$ satisfies 
    \begin{itemize}
        \item \defprop{truncation-proofness (\TP)} if, for each $P\in \mathcal{P}$, each $i\in N$, and each $P'_i \in \mathcal{T}_i(P_{i})$, we have $\varphi_i(P)\mathrel{R_i}\varphi_i(P'_i,P_{-i})$.
        \item \defprop{drop strategy-proofness (\DSP)} if, for each $P \in \mathcal{P}$, each $i\in N$, and each $P'_i \in \mathcal{D}_i(P_{i})$, we have $\varphi_i(P)\mathrel{R_i}\varphi_i(P'_i,P_{-i})$.
        \item \defprop{subset-drop strategy-proofness (\SDSP)} if, for each $P \in \mathcal{P}$, each $i \in N$, and each $P'_i \in \mathcal{S}_i(P_i)$, we have $\varphi_i(P) \mathrel{R_i} \varphi_i(P'_i, P_{-i})$.
    \end{itemize}
\end{definition}

\SDSP implies both \TP and \DSP, as it defends against the largest set of manipulation strategies. \TP and \DSP are logically independent in general.

\begin{example}
\label{example: heuristics}
Let $O = \{a, b, c, d, e, x, y\}$, $\omega_{i}=\{x, y\}$, and $P_i \in \mathcal{P}_i$ be such that $P_i |_O:a,b,\underline{x},c,d,\underline{y},e$ (agent~$i$'s endowment is underlined for emphasis). Then:
    \begin{itemize}
    \item any $P^1_i \in \mathcal{P}_i$ with $P^1_i |_O:b,\underline{x},c,d,\underline{y},e,a$ is a drop strategy for $P_i$, obtained by dropping object $a$. Note that $P^1_i$ is not a truncation strategy for $P_i$ (e.g., the relations disagree about the best object in $O \setminus \omega_i$);
    \item any $P^2_i, P^3_i \in \mathcal{P}_i$ with $P^2_i |_O:a,b,\underline{x},c,\underline{y},d,e$ and $P^3_i |_O:a,\underline{x},\underline{y},b,c,d,e,$ are truncation strategies for $P_i$, obtained by dropping the tail subsets $\{d,e\}$ and $\{b,c,d,e\}$, respectively. Neither $P^2_i$ nor $P^3_i$ is a drop strategy for $P_i$ (however, each can be obtained from $P_i$ via a sequence of drop strategies). \hfill $\diamond$
    \end{itemize}
\end{example}

\begin{remark}\label{drop strategies can change order}
    Fix $i \in N$ and $P_i \in \mathcal{P}_i$. For any object $x \in O \setminus \omega_i$, there exists a subset-drop strategy for $P_i$ that ranks $x$ at the top of agent~$i$'s marginal preferences over $O \setminus \omega_i$. The same reported preferences can be obtained by dropping, one at a time, the objects in $O \setminus \omega_i$ that agent~$i$ ranks above $x$. 
    This observation simplifies proofs of characterizations of TTC based on \DSP and \SDSP (and on \SP for the single-object environment). By contrast, truncation strategies preserve the ranking of objects in $O \setminus \omega_i$ (see footnote~\ref{alternative truncation}), so they do not permit this kind of ``push-up'' manipulation.
\end{remark}

\subsection{Top Trading Cycles}\label{Subsection: Definition of generalized TTC}

A trading cycle at $\omega$ (or simply a \defterm{trading cycle}\footnote{That is, a \defterm{trading cycle} refers to a trading cycle at the particular allocation $\mu = \omega$ (as defined in Section~\ref{sec:Efficiency}).}) is a cyclic sequence
\[
C=\left(o_1, i_{1},o_2,i_{2},\dots,i_{k-1},o_k,i_{k},o_{k+1}=o_1\right)
\]
consisting of $k \geq 1$ distinct objects and $k$ distinct agents such that, for each $\ell \in \{1, \dots, k\}$, $o_\ell \in \omega_{i_\ell}$. Each object $o_\ell$ on $C$ precedes (or ``points to'') its owner $i_\ell$, and each agent $i_\ell$ on $C$ precedes (or ``points to'') the object $o_{\ell+1}$. The sets of agents and objects on $C$ are denoted
by $N\left(C\right)=\left\{i_{1},i_{2},\dots,i_{k}\right\}$ and $O(C)=\left\{o_1,o_2,\dots,o_k\right\}$, respectively. An allocation $\mu$ is said to \defterm{execute} the trading cycle $C$ if it assigns to each agent in $N\left(C\right)$ the object she points to on $C$; that is, for each $i_{\ell}\in N\left(C\right)$, $o_{\ell+1} \in \mu_{i_{\ell}}$.

We study a multi-object extension of Gale's Top Trading Cycles algorithm (\citet{shapley1974}), which determines an allocation by executing a sequence of carefully chosen trading cycles at $\omega$. At each step of this procedure, every agent points to her most-preferred unassigned object, and every unassigned object points to its owner. The resulting directed graph contains at least one trading cycle, and each agent involved in a trading cycle is assigned the object to which she points. All objects involved in a trading cycle are then removed. If unassigned objects remain, then the procedure continues to the next step; otherwise, it terminates with the resulting allocation.

We now formalize the \defterm{(generalized) Top Trading Cycles rule (TTC)}, which we denote by $\varphi^{\text{TTC}}$. Given a preference profile $P \in \mathcal{P}$, the allocation $\varphi^{\text{TTC}}(P)$ is determined by running the following \defterm{(generalized) TTC algorithm} at $P$. We denote this specific instance as $\text{TTC}(P)$.

\noindent \rule{1\columnwidth}{1pt}
\paragraph{Algorithm:}
$\text{TTC}\left(P\right)$
\begin{description}
\item [\textnormal{\emph{Initialization}:}] 
Set $\mu^{0}\coloneqq\left(\emptyset\right)_{i\in N}$ and $O^{1}\coloneqq O$.
\item [\textnormal{\emph{Step~$t \geq 1$}:}] ~
    Each agent~$i$ points to $\max_{P_i}(O^t)$, and each object in $O^t$ points to its owner. Let $\mathcal{C}_t(P)$ be the resulting set of trading cycles. For each $C \in \mathcal{C}_t(P)$, assign to each $i \in N(C)$ the object she points to in $C$; this yields the partial allocation $\mu^t$ with 
    $$\mu^t_i =
    \begin{cases}
        \mu_i^{t-1} \cup \{\max_{P_i}(O^t)\}, & \text{if } i \in \bigcup_{C \in \mathcal{C}_t(P)} N(C), \\
        \mu_i^{t-1}, & \text{otherwise}.
    \end{cases}
    $$
    Remove all objects involved in trading cycles to form the set $O^{t+1} = O^t \setminus \bigcup_{C \in \mathcal{C}_t(P)}O(C)$. Proceed to step~$t+1$ if $O^{t+1} \neq \emptyset$; proceed to Termination otherwise.
\item [\textnormal{\emph{Termination}:}] Because at least one object is removed at each step,
the algorithm terminates at some step $T$. Return the allocation
$\varphi^{\text{TTC}}\left(P\right)\coloneqq\mu^{T}$.
\end{description}
\rule{1\columnwidth}{1pt}

Example~\ref{TTC example} illustrates the TTC algorithm with a simple three-agent problem.

\begin{example}\label{TTC example}
    Suppose $N = \{1,2,3\}$, $O = \{a,b,c,d\}$, and $\omega = (\{a,b\}, \{c\}, \{d\})$. Consider a preference profile $P$ with marginal preferences $P_1 |_O:c,a,d,b$, $P_2 |_O:a,b,c,d$, and $P_3 |_O:a,c,b,d$. The algorithm $\text{TTC}(P)$ operates as follows.
    \begin{description}
        \item [\textnormal{\emph{Step~$1$}:}] ~ Each agent points to her most-preferred object in $O$: agent~$1$ points to $c$, while agents~$2$ and~$3$ point to $a$. Each object in $O$ points to its owner. There is a trading cycle $C_1(P) = (c,2,a,1,c)$, which is executed to yield partial allocation $\mu^1 = (\{c\},\{a\},\emptyset)$ and the remaining objects $O^2 = \{b,d\}$.
        \item [\textnormal{\emph{Step~$2$}:}] ~ Each agent points to her most-preferred object in $O^2 = \{b,d\}$: agent~$1$ points to $d$, while agents~$2$ and~$3$ point to $b$. Each object in $O^2$ points to its owner. There is a trading cycle $C_2(P) = (d,3,b,1,d)$, which is executed to yield the final allocation $\mu^2 = (\{c,d\}, \{a\},\{b\})$ and the remaining objects $O^3 = \emptyset$.
    \end{description}
    The algorithm therefore terminates with the allocation $\varphi^{\text{TTC}}(P) = (\{c,d\}, \{a\}, \{b\})$. \hfill $\diamond$
\end{example}

The following fact is immediate from the definition of $\varphi^{\mathrm{TTC}}$.

\begin{fact}\label{TTCproperties}
    On any domain of strict preferences, $\varphi^{\text{TTC}}$ is marginal and satisfies \BAL and \WELB.
\end{fact}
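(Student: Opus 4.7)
The plan is to verify each of the three claims — marginality, \BAL, and \WELB — by direct inspection of the $\text{TTC}(P)$ algorithm; no auxiliary lemmas are needed.

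For marginality, I would observe that the only preference-dependent operation in $\text{TTC}(P)$ is the instruction ``each agent~$i$ points to $\max_{P_i}(O^t)$,'' which depends on $P_i$ solely through the marginal $P_i|_O$. Hence if $P|_O = P'|_O$, the two executions of $\text{TTC}$ generate identical pointer graphs at every step, identical cycle sets $\mathcal{C}_t$, and therefore identical partial allocations, giving $\varphi^{\text{TTC}}(P) = \varphi^{\text{TTC}}(P')$.

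For \BAL, I would prove by induction on $t$ the invariant $|\mu^t_i| = |\omega_i \setminus O^{t+1}|$ for each $i \in N$; that is, the number of objects received by agent~$i$ through step~$t$ equals the number of her endowed objects already removed. The inductive step uses two observations: first, each cycle $C \in \mathcal{C}_t(P)$ assigns exactly one new object to each $i_\ell \in N(C)$ and simultaneously removes exactly one endowed object from $\omega_{i_\ell} \cap O^t$, namely the predecessor $o_\ell \in O(C) \cap \omega_{i_\ell}$; second, distinct cycles in $\mathcal{C}_t(P)$ are vertex-disjoint, because each node in the pointer graph has a unique outgoing edge. Since $O^{T+1} = \emptyset$ at termination, the invariant yields $|\varphi^{\text{TTC}}_i(P)| = |\omega_i|$.

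For \WELB, fix $o \in \varphi^{\text{TTC}}_i(P)$ and let $t$ be the step at which $o$ is assigned to~$i$, so that $o = \max_{P_i}(O^t)$ and $i$ lies on some cycle $C \in \mathcal{C}_t(P)$; the predecessor of~$i$ on~$C$ is an object $o' \in \omega_i \cap O^t$. Consequently $o \mathrel{R_i|_O} o' \mathrel{R_i|_O} \min_{P_i}(\omega_i)$, which is precisely the \WELB inequality for~$i$ at~$o$. The argument is essentially mechanical and presents no substantive obstacle; the only mild care needed is the \BAL bookkeeping, which ties the one-in, one-out per-cycle balance to the termination condition $O^{T+1} = \emptyset$.
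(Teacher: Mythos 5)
Your proposal is correct and matches the paper's intent: the paper states this fact without proof, calling it ``immediate from the definition of $\varphi^{\mathrm{TTC}}$,'' and your direct inspection of the algorithm---marginality from the fact that pointing depends only on $P_i|_O$, balancedness from the one-in/one-out bookkeeping on vertex-disjoint cycles, and \WELB from $o=\max_{P_i}(O^t) \mathrel{R_i} o' \mathrel{R_i} \min_{P_i}(\omega_i)$ with $o'\in\omega_i\cap O^t$ the predecessor on the executed cycle---is exactly the intended argument, just written out in full.
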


Beyond these basic properties, the behavior of $\varphi^{\text{TTC}}$ depends on the
preference domain.

\section{Lexicographic preferences}
\label{results for lexicographic preferences}

The lexicographic domain $\mathcal{L}$ is a natural starting point for our analysis because of its tractability. As discussed in Sections~\ref{preference domains}--\ref{sec:Properties}, this domain has two convenient features. First, each $P_i \in \mathcal{L}_i$ is uniquely determined by its marginal preference $P_i|_O$, so any rule defined on $\mathcal{L}$ is automatically marginal. Second, on $\mathcal{L}$, \PE coincides with \IGE.

On the lexicographic domain, $\varphi^{\mathrm{TTC}}$ behaves especially well. \citet{fujita2018} show that it always selects an allocation in the core of the associated exchange economy, and hence it satisfies both \PE and \IR. Although $\varphi^{\mathrm{TTC}}$ necessarily fails \SP,\footnote{\PE, \IR, and \SP are incompatible even for lexicographic preferences (\citet{todo2014}).} \citet{altuntacs2023} prove that it satisfies \SDSP; thus it also satisfies \TP and \DSP.\footnote{\citet{altuntacs2023} refer to \SDSP as ``subset total drop strategy-proofness''. Their notion of drop strategy-proofness is stronger than ours, as it also precludes manipulations obtained by dropping objects an agent already owns. Throughout, we use the weaker \DSP as it suffices for our characterization results.} These observations yield the following fact.

\begin{fact}
    On the lexicographic domain, $\varphi^{\text{TTC}}$ satisfies \PE, \IR, and \SDSP.\label{necessity}
\end{fact}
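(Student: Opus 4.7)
The plan is to establish \PE, \IR, and \SDSP separately. The first two admit short self-contained arguments exploiting the round-by-round structure of $\text{TTC}(P)$; \SDSP is the main obstacle and I would invoke the result of \citet{altuntacs2023}.

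For \PE, I would use the known equivalence between \PE and \IGE on the lexicographic domain (\citet{aziz2019efficient}) and instead establish \IGE. Suppose, towards a contradiction, that $C = (o_1, i_1, \dots, o_k, i_k, o_{k+1} = o_1)$ is a Pareto-improving trading cycle at $\mu = \varphi^{\text{TTC}}(P)$. Under lexicographic preferences this condition simplifies to $o_{\ell+1} \mathrel{P_{i_\ell}} o_\ell$ for every $\ell$. For each $\ell$, let $s_\ell$ denote the step of $\text{TTC}(P)$ at which $o_\ell$ is removed, and pick $\ell^*$ minimizing $s_\ell$. At step $s_{\ell^*}$, agent $i_{\ell^*}$ is in a cycle that removes $o_{\ell^*}$, so she receives $o_{\ell^*} = \max_{P_{i_{\ell^*}}}(O^{s_{\ell^*}})$; since $o_{\ell^*+1}$ is removed at a weakly later step, it remains in $O^{s_{\ell^*}}$, giving $o_{\ell^*} \mathrel{R_{i_{\ell^*}}} o_{\ell^*+1}$ and contradicting $o_{\ell^*+1} \mathrel{P_{i_{\ell^*}}} o_{\ell^*}$.

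For \IR, I would prove the following compensation claim: for each $i$ and each $o \in \omega_i \setminus \mu_i$, there exists $o' \in \mu_i \setminus \omega_i$ with $o' \mathrel{P_i} o$. Let $t$ be the step at which $o$ is removed; then $o$ appears in some trading cycle at step $t$ as an object pointing to its owner $i$, and the cycle assigns to $i$ the object $o' = \max_{P_i}(O^t)$ she is pointing to. The subcase $o' = o$ corresponds to a self-loop that leaves $o \in \mu_i$, contradicting $o \notin \mu_i$; the subcase $o' \in \omega_i \setminus \{o\}$ would force the cycle to contain $i$ twice (once as owner of $o$, once as owner of $o'$), contradicting agent distinctness in a cycle of length $\geq 2$. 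Hence $o' \in \mu_i \setminus \omega_i$, and $o' \mathrel{P_i} o$ because $o \in O^t$ and $o' = \max_{P_i}(O^t) \neq o$. Comparing $\mu_i$ and $\omega_i$ via $\mu_i \mathbin{\triangle} \omega_i$, the top element of the symmetric difference cannot lie in $\omega_i \setminus \mu_i$ (else the compensating $o'$ would rank higher in the same set), so it lies in $\mu_i$, yielding $\mu_i \mathrel{R_i} \omega_i$.

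\SDSP is the main technical obstacle, and I would invoke \citet{altuntacs2023}. The intuition is that moving any subset $X \subseteq O \setminus \omega_i$ to the bottom of $i$'s marginal preferences can only postpone (never accelerate) the steps at which $i$ participates in a trading cycle, so every object she could obtain under the manipulation is weakly $P_i$-dominated by an object she would have obtained under truthful reporting. The formal argument couples the two runs $\text{TTC}(P)$ and $\text{TTC}(P'_i, P_{-i})$, tracks which cycles persist and which new cycles form after $i$'s report changes, and verifies that the multiset of objects agent $i$ receives under $P'_i$ is dominated in the lexicographic order over bundles by those she receives under $P_i$. Since \TP and \DSP protect against strictly smaller classes of manipulations than \SDSP, both follow as immediate corollaries.
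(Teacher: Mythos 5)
Your proposal is correct, but it takes a partially different route from the paper. The paper treats this statement as a pure citation: \PE and \IR follow because \citet{fujita2018} show that $\varphi^{\mathrm{TTC}}$ is core selecting on the (conditionally) lexicographic domain, and \SDSP is taken directly from \citet{altuntacs2023}; no self-contained argument appears anywhere in the text or appendix. You instead prove \PE and \IR from first principles. Your \PE argument (reduce to \IGE via the known equivalence on $\mathcal{L}$, then derive a contradiction by looking at the earliest-removed object on a putative Pareto-improving cycle) and your \IR argument (the compensation claim that every relinquished endowed object is traded for a strictly better non-endowed object in the same TTC cycle, combined with the symmetric-difference criterion for lexicographic comparison) are both sound; the only points worth double-checking are the ones you implicitly handle correctly, namely that a Pareto-improving cycle must have $k \geq 2$ so that $o_{\ell+1} \notin \mu_{i_\ell}$ and the symmetric difference is exactly $\{o_\ell, o_{\ell+1}\}$, and that the case $o' \in \omega_i \setminus \{o\}$ is excluded by agent-distinctness within a cycle. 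What each approach buys: the paper's citation of core selection is shorter and yields a strictly stronger conclusion (no blocking coalition, not just no Pareto domination and no individual blocking), whereas your direct arguments are elementary, self-contained, and expose the round-by-round mechanics that the paper's later uniqueness proofs rely on. For \SDSP you defer to \citet{altuntacs2023} exactly as the paper does, which is appropriate since that is the genuinely technical component.
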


Facts~\ref{TTCproperties} and~\ref{necessity} together imply that  $\varphi^{\mathrm{TTC}}$ satisfies \BAL, \IGE, \WELB, and \TP. Theorem~\ref{Multi-unit characterization with TP} shows that these four properties characterize $\varphi^{\mathrm{TTC}}$ on $\mathcal{L}$.

\begin{theorem}
\label{Multi-unit characterization with TP}
On the lexicographic domain, a rule satisfies \BAL, \IGE, \WELB, and \TP if and only if it equals $\varphi^{\text{TTC}}$. 
\end{theorem}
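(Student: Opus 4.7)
Necessity is immediate: by Facts~\ref{TTCproperties} and~\ref{necessity}, $\varphi^{\text{TTC}}$ satisfies \BAL and \WELB on any domain, and on $\mathcal{L}$ it also satisfies \PE (which coincides with \IGE on $\mathcal{L}$) and \SDSP (which in particular implies \TP). So the substantive work is sufficiency. Fix a rule $\varphi$ satisfying the four axioms and a profile $P \in \mathcal{L}$; the plan is to prove $\varphi(P) = \varphi^{\text{TTC}}(P)$ by induction on the TTC step $t$, with inductive hypothesis that every agent appearing in a cycle $C \in \mathcal{C}_s(P)$ for some $s \leq t$ has her step-$s$ object in $\varphi(P)$. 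Combined with \BAL and the disjointness of $\varphi(P)$, this pins down the entire allocation.

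For the base case $t=1$ I plan a two-stage truncation argument. First, construct an auxiliary profile $\tilde{P}$ in which each agent in $\bigcup_{C \in \mathcal{C}_1(P)} N(C)$ truncates her preference at her overall top $\max_{P_{i}}(O)$. Under $\tilde{P}$, \WELB and \BAL together restrict each such $i_\ell$'s assignment to a subset of $\{o_{\ell+1}\} \cup \omega_{i_\ell}$ of size $|\omega_{i_\ell}|$, so $\varphi_{i_\ell}(\tilde{P})$ either equals $\omega_{i_\ell}$ or swaps one endowed object for $o_{\ell+1}$. Were every cycle member to keep her endowment, the cycle $C$ at the allocation $\omega$ would be Pareto-improving under the truncated lex preferences (each $i_\ell$ strictly prefers $o_{\ell+1}$ to $o_\ell$), contradicting \IGE. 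A disjointness count on $\bigcup_\ell \omega_{i_\ell}$---using that endowments are pairwise disjoint---then forces each swap to be exactly $o_\ell \mapsto o_{\ell+1}$, so $\varphi(\tilde{P})$ executes $C$. Second, revert the truncations one agent at a time; for each reverting agent, her bundle at the previous (more-truncated) profile already contained her overall top, and lexicographic \TP therefore forces her bundle at the current (less-truncated) profile to contain that top as well.

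For the inductive step from $t$ to $t+1$, I plan to re-run the base-case template on the step-$(t+1)$ cycles, truncating each relevant agent at her step-$(t+1)$ top $\max_{P_{i}}(O^{t+1})$ rather than at her overall top. The key feature is that such a truncation leaves higher-ranked objects (including the earlier-step tops already guaranteed by the inductive hypothesis) and all endowed objects in place, so the \WELB/\BAL/\IGE pinning adapts to the residual set $O^{t+1}$ without disturbing the earlier steps. The main obstacle I anticipate is the reversion step in both the base case and the induction: \TP controls only the agent whose own preference changes, so maintaining the invariant that \emph{every} cycle agent---truncated or already reverted---receives her top at each intermediate profile requires simultaneously running the \WELB/\BAL/\IGE pinning on the still-truncated agents and the lex/\TP argument on the reverted ones. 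This simultaneous bookkeeping, rather than any single algebraic step, is the delicate technical core of the proof. The analogous proof under \DSP (Theorem~\ref{Multi-unit characterization with drop SP}) should be cleaner, because drop strategies can push individual objects to the bottom one at a time and, via Remark~\ref{drop strategies can change order}, also permit ``push-up'' rearrangements that truncations cannot.
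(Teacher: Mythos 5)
Your necessity argument and your local mechanics are right: truncating an agent at a target object, using \WELB and \BAL to confine her assignment to her endowment plus that object, propagating the resulting swaps around the cycle by disjointness of assignments, and invoking \IGE to rule out the no-trade outcome are exactly the ingredients the paper's proof uses. But the global architecture has a genuine gap, and it is the one you flag yourself without closing: the reversion step. \TP constrains only the agent whose own report changes, and no non-bossiness is assumed, so when agent $a$ reverts from $\tilde{P}_a$ to $P_a$ the assignments of the \emph{previously reverted} cycle agents may change arbitrarily. Your invariant (``every cycle agent receives her top at each intermediate profile'') therefore cannot be propagated: for an agent $c$ who reverted earlier, the only profile that \TP lets you compare $Q^{j+1}$ against is $(\tilde{P}_c, Q^{j+1}_{-c})$, which has yet another truncation pattern you have not analyzed, so the argument becomes circular. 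The problem recurs, worse, in your inductive step: truncating at $\max_{P_i}(O^{t+1})$ leaves \emph{all} objects ranked above $\max_{P_i}(O^{t+1})$ in place, so \WELB no longer pins the assignment down to a single swap unless you already know the earlier-step cycles are executed \emph{at the truncated profile} --- but your inductive hypothesis only concerns the true profile $P$.

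The paper closes exactly this hole with a minimal-counterexample argument rather than a forward induction. It chooses a conflict profile $P$ minimizing first the ``similarity'' $\rho(P)$ (the earliest TTC step whose cycles $\varphi(P)$ fails to execute) and then the ``size'' $s(P) = \sum_{i}\left|\{o \in O \mid o \mathrel{R_i} \min_{P_i}(\omega_i)\}\right|$. In the key claim, truncating the single deviating agent $i_k$ at $o_1$ strictly decreases $s$ while leaving the TTC runs identical through step $t$, so minimality forces $\rho(P') > t$, i.e., $\varphi(P')$ executes \emph{all} cycles through step $t$ --- precisely the global control that your one-at-a-time reversion cannot deliver. Lexicographic preferences plus \TP then yield a contradiction, establishing that each deviating cycle agent keeps her remaining endowment within $O^t$, and a single application of \IGE to the resulting Pareto-improving cycle finishes the proof. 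If you want to retain your two-stage structure, you would need to import some such minimality device (or an extra axiom like non-bossiness) to control the reverted agents; as written, the proposal does not constitute a proof.
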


The proof proceeds by minimal counterexample, using a novel minimality criterion that combines a ``size'' function (\citet{sethuraman2016}) and a ``similarity'' function (\citet{ekici2022}). Suppose $\varphi$ satisfies the properties but differs from $\varphi^{\text{TTC}}$. Call $P \in \mathcal{L}$ a \defterm{conflict profile} if $\varphi(P) \neq \varphi^{\text{TTC}}(P)$. For each conflict profile $P$, define its similarity $\rho(P)$ as the earliest step $t$ of $\text{TTC}(P)$ such that $\varphi(P)$ does not execute all trading cycles in $\mathcal{C}_t(P)$. Among all conflict profiles that minimize $\rho$, choose a profile that further minimizes the size function $s(P) \coloneqq \sum_{i \in N}|\{o \in O \mid o \mathrel{R_i} \min_{P_i}(\omega_i)\}|$. 
At the chosen conflict profile $P$, the deviation from $\varphi^{\mathrm{TTC}}(P)$ forces a Pareto-improving trading cycle at $\varphi(P)$, contradicting \IGE (see Appendix~\ref{appendix:proof}).

Theorem~\ref{Multi-unit characterization with TP} characterizes $\varphi^{\mathrm{TTC}}$ using \TP as its incentive requirement. Our next theorem shows that \TP can be replaced with the alternative requirement \DSP. The key observation is that \DSP becomes strictly stronger once \WELB is imposed: on the lexicographic domain, \DSP together with \WELB implies \SDSP, and hence \TP.\footnote{\TP is not implied by \DSP alone. Conversely, \DSP is not implied by any proper subset of the properties in Theorem~\ref{Multi-unit characterization with TP}.}

\begin{lemma}
    \label{Proposition: drop SP + WELB implies subset-drop SP}
    On the lexicographic domain, if a rule satisfies \DSP and \WELB, then it satisfies \SDSP.
\end{lemma}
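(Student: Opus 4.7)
I plan to decompose every subset-drop strategy into a finite sequence of single-object drops, apply \DSP link by link, and then use \WELB together with the lexicographic structure of preferences to convert each intermediate inequality back to the agent's true preferences.

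The construction goes as follows. Fix $i \in N$, $P_i \in \mathcal{L}_i$, $P_{-i} \in \mathcal{L}_{-i}$, and a subset-drop strategy $P'_i \in \mathcal{S}_i(P_i)$ obtained from $P_i$ by dropping some $X \subseteq O \setminus \omega_i$. Enumerate $X = \{x_1,\dots,x_k\}$ in decreasing $P_i$-order (so $x_1 \mathrel{P_i} \cdots \mathrel{P_i} x_k$) and define a sequence $P_i = P^0_i, P^1_i,\dots, P^k_i$ by letting each $P^j_i$ be the drop strategy for $P^{j-1}_i$ obtained by dropping $x_j$. A short induction shows that $P^k_i = P'_i$: at step $j$, $x_j$ lies above the already-dropped objects $x_1,\dots,x_{j-1}$ in $P^{j-1}_i$, so moving it to the bottom places it just below them, yielding the tail $x_1,\dots,x_j$ in the same relative order as in $P_i$ and leaving the top segment unchanged. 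Applying \DSP along this sequence gives $\varphi_i(P^{j-1}_i, P_{-i}) \mathrel{R^{j-1}_i} \varphi_i(P^j_i, P_{-i})$ for each $j \in \{1,\dots,k\}$.

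The main obstacle is that these comparisons are taken under the intermediate preferences $R^{j-1}_i$, not the original $R_i$; this is where \WELB becomes essential. Because drops move only objects in $O \setminus \omega_i$, we have $\min_{P^j_i}(\omega_i) = \min_{P_i}(\omega_i)$ for every $j$, and the set $S_j$ of objects weakly preferred to this minimum under $P^j_i$ equals $\omega_i \cup \{o \in O \setminus (\omega_i \cup \{x_1,\dots,x_j\}) : o \mathrel{R_i} \min_{P_i}(\omega_i)\}$, so $S_0 \supseteq S_1 \supseteq \cdots \supseteq S_k$. By \WELB, both $\varphi_i(P^{j-1}_i, P_{-i}) \subseteq S_{j-1}$ and $\varphi_i(P^j_i, P_{-i}) \subseteq S_j \subseteq S_{j-1}$. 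Since the drops rearranged only objects outside $S_{j-1}$, $P_i$ and $P^{j-1}_i$ induce the same marginal ordering on $S_{j-1}$, and because lexicographic preferences over bundles are fully determined by their marginal ordering, $P_i$ and $P^{j-1}_i$ induce the same ranking on every pair of subsets of $S_{j-1}$. Each \DSP inequality therefore converts to $\varphi_i(P^{j-1}_i, P_{-i}) \mathrel{R_i} \varphi_i(P^j_i, P_{-i})$, and transitivity yields $\varphi_i(P_i, P_{-i}) \mathrel{R_i} \varphi_i(P'_i, P_{-i})$, establishing \SDSP.
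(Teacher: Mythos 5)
Your proposal is correct and follows essentially the same route as the paper's proof: decompose the subset-drop into single drops in decreasing $P_i$-order, apply \DSP at each link, and use \WELB plus the fact that lexicographic preferences are determined by their marginal restriction to convert each intermediate comparison under $R^{j-1}_i$ into one under $R_i$. The only cosmetic difference is that you confine the assignments to the upper contour sets $S_{j-1}$ while the paper uses the slightly larger sets $O \setminus \{x_1,\dots,x_{\ell}\}$; both suffice.
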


Theorem~\ref{Multi-unit characterization with drop SP} refines two characterizations provided by \citet[Theorems~1 and~3]{altuntacs2023}.

\begin{theorem}
\label{Multi-unit characterization with drop SP}
On the lexicographic domain, a rule satisfies \BAL, \IGE, \WELB, and \DSP if and only if it equals $\varphi^{\text{TTC}}$.
\end{theorem}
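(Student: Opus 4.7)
The plan is to deduce Theorem~\ref{Multi-unit characterization with drop SP} by combining Lemma~\ref{Proposition: drop SP + WELB implies subset-drop SP} with the preceding characterization in Theorem~\ref{Multi-unit characterization with TP}. The argument splits into necessity (that $\varphi^{\text{TTC}}$ satisfies the four axioms) and sufficiency (that every rule satisfying them equals $\varphi^{\text{TTC}}$).

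Necessity is immediate from the prior facts. Fact~\ref{TTCproperties} gives \BAL and \WELB on any preference domain, and Fact~\ref{necessity} gives \PE and \SDSP on the lexicographic domain. Since any Pareto-improving trading cycle at an allocation $\mu$ would produce a Pareto-dominating allocation, \PE implies \IGE; and because every drop strategy is by definition a subset-drop strategy, \SDSP implies \DSP. Hence $\varphi^{\text{TTC}}$ satisfies \BAL, \IGE, \WELB, and \DSP on $\mathcal{L}$.

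For sufficiency, let $\varphi$ be any rule on $\mathcal{L}$ satisfying \BAL, \IGE, \WELB, and \DSP. Lemma~\ref{Proposition: drop SP + WELB implies subset-drop SP} then upgrades \DSP (in the presence of \WELB) to \SDSP. Because every truncation strategy in $\mathcal{T}_i(P_i)$ is by definition a subset-drop strategy---obtained by dropping a tail subset of $O \setminus \omega_i$---\SDSP implies \TP. Thus $\varphi$ satisfies \BAL, \IGE, \WELB, and \TP, and Theorem~\ref{Multi-unit characterization with TP} forces $\varphi = \varphi^{\text{TTC}}$.

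All the algorithmic heavy lifting---in particular the minimal-counterexample argument using the size and similarity functions---was already absorbed into the proof of Theorem~\ref{Multi-unit characterization with TP}, while the axiomatic bridge between \DSP and \TP is supplied by Lemma~\ref{Proposition: drop SP + WELB implies subset-drop SP}. I therefore expect no substantive obstacle for the present theorem beyond citing those two results in the correct order; the only point worth checking carefully is that the definitions of drop, truncation, and subset-drop strategies align as claimed (drop and truncation are both special cases of subset-drop), so that the needed implications from \SDSP hold without additional work.
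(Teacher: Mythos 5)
Your proposal is correct and follows exactly the paper's route: necessity from Facts~\ref{TTCproperties} and~\ref{necessity}, and sufficiency by using Lemma~\ref{Proposition: drop SP + WELB implies subset-drop SP} to upgrade \DSP (with \WELB) to \SDSP, hence \TP, and then invoking Theorem~\ref{Multi-unit characterization with TP}. Your check that drop and truncation strategies are both special cases of subset-drop strategies is also consistent with the paper's definitions, so no gap remains.
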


\begin{proof}
By Facts~\ref{TTCproperties} and~\ref{necessity}, $\varphi^{\mathrm{TTC}}$ satisfies the stated properties. Conversely, if $\varphi$ is a rule satisfying \BAL, \IGE, \WELB, and \DSP, then Lemma~\ref{Proposition: drop SP + WELB implies subset-drop SP} implies that $\varphi$ satisfies \SDSP and hence \TP. Therefore, $\varphi$ satisfies the properties in Theorem~\ref{Multi-unit characterization with TP}, which yields $\varphi = \varphi^{\mathrm{TTC}}$. 
\end{proof}

Note that \IR cannot replace \WELB in the statement of Theorem~\ref{Multi-unit characterization with TP} or~\ref{Multi-unit characterization with drop SP}.\footnote{Of course, \IR can be \emph{added} to the list of properties but it is \emph{implied} by the four properties in each theorem.} Example~\ref{no characterization with IR} (Appendix~\ref{Appendix: independence}) exhibits a rule $\varphi^{\neg\mathrm{WELB}}$ on $\mathcal{L}$ that differs from $\varphi^{\mathrm{TTC}}$ yet also satisfies \BAL, \PE, \IR, \TP, and \DSP. The construction relies on a key distinction between the two participation guarantees: \IR allows an agent to receive an individually unattractive object as part of a sufficiently desirable bundle, whereas \WELB prevents such objects from appearing in her assignment. Interestingly, the analogous construction is not possible with marginal rules on the responsive domain; in that setting, \IR suffices for the characterization (see Lemma~\ref{individual-good + IR implies WELB} and Theorem~\ref{responsive characterization with IR}).\footnote{Similarly, this construction is not possible on the domain $\mathcal{P}^*$ described in footnote~\ref{footnote other domain}, as \IR implies \WELB on that domain.}

More generally, Appendix~\ref{Appendix: independence} provides several alternative rules that demonstrate the independence of our properties. Table~\ref{tab:properties_L} summarizes the properties satisfied by these rules on the lexicographic domain; it shows that each of the properties in Theorems~\ref{Multi-unit characterization with TP} and~\ref{Multi-unit characterization with drop SP} is indispensable.

\begin{table}[ht]
\caption{Properties of selected rules on $\mathcal{L}$}
\label{tab:properties_L}
\centering
\renewcommand{\arraystretch}{1.15}
\begin{tabular}{l*{7}{c}}
\hline
Rule & \BAL & \PE & \WELB & \TP & \DSP & \IR & \SP \\
\midrule
$\varphi^{\mathrm{TTC}}$                     & \checkmark & \checkmark & \checkmark & \checkmark & \checkmark & \checkmark &            \\
No-trade rule (Ex.~\ref{No-trade rule}) & \checkmark &            & \checkmark & \checkmark & \checkmark & \checkmark & \checkmark \\
Balanced Serial Dictatorship (Ex.~\ref{Balanced SD})  & \checkmark & \checkmark &            & \checkmark & \checkmark &            & \checkmark \\
$\varphi^{\neg \TP}$ (Ex.~\ref{Deviation from TTC})                     & \checkmark & \checkmark & \checkmark &            &            & \checkmark &            \\
$\varphi^{\neg \BAL}$ (Ex.~\ref{NOT BAL}) &            & \checkmark & \checkmark & \checkmark & \checkmark &            &            \\
$\varphi^{\neg \mathrm{WELB}}$ (Ex.~\ref{no characterization with IR})                     & \checkmark & \checkmark &   & \checkmark & \checkmark  & \checkmark &            \\
\bottomrule
\end{tabular}

\end{table}

\section{Responsive preferences}\label{sec:responsivedomain}

We now turn from lexicographic to responsive preferences. Objects remain substitutes, but the tension between efficiency, individual rationality, and incentive compatibility becomes even starker on this more general domain.

On the responsive domain, \PE becomes a demanding benchmark: Pareto improvements may require coordinating intricate multi-object exchanges among several agents, and even deciding whether a Pareto improvement exists is NP-complete (\citet{de2009complexity}, \citet{aziz2019efficient}). Furthermore, \citet{manjunath2024} show that no marginal rule can simultaneously satisfy \PE and \IR.
Their argument is based on the following simple two-agent example.

\begin{example}
    \label{westkamp}
    Let $N = \{1, 2\}$, $O = \{a, b, c, d\}$, and $\omega = (\{a, d\}, \{b,c\})$. Let $\varphi$ be a rule on $\mathcal{R}$ satisfying \PE and \IR.
    
    Consider $P \in \mathcal{L}$ with $P_1 |_O =  P_2 |_O:a,b,c,d$. Since $\varphi$ satisfies \PE and \IR, it must assign $\varphi(P) = \omega$. Now consider $P' \in \mathcal{R}$ with $P' |_O = P |_O$, and assume that each agent prefers the other's endowment: $\{b,c\} \mathrel{P'_1} \{a,d\}$ and $\{a,d\} \mathrel{P'_2} \{b,c\}$. 
    Then $\omega$ is Pareto dominated at $P'$ by the bundle swap $\mu \coloneqq (\{b, c\}, \{a, d\})$.
    Since $\varphi$ satisfies \PE, we must have $\varphi(P') \neq \omega = \varphi(P)$. Thus $P' |_O = P |_O$ and $\varphi(P') \neq \varphi(P)$, so $\varphi$ is not marginal.  
    \hfill $\diamond$
\end{example}

At the lexicographic profile $P$ in Example~\ref{westkamp}, any rule on $\mathcal{R}$ that satisfies \PE and \IR when restricted to $\mathcal{L}$ must select the endowment $\omega = (\{a,d\}, \{b,c\})$. If the rule is marginal, then it must also select $\omega$ at the responsive profile $P'$ with the same marginal preferences. In particular,  $\varphi^{\mathrm{TTC}}$ assigns $\omega$ at both $P$ and $P'$, and thus $\varphi^{\mathrm{TTC}}(P') = \omega$ is Pareto dominated at $P'$. However, no trading cycle is Pareto improving at $P'$ (under the common marginal preferences, a single-object exchange would require some agent to trade away an object for a lower-ranked one).

More generally, $\varphi^{\mathrm{TTC}}$ inherits individual-good efficiency on $\mathcal{R}$ from its Pareto efficiency on $\mathcal{L}$: if a Pareto-improving trading cycle existed at $\varphi^{\mathrm{TTC}}(P)$ for some $P \in \mathcal{R}$, then the same cycle would be Pareto improving at the lexicographic profile with the same marginal preferences, contradicting the Pareto efficiency of $\varphi^{\mathrm{TTC}}$ on $\mathcal{L}$. In addition, \citet{altuntacs2023} show that $\mathcal{R}$ is a maximal domain on which $\varphi^{\mathrm{TTC}}$ satisfies \IR. The next fact collects these properties.

\begin{fact}\label{TTC properties on R}
On the responsive domain, $\varphi^{\mathrm{TTC}}$ satisfies \IGE and \IR.
\end{fact}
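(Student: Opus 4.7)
The plan is to establish each property by reducing the responsive case to what is already known for $\varphi^{\mathrm{TTC}}$ on the lexicographic domain, using the marginal property (Fact~\ref{TTCproperties}). For \IGE, I would fix $P \in \mathcal{R}$ and let $\tilde{P} \in \mathcal{L}$ be the lexicographic profile with $\tilde{P}|_O = P|_O$. By marginality, $\varphi^{\mathrm{TTC}}(P) = \varphi^{\mathrm{TTC}}(\tilde{P})$. Suppose, toward a contradiction, that some trading cycle $C = (o_1, i_1, \ldots, o_k, i_k, o_{k+1} = o_1)$ at $\varphi^{\mathrm{TTC}}(P)$ is Pareto improving at $P$. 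For each agent $i_\ell$, the incoming and outgoing bundles differ by exactly one object, so responsiveness makes the improvement condition equivalent to the marginal comparison $o_{\ell+1} \mathrel{P_{i_\ell}} o_\ell$. Because lexicographic preferences are also responsive and $\tilde{P}|_O = P|_O$, the same cycle is also Pareto improving at $\tilde{P}$, contradicting the Pareto efficiency of $\varphi^{\mathrm{TTC}}(\tilde{P})$ on $\mathcal{L}$ (Fact~\ref{necessity}).

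For \IR, I would work directly with the pairing of given and received objects induced by $\text{TTC}(P)$. Fix an agent $i$ and list the steps $t_1 < t_2 < \cdots < t_k$ at which she is in a trading cycle. At step $t_\ell$, let $d_\ell \in \omega_i$ denote the endowment object she gives up and $c_\ell = \max_{P_i}(O^{t_\ell})$ the object she receives; then $c_\ell \mathrel{R_i} d_\ell$ at the object level since $d_\ell \in O^{t_\ell}$. By \BAL (Fact~\ref{TTCproperties}), $\{d_1, \ldots, d_k\} = \omega_i$ and $\{c_1, \ldots, c_k\} = \varphi^{\mathrm{TTC}}_i(P)$. Define the interpolation $S_0 = \omega_i$ and $S_\ell = (S_{\ell-1} \setminus d_\ell) \cup c_\ell$. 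I would verify that each $S_\ell$ is a valid $k$-element bundle differing from $S_{\ell-1}$ in at most one element, and then invoke responsiveness to conclude $S_\ell \mathrel{R_i} S_{\ell-1}$ at every step, chaining to $\varphi^{\mathrm{TTC}}_i(P) = S_k \mathrel{R_i} S_0 = \omega_i$.

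The main obstacle is verifying that the interpolation is well-defined, namely $d_\ell \in S_{\ell-1}$ and either $c_\ell = d_\ell$ (a ``self-loop'' with $S_\ell = S_{\ell-1}$) or $c_\ell \notin S_{\ell-1}$. Both facts follow from a simple invariant of TTC: once an object leaves $O^t$ it never returns, so each $d_\ell$ and each $c_\ell$ appears at most once, and $c_\ell \in \omega_i$ is possible only via a self-loop in which agent $i$ points to her own endowed object. With these bookkeeping details in place, the object-level comparisons $c_\ell \mathrel{R_i} d_\ell$ deliver the bundle preference directly via responsiveness.
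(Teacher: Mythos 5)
Your proposal is correct. For \IGE{} it is essentially the paper's own argument: the paper justifies \IGE{} exactly by noting that a Pareto-improving trading cycle at $\varphi^{\mathrm{TTC}}(P)$ would also be Pareto improving at the lexicographic profile with the same marginal preferences (since each participant's gain reduces, via responsiveness, to the single-object comparison $o_{\ell+1} \mathrel{P_{i_\ell}} o_\ell$), contradicting \PE{} on $\mathcal{L}$; your write-up just spells out the responsiveness step. For \IR{}, however, the paper does not prove anything: it simply cites \citet{altuntacs2023}, who show that $\mathcal{R}$ is a maximal domain on which $\varphi^{\mathrm{TTC}}$ satisfies \IR{}. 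Your direct argument---pair each relinquished endowment object $d_\ell$ with the received object $c_\ell = \max_{P_i}(O^{t_\ell})$, note $c_\ell \mathrel{R_i} d_\ell$ because $d_\ell \in O^{t_\ell}$, and chain one-object swaps through responsiveness---is sound, and the bookkeeping you flag does go through (each endowed object of $i$ enters exactly one cycle, received objects never reappear, and $c_\ell \in \omega_i$ forces the self-loop $c_\ell = d_\ell$). This is in fact the same bijection-plus-interpolation device the paper itself deploys in Case~2 of the proof of Proposition~\ref{prop:possibility}, so your route buys a self-contained proof of the \IR{} half at no real cost, whereas the paper's citation buys brevity and the stronger maximal-domain statement.
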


Incentive compatibility becomes even more fragile on the responsive domain because extending a rule to a larger domain preserves any manipulations that were already possible and may introduce new ones. Nevertheless, $\varphi^{\mathrm{TTC}}$ remains robust against truncation strategies on $\mathcal{R}$.

\begin{proposition}\label{prop:possibility}
    On the responsive domain, $\varphi^{\text{TTC}}$ satisfies TP.
\end{proposition}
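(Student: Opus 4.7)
The plan is to fix $P \in \mathcal{R}$, an agent $i \in N$, and a truncation $P'_i \in \mathcal{T}_i(P_i)$ of $P_i$ at some $y$, with dropped tail $Y = \{o \in O \setminus \omega_i : y \mathrel{P_i} o\}$. Writing $\mu = \varphi^{\text{TTC}}(P)$, $\mu' = \varphi^{\text{TTC}}(P'_i, P_{-i})$, and $k = |\omega_i|$, I aim to show $\mu_i \mathrel{R_i} \mu'_i$. Fact~\ref{TTCproperties} immediately gives $|\mu_i| = |\mu'_i| = k$ (from \BAL) and $\mu'_i \cap Y = \emptyset$ (from \WELB, because under $P'_i$ every object in $Y$ sits below all of $\omega_i$).

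My main reduction is to first-order stochastic dominance. Enumerating $\mu_i = \{o_1, \ldots, o_k\}$ and $\mu'_i = \{o'_1, \ldots, o'_k\}$ in decreasing $P_i$-order, it suffices to show $o_t \mathrel{R_i} o'_t$ for every $t$. This rank-by-rank condition implies $\mu_i \mathrel{R_i} \mu'_i$ via a standard chain of pairwise swaps: pair $o'_t$ with $o_t$ and change one element at a time, invoking responsiveness to compare the intermediate bundles.

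To establish the rank-by-rank condition, I would couple the two TTC runs. Let $A^t$ denote the remaining objects at the start of step $t$ of $\text{TTC}(P)$. Since $P_i$ and $P'_i$ agree on $O \setminus Y \supseteq \omega_i$, agent~$i$ points to the same object in both runs as long as her $P_i$-top of $A^t$ lies in $O \setminus Y$; hence the two runs coincide up to the first step $t^*$ at which $\max_{P_i}(A^{t^*}) \in Y$. (If no such $t^*$ exists, the runs coincide throughout and $\mu = \mu'$.) Before $t^*$, agent~$i$'s receipts in the two runs are identical, giving $o_1 = o'_1, \ldots, o_r = o'_r$, where $r$ is her receipt count by step $t^* - 1$. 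Since $\max_{P_i}(A^{t^*}) \in Y$ means $A^{t^*}$ contains no object strictly $P_i$-above $y$, we obtain $A^{t^*} \cap (O \setminus Y) = A^{t^*} \cap \omega_i$. The crucial inductive observation is then that from step $t^*$ onwards in $\text{TTC}(P'_i, P_{-i})$, agent~$i$'s $P'_i$-top is her best remaining endowed object whenever one exists, so she receives her remaining endowment via self-cycles in decreasing $P_i$-order; combined with \BAL, this gives $\{o'_{r+1}, \ldots, o'_k\} = A^{t^*} \cap \omega_i$. Meanwhile, in $\text{TTC}(P)$ an endowed object can be removed only via a cycle involving agent~$i$, so if $\tau_1 < \cdots < \tau_{k-r}$ are the steps at which agent~$i$ is in a cycle after $t^* - 1$, then $A^{\tau_s} \cap \omega_i$ retains exactly $k - r - s + 1$ elements, whose $P_i$-top is at least as good as the $s$-th best element of $A^{t^*} \cap \omega_i$, namely $o'_{r+s}$. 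Since $o_{r+s} = \max_{P_i}(A^{\tau_s})$, we conclude $o_{r+s} \mathrel{R_i} o'_{r+s}$, completing the FOSD condition.

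The main obstacle is the bookkeeping for the coupling argument: specifically verifying (i) that after divergence at $t^*$ the only cycles involving agent~$i$ in $\text{TTC}(P'_i, P_{-i})$ are self-cycles and that non-$i$ cycles never touch $\omega_i$; and (ii) that between consecutive $\tau_s$ in $\text{TTC}(P)$, $A^{\tau_s} \cap \omega_i$ shrinks by exactly one per $i$-involving cycle (while non-$i$ cycles again cannot touch $\omega_i$). Both items rest on the structural fact that every object in $A^{t^*}$ is $P_i$-weakly below $y$, which confines every later non-$i$ cycle to objects in $Y$ and preserves the decomposition $A^t \cap (O \setminus Y) = A^t \cap \omega_i$ on which the self-cycle pattern in $\text{TTC}(P'_i, P_{-i})$ depends.
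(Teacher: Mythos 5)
Your argument is correct and follows essentially the same route as the paper's proof of Proposition~\ref{prop:possibility}: both couple the two runs of $\varphi^{\mathrm{TTC}}$ until the first step at which agent~$i$'s pointing can differ, observe that from that step onward the truncated run assigns agent~$i$ exactly her remaining endowed objects via self-cycles while the truthful run gives her, cycle by cycle, an object at least as good as a remaining endowed object, and then lift the resulting object-wise dominating bijection to a bundle comparison via responsiveness. The only cosmetic difference is that you package the bijection as a rank-by-rank (sorted) domination, whereas the paper matches each relinquished endowed object to the object received in its cycle.
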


Combining Fact~\ref{TTCproperties}, Fact~\ref{TTC properties on R}, and Proposition~\ref{prop:possibility}, we see that $\varphi^{\mathrm{TTC}}$ satisfies \BAL, \IGE, \WELB, and \TP on $\mathcal{R}$. Within the class of marginal rules, these properties continue to identify $\varphi^{\mathrm{TTC}}$ on the larger domain $\mathcal{R}$ (cf. Theorem~\ref{Multi-unit characterization with TP}). In fact, in Theorem~\ref{responsive characterization} below we show that the uniqueness holds more generally. The key observation is the following lemma, which says that \BAL is superfluous: for marginal rules on $\mathcal{R}$, it is implied by \WELB and \TP.

\begin{lemma}\label{lemma:BAL implied on responsive domain}
    On the responsive domain, if a marginal rule satisfies \WELB and \TP, then it satisfies \BAL.
\end{lemma}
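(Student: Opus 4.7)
The plan is to bound $|\varphi_j(P)|$ above by $|\omega_j|$ for every $j \in N$ and every $P \in \mathcal{R}$. Since $\varphi(P)$ is an allocation, $\sum_{j \in N} |\varphi_j(P)| = |O| = \sum_{j \in N} |\omega_j|$; combined with pointwise $|\varphi_j(P)| \le |\omega_j|$, this forces equality $|\varphi_j(P)| = |\omega_j|$ throughout, which is \BAL.

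Fix $j \in N$ and $P \in \mathcal{R}$, and consider the ``maximal truncation'' of $P_j$. Taking $x \coloneqq \max_{P_j}(O \setminus \omega_j)$, the tail subset $\{o \in O \setminus \omega_j : x \mathrel{R_j} o\}$ equals $O \setminus \omega_j$, so any $P'_j \in \mathcal{R}_j$ obtained from $P_j$ by dropping this tail lies in $\mathcal{T}_j(P_j)$. The resulting marginal $P'_j|_O$ lists $\omega_j$ first (in the induced $P_j$-order) and $O \setminus \omega_j$ last; in particular, $\min_{P'_j}(\omega_j) = \min_{P_j}(\omega_j)$ and every object in $O \setminus \omega_j$ is strictly below $\min_{P'_j}(\omega_j)$ under $P'_j$. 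Applying \WELB at $(P'_j, P_{-j})$ gives $\varphi_j(P'_j, P_{-j}) \subseteq \omega_j$, hence $|\varphi_j(P'_j, P_{-j})| \le |\omega_j|$.

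Next, I would amplify \TP over all responsive extensions of the marginal $P_j|_O$ using marginality. For any $\tilde P_j \in \mathcal{R}_j$ with $\tilde P_j|_O = P_j|_O$, marginality gives $\varphi_j(\tilde P_j, P_{-j}) = \varphi_j(P)$, and the maximal truncation $\tilde P'_j$ of $\tilde P_j$ has marginal equal to $P'_j|_O$, so $\varphi_j(\tilde P'_j, P_{-j}) = \varphi_j(P'_j, P_{-j})$. \TP applied to $(\tilde P_j, P_{-j})$ then yields $\varphi_j(P) \mathrel{\tilde R_j} \varphi_j(P'_j, P_{-j})$, and since $\tilde P_j$ was arbitrary, this weak preference must hold under \emph{every} responsive $\tilde P_j$ with marginal $P_j|_O$.

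The key obstacle is then a size-rigidity lemma for responsive preferences: if $A, B \subseteq O$ are distinct and $A \mathrel{\tilde R_j} B$ under every responsive $\tilde P_j$ with a prescribed marginal $\pi$, then $|A| = |B|$. I would prove this by exhibiting additive (hence responsive) preferences that reverse any supposed domination when sizes differ. Assume $|A| > |B|$, choose strictly decreasing utilities $u: O \to \mathbb{R}$ consistent with $\pi$ and lying in $[-1, -1 + \varepsilon]$ with $\varepsilon < 1 - |B|/|A|$, and let $P_u$ be the induced additive preference with $X \succ_u Y \iff \sum_{o \in X} u(o) > \sum_{o \in Y} u(o)$. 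Then
\[
\sum_{o \in A} u(o) \le -|A|(1-\varepsilon) < -|B| \le \sum_{o \in B} u(o),
\]
so $B \succ_u A$, contradicting the hypothesis. The case $|A| < |B|$ is symmetric, using utilities in $[1, 1+\varepsilon']$ with $\varepsilon' < |B|/|A| - 1$. Applying the lemma with $A = \varphi_j(P)$ and $B = \varphi_j(P'_j, P_{-j})$ gives $|\varphi_j(P)| = |\varphi_j(P'_j, P_{-j})| \le |\omega_j|$, and the counting argument above finishes the proof.
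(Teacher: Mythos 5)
Your proof is correct and rests on the same mechanism as the paper's: the maximal truncation combined with \WELB confines the truncated assignment to $\omega_j$, and marginality plus the freedom of responsive preferences across cardinalities (your additive utilities near $-1$ are exactly the paper's ``order bundles by cardinality, smaller first'' witness preference from its footnote) turns an oversized assignment into a \TP violation. The only differences are presentational---the paper argues by contradiction from a single agent with $|\mu_i|>|\omega_i|$ and exhibits one tailored witness preference, whereas you prove a two-sided size-rigidity lemma and finish by counting---so this is a repackaging rather than a genuinely different route; just note that the utilities $u$ should be perturbed so that all subset sums are distinct, ensuring $P_u$ is a strict linear order as the model requires.
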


\begin{theorem}\label{responsive characterization}
    On the responsive domain, a marginal rule satisfies \IGE, \WELB, and \TP if and only if it equals $\varphi^{\text{TTC}}$.
\end{theorem}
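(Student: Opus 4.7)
My plan is to split the biconditional into necessity and sufficiency, with the latter reducing cleanly to the lexicographic characterization already proved in Theorem~\ref{Multi-unit characterization with TP}. For necessity, I would simply collect the pieces already in place: Fact~\ref{TTCproperties} delivers marginality and \WELB of $\varphi^{\mathrm{TTC}}$ on any preference domain, Fact~\ref{TTC properties on R} delivers \IGE on $\mathcal{R}$, and Proposition~\ref{prop:possibility} delivers \TP. Hence $\varphi^{\mathrm{TTC}}$ is a marginal rule on $\mathcal{R}$ satisfying \IGE, \WELB, and \TP.

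For sufficiency, suppose $\varphi$ is a marginal rule on $\mathcal{R}$ satisfying \IGE, \WELB, and \TP. My strategy is to promote the hypotheses to the full list appearing in Theorem~\ref{Multi-unit characterization with TP} and then restrict to the lexicographic subdomain. Lemma~\ref{lemma:BAL implied on responsive domain} immediately upgrades the hypotheses: $\varphi$ automatically satisfies \BAL on $\mathcal{R}$. Since $\mathcal{L} \subseteq \mathcal{R}$, the restriction $\varphi|_{\mathcal{L}}$ is a well-defined rule on the lexicographic domain that inherits \BAL, \IGE, \WELB, and \TP from $\varphi$ (each property is defined profile by profile, so restriction preserves it). Theorem~\ref{Multi-unit characterization with TP} then yields $\varphi(P) = \varphi^{\mathrm{TTC}}(P)$ for every $P \in \mathcal{L}$.

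To lift this agreement from $\mathcal{L}$ to all of $\mathcal{R}$, I invoke marginality. Fix an arbitrary $P \in \mathcal{R}$ and let $P^{L} \in \mathcal{L}$ be the unique lexicographic profile with $P^{L}|_{O} = P|_{O}$ (well-defined because each lexicographic preference is determined by its marginal). Since both $\varphi$ and $\varphi^{\mathrm{TTC}}$ are marginal, $\varphi(P) = \varphi(P^{L})$ and $\varphi^{\mathrm{TTC}}(P) = \varphi^{\mathrm{TTC}}(P^{L})$. The previous step gives $\varphi(P^{L}) = \varphi^{\mathrm{TTC}}(P^{L})$, so $\varphi(P) = \varphi^{\mathrm{TTC}}(P)$, as desired.

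The genuinely hard work is packaged outside this proof, in two places: Lemma~\ref{lemma:BAL implied on responsive domain}, which supplies the missing \BAL requirement, and Proposition~\ref{prop:possibility}, which rules out truncation manipulations on the larger domain $\mathcal{R}$. Once those are in hand, the argument above is a bridge-to-the-lexicographic-case reduction. The only subtle point to verify is that all four hypotheses used in Theorem~\ref{Multi-unit characterization with TP} restrict cleanly to $\mathcal{L}$; this is automatic because each property (and in particular \TP, which ranges only over truncation strategies within the reporting domain) is defined pointwise over profiles. Note that the converse passage from $\mathcal{L}$ to $\mathcal{R}$ is precisely where marginality is essential: without it, Example~\ref{westkamp} shows that a rule agreeing with $\varphi^{\mathrm{TTC}}$ on $\mathcal{L}$ need not coincide with it on $\mathcal{R}$.
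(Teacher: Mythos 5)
Your proposal is correct and follows essentially the same route as the paper: necessity from Fact~\ref{TTCproperties}, Fact~\ref{TTC properties on R}, and Proposition~\ref{prop:possibility}; sufficiency by invoking Lemma~\ref{lemma:BAL implied on responsive domain} to obtain \BAL, applying Theorem~\ref{Multi-unit characterization with TP} on $\mathcal{L}$, and lifting to $\mathcal{R}$ via marginality. Your extra remark that the hypotheses restrict cleanly to $\mathcal{L}$ (since deviations within the smaller domain are a subset of those within $\mathcal{R}$) is a point the paper leaves implicit but is handled correctly.
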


\begin{proof}
We have shown that $\varphi^{\mathrm{TTC}}$ satisfies the stated properties. 
Conversely, let $\varphi$ be a marginal rule on $\mathcal{R}$ satisfying \IGE, \WELB, and \TP. By Lemma~\ref{lemma:BAL implied on responsive domain}, $\varphi$ also satisfies \BAL. Hence Theorem~\ref{Multi-unit characterization with TP} implies that $\varphi$ coincides with $\varphi^{\text{TTC}}$ on $\mathcal{L}$. 

Fix any $P' \in \mathcal{R}$, and let $P$ be the unique lexicographic preference profile with $P|_O = P'|_O$.  Since $\varphi$ and $\varphi^{\text{TTC}}$ are marginal, we have
$$\varphi(P') = \varphi(P) = \varphi^{\text{TTC}}(P) = \varphi^{\text{TTC}}(P').$$
Thus $\varphi$ coincides with $\varphi^{\text{TTC}}$ on $\mathcal{R}$.\footnote{The argument parallels one used by \citet{biro2022} to extend a characterization of the ``Circulation Top Trading Cycle'' rule from the lexicographic to the responsive domain in a model with homogeneous objects.}
\end{proof}

Unlike on the lexicographic domain, where \IR cannot generally replace \WELB (see Example~\ref{no characterization with IR}), on $\mathcal{R}$ any marginal rule satisfying \IR also satisfies \WELB (Lemma~\ref{individual-good + IR implies WELB}). Thus \WELB can be replaced by \IR in Theorem~\ref{responsive characterization}.

\begin{theorem}\label{responsive characterization with IR}
    On the responsive domain, a marginal rule satisfies
    \IGE, \IR, and \TP
    if and only if it equals $\varphi^{\text{TTC}}$.
\end{theorem}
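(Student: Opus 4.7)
The plan is to deduce this theorem almost immediately from the previous result, Theorem~\ref{responsive characterization}, by invoking Lemma~\ref{individual-good + IR implies WELB} to upgrade \IR to \WELB. The hard work has already been done in establishing Theorem~\ref{responsive characterization} (and in Lemma~\ref{individual-good + IR implies WELB}), so this theorem becomes essentially a corollary.

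For necessity, I would simply collect the previously established facts: on $\mathcal{R}$, $\varphi^{\text{TTC}}$ is marginal (Fact~\ref{TTCproperties}) and satisfies \IGE and \IR (Fact~\ref{TTC properties on R}) and \TP (Proposition~\ref{prop:possibility}).

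For sufficiency, let $\varphi$ be a marginal rule on $\mathcal{R}$ satisfying \IGE, \IR, and \TP. Applying Lemma~\ref{individual-good + IR implies WELB} (which says that on $\mathcal{R}$, \IR combined with marginality yields \WELB and \BAL for free) gives that $\varphi$ also satisfies \WELB. Therefore $\varphi$ satisfies the three hypotheses of Theorem~\ref{responsive characterization}, namely \IGE, \WELB, and \TP, so Theorem~\ref{responsive characterization} delivers $\varphi = \varphi^{\text{TTC}}$.

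There is no substantive obstacle here: the only step requiring any work is the implication ``\IR $\Rightarrow$ \WELB for marginal rules on $\mathcal{R}$'', but that is precisely Lemma~\ref{individual-good + IR implies WELB} and has already been proved. The only subtlety worth flagging explicitly is that both theorems require marginality of $\varphi$; without marginality, the bridge from \IR to \WELB via Lemma~\ref{individual-good + IR implies WELB} breaks down, and indeed Example~\ref{no characterization with IR} on the lexicographic domain shows that \IR cannot in general replace \WELB when marginality is absent or unavailable as a hypothesis.
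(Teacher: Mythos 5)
Your proposal is correct and is essentially identical to the paper's own proof: both derive the result by applying Lemma~\ref{individual-good + IR implies WELB} to upgrade \IR to \WELB for a marginal rule on $\mathcal{R}$, and then invoking Theorem~\ref{responsive characterization}. The necessity direction also matches the paper's citation of Facts~\ref{TTCproperties} and~\ref{TTC properties on R} and Proposition~\ref{prop:possibility}.
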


\begin{proof}
We have shown that $\varphi^{\mathrm{TTC}}$ satisfies the stated properties. Conversely, if $\varphi$ is a marginal rule satisfying \IGE, \IR, and \TP, then Lemma~\ref{individual-good + IR implies WELB} implies that $\varphi$ also satisfies \WELB. Therefore, $\varphi$ satisfies the properties in Theorem~\ref{responsive characterization}, and hence $\varphi = \varphi^{\mathrm{TTC}}$.
\end{proof}

Theorems~\ref{responsive characterization} and~\ref{responsive characterization with IR} characterize $\varphi^{\mathrm{TTC}}$ on $\mathcal{R}$ using \TP as the incentive requirement. The next example shows that $\varphi^{\mathrm{TTC}}$ fails the alternative requirement \DSP on $\mathcal{R}$.

\begin{example}
    \label{example:Not drop SP on additive domain}    
Let $N = \{1, 2, 3 \}$ and $\omega=\left(\left\{ a,b\right\} ,\left\{ c\right\} ,\left\{ d\right\} \right)$. Let $P_2$ and $P_3$ be lexicographic preferences with $P_2|_O : d,b,c,a$ and $P_3|_O:b,a,c,d$.

Let $P_1$ be a responsive preference with $P_1|_O : d,b,c,a$ and $\{b,c\} \mathrel{P_1} \{a,d\}$. Then $\varphi^{\mathrm{TTC}}$ assigns $\varphi^{\mathrm{TTC}}(P) = ( \{a,d\}, \{c\}, \{b\})$. Now suppose agent~$1$ drops object $d$, reporting some $P'_1$ with the marginal preference $P'_1|_O : b,c,a,d$. Then $\varphi^{\mathrm{TTC}}$ assigns $\varphi^{\mathrm{TTC}}(P'_1, P_{-1}) = (\{b,c\}, \{d\}, \{a\} )$. Because $\{b,c\} \mathrel{P_1} \{a,d\}$, agent~$1$ benefits from the drop strategy.
\hfill $\diamond$
\end{example}

Example~\ref{example:Not drop SP on additive domain} suggests that \DSP is a rather restrictive requirement on $\mathcal{R}$. Combined with our lexicographic-domain characterization (Theorem~\ref{Multi-unit characterization with drop SP}) and the fact that \IR implies \WELB for marginal rules on $\mathcal{R}$ (Lemma~\ref{individual-good + IR implies WELB}), this observation yields a sharp impossibility.

\begin{corollary}\label{impossibility1}
On the responsive domain, no marginal rule satisfies \IGE, \IR, and \DSP.
\end{corollary}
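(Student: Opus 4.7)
The plan is to argue by contradiction, combining our lexicographic-domain characterization of $\varphi^{\mathrm{TTC}}$ with the manipulation exhibited in Example~\ref{example:Not drop SP on additive domain}. Suppose, for contradiction, that $\varphi$ is a marginal rule on $\mathcal{R}$ satisfying \IGE, \IR, and \DSP. First, Lemma~\ref{individual-good + IR implies WELB} immediately upgrades \IR to both \BAL and \WELB, so on $\mathcal{R}$ the rule $\varphi$ satisfies \BAL, \IGE, \WELB, and \DSP.

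Next, I would restrict attention to the lexicographic subdomain $\mathcal{L} \subseteq \mathcal{R}$. The axioms \BAL, \IGE, and \WELB transfer to $\varphi|_{\mathcal{L}}$ without comment. The only subtle point is \DSP: it must be verified as a property on $\mathcal{L}$, where only lexicographic drop strategies are allowed. This is immediate, since for any $P_i \in \mathcal{L}_i$ the (unique) drop strategy $P'_i \in \mathcal{L}_i$ obtained by moving some $x \in O \setminus \omega_i$ to the bottom is also a drop strategy in $\mathcal{R}_i$ (because $\mathcal{L}_i \subseteq \mathcal{R}_i$), so \DSP on $\mathcal{R}$ precludes the lexicographic manipulation automatically. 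Theorem~\ref{Multi-unit characterization with drop SP} then forces $\varphi(P) = \varphi^{\mathrm{TTC}}(P)$ for every $P \in \mathcal{L}$.

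To extend this to $\mathcal{R}$, I would use marginality together with the fact that each responsive marginal coincides with that of a unique lexicographic profile: for any $P' \in \mathcal{R}$, letting $P \in \mathcal{L}$ be the unique lexicographic profile with $P|_O = P'|_O$, marginality of both $\varphi$ and $\varphi^{\mathrm{TTC}}$ gives
$$
\varphi(P') \;=\; \varphi(P) \;=\; \varphi^{\mathrm{TTC}}(P) \;=\; \varphi^{\mathrm{TTC}}(P').
$$
Hence $\varphi = \varphi^{\mathrm{TTC}}$ on all of $\mathcal{R}$. But Example~\ref{example:Not drop SP on additive domain} displays a responsive profile at which agent~$1$ strictly benefits from a drop strategy under $\varphi^{\mathrm{TTC}}$, contradicting the assumed \DSP of $\varphi$. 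There is no real obstacle in this argument; it is a direct synthesis of Lemma~\ref{individual-good + IR implies WELB}, Theorem~\ref{Multi-unit characterization with drop SP}, and Example~\ref{example:Not drop SP on additive domain}, with the only care needed being the trivial observation that lexicographic drop strategies embed into responsive ones.
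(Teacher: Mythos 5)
Your proposal is correct and follows essentially the same route as the paper's proof: apply Lemma~\ref{individual-good + IR implies WELB} to obtain \BAL and \WELB, invoke Theorem~\ref{Multi-unit characterization with drop SP} on $\mathcal{L}$, extend to $\mathcal{R}$ by marginality, and contradict \DSP via Example~\ref{example:Not drop SP on additive domain}. Your extra remark that lexicographic drop strategies embed into responsive ones is a correct (and slightly more careful) justification of the restriction step, which the paper leaves implicit.
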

\begin{proof}
    Toward contradiction, let $\varphi$ be a marginal rule on $\mathcal{R}$ satisfying \IGE, \IR, and \DSP. Then $\varphi$ also satisfies \BAL and \WELB by Lemma~\ref{individual-good + IR implies WELB}. Hence Theorem~\ref{Multi-unit characterization with drop SP} implies that $\varphi$ coincides with $\varphi^{\mathrm{TTC}}$ on $\mathcal{L}$, and by marginality it must coincide with $\varphi^{\mathrm{TTC}}$ on $\mathcal{R}$. Thus $\varphi$ fails \DSP, a contradiction. 
\end{proof}

Corollary~\ref{impossibility1} shows that any marginal rule on $\mathcal{R}$ that guarantees \IGE and \IR is necessarily manipulable---even among drop strategies.\footnote{A similar argument yields an analogous impossibility result with \IR replaced by \BAL and \WELB: no marginal rule on $\mathcal{R}$ satisfies \IGE, \BAL, \WELB, and \DSP.}
Nevertheless, we show that $\varphi^{\textrm{TTC}}$ retains some appealing incentive qualities beyond \TP. In particular, it is \defterm{not obviously manipulable} according to the typology proposed by \citet{troyan2020}. Roughly speaking, although profitable deviations exist, no deviation is clearly beneficial when the agent compares only the best- and worst-case outcomes that can result from truthful reporting and from the deviation. 

For each $i \in N$ and each $P_i \in \mathcal{R}_i$, agent~$i$'s \defterm{opportunity set} at $P_i$ under $\varphi^{\mathrm{TTC}}$ is the set $\mathcal{O}_i(P_i) = \{\varphi_i^{\mathrm{TTC}}(P_i,P^*_{-i}) \mid P^*_{-i} \in \mathcal{P}_{-i}\}$ of all bundles that $i$ could be assigned upon reporting $P_i$. For each nonempty subset $\mathcal{X} \subseteq 2^O$, $B_{P_i}(\mathcal{X})$ denotes the most-preferred bundle in $\mathcal{X}$ according to $P_i$, i.e., $B_{P_i}(\mathcal{X}) \in \mathcal{X}$ and $B_{P_i}(\mathcal{X}) \mathrel{R_i} X$ for each $X \in \mathcal{X}$. Similarly, $W_{P_i}(\mathcal{X})$ denotes the least-preferred bundle in $\mathcal{X}$ according to $P_i$.

\begin{proposition}\label{prop:NOM}
On the responsive domain, $\varphi^{\mathrm{TTC}}$ is  \defterm{not obviously manipulable}.\newline 
More precisely, for each $P \in \mathcal{R}$, each $i \in N$, and each $P'_i \in \mathcal{R}_i$ with $\varphi^{\mathrm{TTC}}_i(P'_i,P_{-i}) \mathrel{P_i} \varphi^{\mathrm{TTC}}_i(P)$,
\begin{itemize}
    \item[(i)] $\displaystyle B_{P_i} \left(
    \mathcal{O}_i(P_i)
    \right) \mathrel{R_i}
    B_{P_i} \left(
    \mathcal{O}_i(P'_i)
    \right)
    $, and
    \item[(ii)] $\displaystyle W_{P_i} \left(
    \mathcal{O}_i(P_i)
    \right) \mathrel{R_i} 
    W_{P_i} \left(
    \mathcal{O}_i(P'_i)
    \right).
    $
\end{itemize}
\end{proposition}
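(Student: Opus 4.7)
The plan is to pin down two canonical elements of agent~$i$'s opportunity sets: her endowment $\omega_i$ (which will deliver the worst-case comparison~(ii)) and her \emph{dream bundle} $D_i(P_i)$, defined as the set of $|\omega_i|$ objects that are $P_i$-best in $O$ (which will deliver the best-case comparison~(i)). The engine of the argument consists of two membership lemmas---$\omega_i \in \mathcal{O}_i(Q)$ for every $Q \in \mathcal{R}_i$, and $D_i(P_i) \in \mathcal{O}_i(P_i)$---each proved via an explicit construction of $P^*_{-i}$. Note that neither construction uses the profitability hypothesis on $P'_i$, so the conclusions in fact hold for every alternative report.

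To prove $\omega_i \in \mathcal{O}_i(Q)$ for any $Q$, I would have each $j \neq i$ submit a lexicographic $P^*_j$ that ranks the objects in $\omega_j$ at the very top. During $\text{TTC}(Q, P^*_{-i})$, every $j \neq i$ points only at objects in $\omega_j$ (as long as any remain), so no cycle involving~$i$ ever uses an object outside $\omega_i$. Therefore, the only cycles assigning objects to agent~$i$ are self-cycles on her own objects, and by \BAL agent~$i$ ends up with exactly $|\omega_i|$ such self-cycled objects, which are precisely the elements of $\omega_i$.

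To prove $D_i(P_i) \in \mathcal{O}_i(P_i)$, I would instead have each $j \neq i$ submit a lexicographic $P^*_j$ that ranks the objects in $\omega_i$ above everything else, in the order induced by $P_i$. Writing $D_i(P_i) = \{d_1, \dots, d_k\}$ with $d_1 \mathrel{P_i} \cdots \mathrel{P_i} d_k$ and $k = |\omega_i|$, I would argue by induction on $t \leq k$ that at step~$t$ of $\text{TTC}(P_i, P^*_{-i})$ agent~$i$ points to $d_t$ and is assigned it. If $d_t \in \omega_i$, then $d_t$ is the $P_i$-best element of $\omega_i \cap O^t$, so every $j \neq i$ also points at $d_t$, and $d_t$ self-cycles with~$i$; if $d_t \notin \omega_i$, its current owner $j^*$ points at the $P_i$-best $a \in \omega_i \cap O^t$, which in turn points at~$i$, closing the cross-cycle $(a, i, d_t, j^*, a)$ that assigns $d_t$ to~$i$. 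In either case exactly one object is removed from $\omega_i$, so $|\omega_i \cap O^t| = k - t + 1$ stays positive throughout the induction and every required cross-cycle has a supporting $a$.

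To conclude, \BAL (Fact~\ref{TTCproperties}) forces every bundle in $\mathcal{O}_i(P_i) \cup \mathcal{O}_i(P'_i)$ to have exactly $|\omega_i|$ elements, and responsiveness of $P_i$ makes $D_i(P_i)$ the unique $P_i$-best $|\omega_i|$-subset of~$O$; hence $B_{P_i}(\mathcal{O}_i(P_i)) = D_i(P_i) \mathrel{R_i} B_{P_i}(\mathcal{O}_i(P'_i))$, which is~(i). For~(ii), \IR on $\mathcal{R}$ (Fact~\ref{TTC properties on R}) gives $X \mathrel{R_i} \omega_i$ for every $X \in \mathcal{O}_i(P_i)$, so $W_{P_i}(\mathcal{O}_i(P_i)) = \omega_i$, while $\omega_i \in \mathcal{O}_i(P'_i)$ yields $\omega_i \mathrel{R_i} W_{P_i}(\mathcal{O}_i(P'_i))$. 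The main obstacle is the induction in the previous paragraph---especially when $D_i(P_i) \cap \omega_i \neq \emptyset$, where self-cycles and cross-cycles must be interleaved correctly---but the $|\omega_i \cap O^t|$ accounting keeps everything synchronized.
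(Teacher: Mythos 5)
Your part~(ii) is correct and is essentially the paper's argument: the profile in which each $j \neq i$ top-ranks her own endowment forces $\varphi^{\mathrm{TTC}}_i(\cdot, P^*_{-i}) = \omega_i$ for any report by~$i$, and combining $\omega_i \in \mathcal{O}_i(P_i) \cap \mathcal{O}_i(P'_i)$ with \IR (Fact~\ref{TTC properties on R}) yields the worst-case comparison. Your reduction of part~(i) to the membership claim $D_i(P_i) \in \mathcal{O}_i(P_i)$ is also the right idea, and the final step (by \BAL and responsiveness, $D_i(P_i)$ is the $P_i$-best bundle any report can deliver) is sound.

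The gap is in your construction witnessing $D_i(P_i) \in \mathcal{O}_i(P_i)$. Having each $j \neq i$ top-rank the objects of $\omega_i$ does not, in general, deliver the dream bundle, because the object $a = \max_{P_i}(\omega_i \cap O^t)$ that agent~$i$ relinquishes in a cross-cycle may itself belong to $D_i(P_i)$; once it is handed to $j^*$ it is gone for good, and your induction hypothesis ``at step $t+1$ agent~$i$ points to $d_{t+1}$'' fails. Concretely, take $N = \{1,2\}$, $O = \{w,x,y,z\}$, $\omega_1 = \{x,y\}$, $\omega_2 = \{w,z\}$, $P_1|_O : w, x, y, z$, so $D_1(P_1) = \{w, x\}$, and your $P^*_2|_O : x, y, w, z$. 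At step~1 the cycle $(x, 1, w, 2, x)$ executes: agent~$1$ receives $w$ but pays with $x = d_2$. At step~2 she points to $y$, and the outcome is $\{w, y\} \neq D_1(P_1)$. Your ``exactly one object leaves $\omega_i$ per step'' accounting tracks the cardinality $|\omega_i \cap O^t|$ correctly but not \emph{which} object leaves, and that is precisely where the interleaving of self- and cross-cycles you flag as the main obstacle breaks down. The paper's construction avoids this by going in the opposite direction: each $j \neq i$ ranks the $|\omega_i|$ objects of $D_i(P_i)$ at the very \emph{bottom}, and then $\varphi^{\mathrm{TTC}}_i(P_i, P^*_{-i}) = D_i(P_i)$ follows in one stroke from \IGE (if $i$ missed some $o_\ell \in D_i(P_i)$ held by $j$ while holding some $b \notin D_i(P_i)$, the single-object swap of $b$ for $o_\ell$ would be Pareto improving), with no step-by-step tracing of the algorithm needed.
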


Table~\ref{tab:properties_R} summarizes the properties satisfied by several alternative rules on $\mathcal{R}$ (see the constructions in Appendix~\ref{Appendix: independence}). It shows that the properties used in Theorems~\ref{responsive characterization} and~\ref{responsive characterization with IR}, including marginality, are logically independent.

\begin{table}[ht]
\caption{Properties of selected rules on $\mathcal{R}$}
\label{tab:properties_R}
\centering
\renewcommand{\arraystretch}{1.15}
\begin{tabular}{lccccccc}
\toprule
Rule & \BAL & \IGE & \WELB & \TP & \DSP & \IR & \MAR \\
\midrule
$\varphi^{\mathrm{TTC}}$                     & \checkmark & \checkmark & \checkmark & \checkmark &            & \checkmark & \checkmark \\
No-trade rule (Ex.~\ref{No-trade rule}) & \checkmark &            & \checkmark & \checkmark & \checkmark & \checkmark & \checkmark \\
Balanced Serial Dictatorship (Ex.~\ref{Balanced SD})  & \checkmark & \checkmark &            & \checkmark & \checkmark &            & \checkmark \\
$\varphi^{\neg \TP}$ (Ex.~\ref{Deviation from TTC})                     & \checkmark & \checkmark & \checkmark &            &            & \checkmark & \checkmark           \\
$\varphi^{\neg \MAR}$ (Ex.~\ref{non-marginal})                     & \checkmark & \checkmark & \checkmark & \checkmark          &            & \checkmark &            \\
\bottomrule
\end{tabular}
\end{table}

\section{The single-object environment}     
\label{sec:SS}

We now specialize our analysis to the classic single-object environment (the ``housing market'') introduced by \citet{shapley1974}. In this environment our properties simplify, and the multi-object characterization in Theorem~\ref{Multi-unit characterization with TP} collapses to a sharper statement. 

We formalize the single-object environment as follows. Assume that $O = \{o_1, \dots, o_n\}$ and that each agent $i \in N$ is endowed with the object $\omega_i = o_i$. Each agent~$i$ has strict preferences $P_i$ over individual objects in $O$, and $\mathcal{P} = \prod_{i \in N}\mathcal{P}_i$ denotes the domain of strict preference profiles.\footnote{In this section only, we adopt the standard convention that each agent has preferences $P_i$ defined directly on the set of objects $O$. This is only a notational simplification. If instead we maintained the assumption that each agent has lexicographic preferences on $2^O$, then the result below would be an immediate corollary of Theorem~\ref{Multi-unit characterization with TP}.} An allocation is represented as a bijection $\mu : N \to O$, where $\mu_i$ denotes the object assigned to agent~$i$. 

In the single-object environment, every allocation is balanced by definition. Moreover, \IR and \WELB coincide because each agent must be assigned exactly one object, and \PE and \IGE coincide because any Pareto improvement can be realized by executing a collection of disjoint trading cycles. Truncation and drop strategies---as well as the corresponding incentive properties, \TP and \DSP---are defined exactly as in Section~\ref{incentive properties}. In this setting, truncation strategies admit a familiar interpretation from the matching literature: a truncation strategy for $P_i$ simply shortens agent~$i$'s list of acceptable objects by moving the ``outside option'' $\omega_i$ up in the ranking while preserving the ordering of objects in $O \setminus \omega_i$ (see footnote~\ref{alternative truncation} and the surrounding discussion).

\begin{theorem}
    \label{Theorem: Single-unit characterization}
    In the single-object environment, a rule satisfies \PE, \IR, and \TP if and only if it equals $\varphi^{\mathrm{TTC}}$. 
    \end{theorem}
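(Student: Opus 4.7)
The plan is to derive the theorem as a specialization of Theorem~\ref{Multi-unit characterization with TP}, which (as the footnote to the statement indicates) becomes an immediate corollary once one identifies each preference $P_i$ on $O$ with the lexicographic preference on $2^O$ that it induces. For the \emph{only if} direction, the classical results for housing markets (see, e.g., \citet{ma1994}) imply that $\varphi^{\mathrm{TTC}}$ satisfies \PE, \IR, and full strategy-proofness; since \TP is a weakening of strategy-proofness, the necessity of \TP is immediate.

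For the \emph{if} direction, the key is that in the single-object environment the hypotheses of Theorem~\ref{Multi-unit characterization with TP} collapse onto the hypotheses of this theorem. Three observations suffice. First, every allocation is a bijection $N \to O$, so \BAL is automatic and imposes no restriction. Second, since each endowment is a singleton, $\min_{P_i}(\omega_i) = \omega_i$, so the \WELB condition coincides exactly with \IR. Third, \PE and \IGE are equivalent: given any Pareto improvement $\mu'$ over $\mu$, the digraph on $N$ with an arc $i \to j$ whenever $\mu'_i = \mu_j$ is a disjoint union of cycles (because $\mu$ and $\mu'$ are bijections), and each nontrivial cycle corresponds to a Pareto-improving trading cycle at $\mu$; hence ruling out Pareto-improving trading cycles rules out all Pareto improvements. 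Truncation strategies and \TP carry over verbatim across the two settings. Thus any rule $\varphi$ on the single-object environment satisfying \PE, \IR, and \TP corresponds, after lexicographic identification, to a rule on $\mathcal{L}$ satisfying \BAL, \IGE, \WELB, and \TP, and Theorem~\ref{Multi-unit characterization with TP} then yields $\varphi = \varphi^{\mathrm{TTC}}$.

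The main (and essentially only) nontrivial step is the \PE--\IGE equivalence just sketched; everything else is purely definitional. A fully self-contained alternative would mimic the minimal-counterexample argument used for Theorem~\ref{Multi-unit characterization with TP}: pick a conflict profile $P$ minimizing first the similarity $\rho(P)$ (the earliest TTC round at which $\varphi(P)$ fails to execute a round-$\rho(P)$ cycle) and then the size $s(P) = \sum_{i \in N} |\{o \in O \mid o \mathrel{R_i} \omega_i\}|$, and then derive a contradiction by letting an appropriate agent in a round-$\rho(P)$ TTC cycle truncate at her TTC outcome, invoking \IR at the modified profile to force all cycle agents to swap, and using \TP to push this conclusion back to $P$ against the violation of \PE.
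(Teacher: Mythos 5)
Your proposal is correct and is essentially the route the paper itself indicates: the paper omits the proof, noting in the footnote to Section~\ref{sec:SS} that under the lexicographic identification the result is an immediate corollary of Theorem~\ref{Multi-unit characterization with TP}, and that a direct argument would parallel that theorem's minimal-counterexample proof --- exactly the two routes you give. Your definitional checks (\BAL automatic, \WELB coinciding with \IR for singleton endowments, \PE coinciding with \IGE via the cycle decomposition of a Pareto improvement, and \TP carrying over) are all sound, so the reduction to Theorem~\ref{Multi-unit characterization with TP} goes through.
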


The proof closely parallels that of Theorem~\ref{Multi-unit characterization with TP} and is therefore omitted.

Theorem~\ref{Theorem: Single-unit characterization} refines several characterizations of $\varphi^{\mathrm{TTC}}$ in housing markets. In particular, \citet{ma1994} shows that $\varphi^{\mathrm{TTC}}$ is the unique rule satisfying \PE, \IR, and \SP, and \citet{altuntacs2023} prove that \SP can be replaced with either \DSP or the weaker ``upper invariance''.\footnote{Upper invariance requires an agent's assignment to remain unchanged when she misrepresents only the ordering of objects ranked below her assignment at the truthful profile.} In the single-object environment, upper invariance together with \IR implies \TP; hence Theorem~\ref{Theorem: Single-unit characterization} establishes the same uniqueness under weaker criteria.

The theorem also complements characterizations using relaxed efficiency requirements. For example, \citet{ekici2022} replaces \PE with ``pair efficiency'' while retaining \IR and \SP, and \cite{CHEN2025103190} further weaken \SP to upper invariance. Theorem~\ref{Theorem: Single-unit characterization} does not permit a similar refinement: pair efficiency together with \IR and \TP does not characterize $\varphi^{\mathrm{TTC}}$ (see \citet[Example~1]{CORENO2025112159}).

\section{Extension: Conditionally lexicographic preferences}\label{Results for CL preferences}
Conditionally lexicographic preferences generalize purely lexicographic preferences (\citet{booth2010learning}; see also \citet{domshlak2011,pigozzi2016}). Unlike responsive preferences, they allow the relative ranking of two objects to depend on the other objects they are obtained with: for example, an agent may prefer drinking Champagne to Bordeaux when paired with oysters, but Bordeaux to Champagne otherwise.
This flexibility accommodates complementarities among objects while retaining several appealing features of lexicographic preferences. In particular, conditionally lexicographic preferences admit a compact representation via \defterm{lexicographic preference trees (LP trees)}, which makes them attractive from an implementation perspective.\footnote{Intuitively, an LP tree is a rooted binary tree that represents conditional marginal preferences in a graphical manner. For any bundle $Y$, there is a unique root-to-leaf path that is consistent with $Y$. This path specifies the agent's preference ordering over objects conditional on receiving $Y$. See  Appendix~\ref{Appendix: LP trees} for details.} As we shall see, \PE coincides with  \IGE on this domain.

Loosely speaking, an agent has conditionally lexicographic preferences if, for any nonempty set of objects $X$ disjoint from $Y$, there is a unique object in $X$ which is the ``lexicographically best'' addition to $Y$ from $X$.

\begin{definition}\label{characterization of CL preferences}
    A preference relation $P_i$ on $2^O$ is \defterm{conditionally lexicographic} if,
    for all disjoint $X, Y \in 2^O$ with $X \neq \emptyset$, there is a unique object in $X$, denoted $x^* = \max_{P_i}(X \mid Y)$, such that
    \begin{equation*}
        \text{for each }Z \subseteq X \setminus \{x^*\} ,\quad (Y \cup x^* ) \mathrel{P_i} (Y \cup Z).
    \end{equation*}
    We call $x^* = \max_{P_i}(X \mid Y)$ agent~$i$'s most-preferred object in $X$ conditional on already having $Y$.
\end{definition}

For each $i \in N$, let $\mathcal{CL}_i$ be the set of conditionally lexicographic preferences on $2^O$, and write $\mathcal{CL} \coloneqq \prod_{i \in N}\mathcal{CL}_i$ for the \defterm{conditionally lexicographic domain}. Every $P_i \in \mathcal{CL}_i$ is monotonic. Moreover, every lexicographic preference is conditionally lexicographic, and a preference is lexicographic if and only if it is both conditionally lexicographic and responsive. Formally, for each $i \in N$, 
$\mathcal{L}_i \subseteq \mathcal{CL}_i \subseteq \mathcal{M}_i$ (with both inclusions strict if $|O| \geq 3$), and $\mathcal{L}_i = \mathcal{CL}_i \cap \mathcal{R}_i$.

\subsection{Properties: Conditional \WELB and \DSP}

In this section we look beyond marginal rules and consider rules that depend on agents' full preferences over bundles.
Among the properties we use below, the definitions of \BAL, \PE, \IGE, and \IR carry over from Section~\ref{sec:Properties} without modification. By contrast, \WELB and \DSP, which were originally defined with respect to agents' marginal preferences, must be reformulated using the conditional marginal preferences introduced below.\footnote{We do not discuss truncation-proofness on $\mathcal{CL}$ because the corresponding definition becomes rather unwieldy.}

Given $P_i \in \mathcal{CL}_i$ and a bundle $Y \subseteq O$, we define the \defterm{conditional marginal preference} $P_i(Y)|_O$ over individual objects \emph{conditional on receiving $Y$}. Formally, $P_i(Y)|_O$ is the strict linear order on $O$ such that, for all $x,y\in O$,\footnote{Equivalently, for all $x,y \in O$, $x \mathrel{P_i(Y)|_O} y \iff x\cup  (Y   \setminus \{x,y\}) \mathrel{P_i} y\cup  (Y   \setminus \{x,y\})$.}
$$
x \mathrel{P_i(Y)|_O} y \iff  (Y \cup \{x\}) \setminus \{y\} \mathrel{P_i} (Y \cup \{y\}) \setminus \{x\}.
$$

Equivalently, when $P_i$ is represented by an LP tree, $P_i(Y)|_O$ is the order in which the objects are encountered along the unique path consistent with~$Y$ (see Appendix~\ref{Appendix: LP trees}).
We define $\mathrel{R_i(Y)|_O}$ by $x \mathrel{R_i(Y)|_O} y$ if and only if ($x \mathrel{P_i(Y)|_O} y$ or $x = y$). For notational convenience, we will often write $P_i(Y)$ instead of $P_i(Y)|_O$ (and similarly $R_i(Y)$ instead of $R_i(Y)|_O$). Given a nonempty subset $X \subseteq O$, let $\max_{P_i(Y)}(X)$ and $\min_{P_i(Y)}(X)$ denote the most- and least-preferred objects in $X$ according to $P_i(Y)|_O$, respectively. Note that when $X$ and $Y$ are disjoint, $\max_{P_i(Y)}(X)$ coincides with $\max_{P_i}(X \mid Y)$.

\begin{remark} Each conditionally lexicographic $P_i$ induces a family $(P_i(Y)|_O)_{Y \subseteq O}$ of conditional marginal preferences. However, not every family of marginal preferences arises from a conditionally lexicographic preference: Definition~\ref{characterization of CL preferences} implies that the family $(P_i(Y)|_O)_{Y \subseteq O}$ must satisfy certain consistency constraints. For example, any two conditional marginal preferences $P_i(X)|_O$ and $P_i(Y)|_O$ must top-rank the same object. The LP-tree representation in Appendix~\ref{Appendix: LP trees} spells out these constraints precisely.
\end{remark}

An allocation $\mu$ satisfies the \defterm{worst-endowment lower bound} at a preference profile $P \in \mathcal{CL}$ if, for each $i \in N$, and each $o \in \mu_i$, $o \mathrel{R_i(\mu_i)} \min_{P_i(\mu_i)}(\omega_i)$.
In other words, \emph{conditional on receiving $\mu_i$}, every object assigned to agent~$i$ is at least as good, according to $P_i(\mu_i)|_O$, as her worst endowed object.\footnote{\WELB and \IR are logically independent on $\mathcal{CL}$. If one adds the auxiliary assumption that, for each agent, any bundle that fails \WELB is ``unacceptable'' (i.e., strictly worse than her endowment), then \IR implies \WELB on the resulting domain (see footnote~\ref{footnote other domain}). We do not impose this additional restriction.}

\begin{definition}
A rule $\varphi$ satisfies the \defprop{worst-endowment lower bound (\WELB)} if, for each $P \in \mathcal{CL}$, $\varphi(P)$ satisfies the worst-endowment lower bound at $P$.
\end{definition}

Drop strategies extend naturally to conditionally lexicographic preferences. An agent implements a drop strategy by dropping an object she does not own to the bottom of her conditionally lexicographic preferences. Formally, given $P_i \in \mathcal{CL}_i$, we say that $P'_i \in \mathcal{CL}_i$ is a \defterm{drop strategy} for $P_i$ if there exists $x \in O \setminus \omega_i$ such that
\begin{itemize}
\item[(i)] for each nonempty $Y \subseteq O \setminus \{x\}$, $Y \mathrel{P'_i} x$, and 
\item[(ii)] for all $Y,Z \subseteq O \setminus \{x\}$, $Y \mathrel{P'_i} Z$ if and only if $Y \mathrel{P_i} Z$.
\end{itemize}
In this case, we say that $P'_i$ is obtained from $P_i$ by dropping object $x$.\footnote{As in the case of purely lexicographic preferences (but not responsive preferences), there is exactly one $P'_i$ obtained from $P_i$ by dropping $x$.} Note that this definition is equivalent to the one in Section~\ref{sec:Properties} when $P_i$ is purely lexicographic. Let $\mathcal{D}_i(P_i)$ denote the set of all drop strategies for $P_i$.

Intuitively, if $P'_i$ is obtained from $P_i$ by dropping object $x$, then, for each $Y \subseteq O$, the conditional marginal preference $P'_i(Y)|_O$ is obtained from $P_i(Y \setminus \{x\})|_O$ by dropping object $x$. In particular, $x$ is the worst object in every conditional marginal preference associated with $P'_i$. The LP-tree representation in Appendix~\ref{Appendix: LP trees} shows that this family of modified conditional marginal preferences is induced by a unique conditionally lexicographic preference $P'_i \in \mathcal{CL}_i$.

\begin{definition}
A rule $\varphi$ satisfies \defprop{drop strategy-proofness (\DSP)} if, for each $P \in \mathcal{CL}$, each $i \in N$, and each $P'_i \in \mathcal{D}_i(P_i)$, we have $\varphi_i(P) \mathrel{R_i} \varphi_i(P'_i, P_{-i})$.
\end{definition}

\subsection{Augmented Top Trading Cycles}

The \defterm{Augmented Top Trading Cycles rule (ATTC)}, denoted $\varphi^{\mathrm{ATTC}}$, is the natural extension of $\varphi^{\mathrm{TTC}}$ from lexicographic to conditionally lexicographic preferences (\citet{fujita2018}). At each step of the ATTC algorithm, each agent points to her most-preferred unassigned object \emph{conditional on the objects already assigned to her}, and every unassigned object points to its owner. The resulting directed graph contains at least one trading cycle, and each agent involved in a trading cycle is assigned the object to which she points. All objects involved in a trading cycle are then removed. If unassigned objects remain, then the procedure continues to the next step; otherwise, it terminates with the resulting allocation.

Formally, given a preference profile $P \in \mathcal{CL}$, $\varphi^{\text{ATTC}}$ returns the allocation $\varphi^{\text{ATTC}}(P)$ determined by the following \defterm{ATTC algorithm} at $P$. We denote this specific instance as $\mathrm{ATTC}(P)$.

\noindent \rule{1\columnwidth}{1pt}
\paragraph{Algorithm:} $\text{ATTC}(P)$
\begin{description}
    \item [\textnormal{\emph{Initialization}:}] Set $\mu^{0}\coloneqq\left(\emptyset\right)_{i\in N}$
    and $O^{1}\coloneqq O$.
    \item [\textnormal{\emph{Step~$t \geq 1$}:}] ~
        Each agent~$i$ points to $\max_{P_i}(O^t \mid \mu_i^{t-1})$, and each object in $O^t$ points to its owner. Let $\mathcal{C}_t(P)$ be the resulting set of trading cycles. For each $C \in \mathcal{C}_t(P)$, assign to each $i \in N(C)$ the object she points to in $C$; this yields the partial allocation $\mu^t$ with
        $$
        \mu^t_i = 
        \begin{cases}
            \mu^{t-1}_i \cup \{\max_{P_i}(O^t \mid \mu_i^{t-1})\}, & \text{if } i \in \bigcup_{C \in \mathcal{C}_t(P)}N(C), \\
            \mu^{t-1}_i, & \text{otherwise.}
        \end{cases}
        $$
        Remove all objects involved in trading cycles to form the set $O^{t+1} = O^t \setminus \bigcup_{C \in \mathcal{C}_t(P)} O(C)$. Proceed to step~$t+1$ if $O^{t+1} \neq \emptyset$; proceed to Termination otherwise.
        
    \item [\textnormal{\emph{Termination}:}] Because at least one object is removed at each step, the algorithm terminates at some step $T$. Return the allocation $\varphi^{\text{ATTC}}\left(P\right)\coloneqq\mu^{T}$.
\end{description}
\rule{1\columnwidth}{1pt}

\subsection{Characterization of ATTC}

Although conditionally lexicographic preferences allow for complementarities, the domain still behaves like the lexicographic domain from the perspective of efficiency. In particular, ruling out Pareto-improving single-object exchanges is enough to guarantee full Pareto efficiency.

\begin{proposition}\label{efficiency equivalence on CL domain}
    On the conditionally lexicographic domain, an allocation satisfies \IGE if and only if it satisfies \PE.
\end{proposition}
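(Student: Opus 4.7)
The forward direction (\PE implies \IGE) is immediate: executing a Pareto-improving trading cycle at an allocation $\mu$ yields a new allocation that Pareto-dominates $\mu$, contradicting \PE. For the converse, my plan is to argue by contrapositive. Suppose $\mu$ is not \PE and let $\nu$ Pareto-dominate $\mu$ at $P \in \mathcal{CL}$. Set $I = \{i \in N : \mu_i \neq \nu_i\}$; by strictness of preferences, $I = \{i \in N : \nu_i \mathrel{P_i} \mu_i\}$, which is nonempty since $\nu \neq \mu$.

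The crux of the argument is the following \textbf{Key Lemma}: for each $i \in I$ there exists $o_i^{\ast} \in \nu_i \setminus \mu_i$ such that $o_i^{\ast} \mathrel{P_i(\mu_i)|_O} o'$ for every $o' \in \mu_i \setminus \nu_i$. To prove it, I would work with the LP-tree representation of $P_i$ (Appendix~\ref{Appendix: LP trees}). The paths consistent with $\mu_i$ and $\nu_i$ agree on a common prefix and first diverge at some node $v$ whose label I call $o_i^{\ast}$. Because $P_i$ is monotonic, the preferred branch at every node of the LP tree must be \emph{include the label}; combined with $\nu_i \mathrel{P_i} \mu_i$, this forces $\nu_i$'s path to take the preferred branch at $v$, so $o_i^{\ast} \in \nu_i \setminus \mu_i$. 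For any $o' \in \mu_i \setminus \nu_i$, $o'$ cannot be queried on the shared prefix (where $\mu_i$ and $\nu_i$ must answer identically), so $o'$ is queried on $\mu_i$'s path strictly after $v$. Since $P_i(\mu_i)|_O$ orders objects by the depth at which they appear on $\mu_i$'s path, this yields $o_i^{\ast} \mathrel{P_i(\mu_i)|_O} o'$.

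Granted the Key Lemma, I would construct a Pareto-improving trading cycle by a standard top-trading argument. Define $f \colon I \to I$ by letting $f(i)$ be the $\mu$-owner of $o_i^{\ast}$; since $o_i^{\ast} \in \mu_{f(i)} \setminus \nu_{f(i)}$ (using disjointness of the bundles in $\nu$ and $f(i) \neq i$), $f(i) \in I \setminus \{i\}$. Iterating $f$ on the finite set $I$ produces a cycle $(i_1, \dots, i_k)$ of length $k \geq 2$ with $f(i_\ell) = i_{\ell+1}$ (indices mod $k$). Setting $o_\ell \coloneqq o_{i_{\ell-1}}^{\ast}$ yields a sequence $(o_1, i_1, o_2, i_2, \dots, i_k, o_{k+1} = o_1)$ with distinct objects (each $o_\ell$ lies in a distinct $\nu_{i_{\ell-1}}$) and $o_\ell \in \mu_{i_\ell}$, so this is a trading cycle at $\mu$. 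For each $\ell$, $o_\ell \in \mu_{i_\ell} \setminus \nu_{i_\ell}$ (by disjointness of $\nu$ and $i_\ell \neq i_{\ell-1}$), so the Key Lemma applied to $i_\ell$ with $o' = o_\ell$ gives $o_{i_\ell}^{\ast} \mathrel{P_{i_\ell}(\mu_{i_\ell})|_O} o_\ell$, which by the definition of the conditional marginal preference is equivalent to $(\mu_{i_\ell} \cup o_{\ell+1}) \setminus o_\ell \mathrel{P_{i_\ell}} \mu_{i_\ell}$. Hence the cycle is Pareto-improving, so $\mu$ violates \IGE.

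The hardest step is the Key Lemma, and in particular the LP-tree structural fact that monotonicity pins down the preferred branch at every node to be ``include''. Without this, one would have to contend with cases where the preferred divergence forces an agent to discard rather than acquire an object, which breaks the top-trading construction; monotonicity of $\mathcal{CL}$ preferences is precisely what rules this out and puts us in the analog of the ``Case A'' direction of the lex-preference argument.
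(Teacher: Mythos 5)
Your argument is correct. The paper does not spell out a proof of this proposition (it is stated without an appendix proof, the equivalence being essentially due to \citet{fujita2018}), so there is nothing to match line by line; but your contrapositive argument is exactly the kind of cycle-construction the paper uses elsewhere (cf.\ the endgame of the proofs of Theorems~\ref{Multi-unit characterization with TP} and~\ref{characterization of ATTC}), and every step checks out. Your Key Lemma is the right crux: the first divergence vertex $v$ of $\tau_{P_i}(\mu_i)$ and $\tau_{P_i}(\nu_i)$ has its label in $\nu_i \setminus \mu_i$ because $\nu_i \mathrel{P_i} \mu_i$, every $o' \in \mu_i \setminus \nu_i$ is queried strictly after $v$ on $\tau_{P_i}(\mu_i)$, and $P_i(\mu_i)|_O$ is the depth order along that path. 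The map $f$ stays inside $I$ by disjointness of the bundles of $\nu$, the resulting cycle has length at least two, and the final translation from $o_{\ell+1} \mathrel{P_{i_\ell}(\mu_{i_\ell})|_O} o_\ell$ to $(\mu_{i_\ell} \cup o_{\ell+1}) \setminus o_\ell \mathrel{P_{i_\ell}} \mu_{i_\ell}$ works because $o_\ell \in \mu_{i_\ell}$ and $o_{\ell+1} \notin \mu_{i_\ell}$, so $(\mu_{i_\ell} \cup o_\ell) \setminus o_{\ell+1} = \mu_{i_\ell}$. Two small remarks. First, you attribute the fact that the ``in'' branch is preferred at every vertex to monotonicity; in the paper's formalism this is not a consequence of monotonicity but is the definition of $P_{\tau_i}$ ($A \mathrel{P_{\tau_i}} B$ iff $o(\tau_i(A,B)) \in A \setminus B$) --- monotonicity of $\mathcal{CL}$ preferences is the corollary, not the premise. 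Second, it is worth noting explicitly that $\nu_i \setminus \mu_i \neq \emptyset$ for each $i \in I$ (if $\nu_i \subsetneq \mu_i$ then monotonicity and strictness would give $\mu_i \mathrel{P_i} \nu_i$); your LP-tree argument delivers this automatically since the divergence label lands in $\nu_i \setminus \mu_i$, but the reader should see that the set is nonempty before an element of it is named.
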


The equivalence between \IGE and \PE implies that any myopic procedure which, at each step, greedily executes a Pareto-improving trading cycle whenever one exists must terminate at a \PE allocation on this domain. The ATTC algorithm is one such procedure.

Beyond efficiency, \citet{fujita2018} show that $\varphi^{\mathrm{ATTC}}$ is core selecting (and hence satisfies both \PE and \IR). They also prove that it is NP-hard for an agent to find a profitable manipulation, and that any successful manipulation yields only limited gains: no manipulator can obtain an object better than her most-preferred object under truthful reporting. Complementing these results, we establish that $\varphi^{\mathrm{ATTC}}$ satisfies \DSP on the conditionally lexicographic domain.

\begin{proposition}\label{ATTC is drop SP}
    On the conditionally lexicographic domain, $\varphi^{\text{ATTC}}$ satisfies \DSP.
\end{proposition}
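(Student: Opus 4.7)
The plan is to run $\mathrm{ATTC}(P)$ and $\mathrm{ATTC}(P'_i, P_{-i})$ in parallel, tracking the partial allocations $(\mu^t)$, $(\mu'^t)$ and remaining-object sets $(O^t)$, $(O'^t)$. The key structural fact about the dropped object $x \in O \setminus \omega_i$ is that $P'_i$ and $P_i$ induce identical conditional pointing behavior at any step where $i$'s truthful top-conditional choice in $O^t$ is not $x$: formally, $\max_{P'_i}(X \mid Y) = \max_{P_i}(X \setminus \{x\} \mid Y \setminus \{x\})$ for every $Y \subseteq O$ and every $X \subseteq O$ with $X \not\subseteq \{x\}$.

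Using this observation, a straightforward induction on $t$ shows that the two runs coincide (same partial allocations, same remaining objects) up to the first step $t^*$ at which $\max_{P_i}(O^{t^*} \mid \mu_i^{t^*-1}) = x$. Before $t^*$, every other agent points identically because her preferences are fixed and $\mu^{t-1} = \mu'^{t-1}$, while $i$ herself points identically by the observation above. If no such step $t^*$ ever arises, the runs are identical and $\varphi^{\mathrm{ATTC}}(P) = \varphi^{\mathrm{ATTC}}(P'_i, P_{-i})$, so the claim is trivial.

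At step $t^*$, the only arc that differs between the two directed graphs is $i$'s outgoing arrow: under truth $i$ points to $x$, while under drop $i$ points to $y = \max_{P_i}(O^{t^*} \setminus \{x\} \mid \mu_i^{t^*-1})$. In particular, trading cycles not passing through $i$ are identical at step $t^*$ in the two runs. I then split into two cases depending on whether $i$ belongs to a cycle at step $t^*$ under truth. In Case~A, $i$ receives $x$ under truth, and I couple the two executions from step $t^*$ onward to show that the sequence of objects $i$ accrues under drop is, object-by-object, dominated under $P_i(\mu_i^{t^*-1})$ by the sequence under truth; the key input is that $x \mathrel{P_i(\mu_i^{t^*-1})} o$ for every $o \in O^{t^*} \setminus \{x\}$. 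In Case~B, $i$ is not in a cycle at step $t^*$ under truth; if $x$ is in a cycle among the other agents, that cycle also forms under drop, $x$ is removed in both runs, and the induction continues at step $t^* + 1$; if $x$ is in no cycle at step $t^*$, then $x$ remains in $O^{t^*+1}$ and $i$ points to $x$ again at step $t^* + 1$, so the argument iterates at a later divergence.

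Combining these cases, the object-by-object dominance with respect to the relevant conditional marginal preferences lifts to the bundle comparison $\mu_i \mathrel{R_i} \mu'_i$ via Definition~\ref{characterization of CL preferences}. The main obstacle lies in Case~A: dropping $x$ can induce a new cycle through $i$ at step $t^*$ (giving $i$ the object $y$ immediately rather than nothing) and reshape all subsequent cycles, so one must verify that these downstream gains cannot compensate for losing $x$. The conditional lexicographic structure of $P_i$---which guarantees that adding the $P_i(\mu_i^{t^*-1})$-best available object strictly dominates any later addition built from the remaining objects---is precisely what makes this coupling go through.
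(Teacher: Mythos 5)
There is a genuine gap, and it sits exactly where the paper's proof takes evasive action. You run the two \emph{original} ATTC executions in parallel and declare the first divergence to be the first step $t^*$ at which agent~$i$'s truthful conditional top is $x$. The problem is your Case~B: when $i$ is not in a cycle under truth at step $t^*$, she may nevertheless \emph{be} in a cycle under the drop report (she points to $y \neq x$, and there may be a path from $y$ back to one of her objects even though there is none from $x$). In that event $i$ receives $y$ at step $t^*$ under the drop but receives nothing under truth, the partial allocations $\mu^{t^*}_i$ and $\mu'^{t^*}_i$ desynchronize, and neither of your two sub-cases (``the cycle containing $x$ forms in both runs, continue at $t^*+1$'' / ``$i$ points to $x$ again at $t^*+1$, iterate'') applies. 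This is precisely the timing mismatch the paper flags in a footnote; its proof avoids it by replacing $\mathrm{ATTC}$ with an outcome-equivalent variant $\mathrm{ATTC}^i$ that \emph{defers} every cycle through $i$ until it is the unique remaining cycle. With that modification the two runs provably execute identical cycle sets up to the first step at which both graphs contain only a single cycle, each passing through $i$; at that step $i$ points to $x$ under truth (the conditional marginal orders agree off $\{x\}$, so a disagreement forces $x$), so $x \in \varphi^{\mathrm{ATTC}}_i(P)$, while \WELB forces $x \notin \varphi^{\mathrm{ATTC}}_i(\tilde P)$.

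Your Case~A also overreaches: the ``object-by-object dominance under $P_i(\mu_i^{t^*-1})$'' of the subsequent acquisitions is asserted, not proved, and it is doubtful --- after the divergence the two runs remove different objects and $i$'s later pointing is conditioned on different partial bundles, so there is no reason her drop-run acquisitions are pointwise dominated in a single fixed marginal order; moreover such pointwise dominance would not lift to a bundle comparison on $\mathcal{CL}$ the way it does for responsive preferences. Fortunately none of this is needed. Once you know (i) the two assignments share the prefix $\mu^{t-1}_i$ with everything else drawn from the common pool $O^t$, (ii) $x = \max_{P_i}(O^t \mid \mu^{t-1}_i)$, and (iii) $x \in \varphi^{\mathrm{ATTC}}_i(P) \setminus \varphi^{\mathrm{ATTC}}_i(\tilde P)$, Definition~\ref{characterization of CL preferences} (via the LP-tree representation) immediately yields $\varphi^{\mathrm{ATTC}}_i(P) \mathrel{P_i} \varphi^{\mathrm{ATTC}}_i(\tilde P)$, with no downstream coupling whatsoever. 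So the right fix is to drop the coupling, adopt the deferred-execution synchronization, and invoke \WELB plus the conditional-lexicographic definition at the single point of divergence.
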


Proposition~\ref{ATTC is drop SP} extends a result of \citet{altuntacs2023}, which shows that $\varphi^{\mathrm{TTC}}$ satisfies a strong drop strategy-proofness requirement on the lexicographic domain---one that also protects against manipulations where agents drop objects they own. We work with the weaker \DSP (that considers only drops outside the endowment), which suffices for our characterization result. Moreover, $\varphi^{\mathrm{ATTC}}$ fails the stronger variant on the conditionally lexicographic domain.

\begin{theorem}\label{characterization of ATTC}
On the conditionally lexicographic domain, a rule satisfies \BAL, \IGE, \WELB, and \DSP if and only if it equals $\varphi^{\text{ATTC}}$.
\end{theorem}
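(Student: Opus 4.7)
The plan is to establish both directions, with necessity collecting results already available and sufficiency adapting the minimal counterexample technique introduced for Theorem~\ref{Multi-unit characterization with TP}.

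For necessity, I would verify each property in turn. The rule $\varphi^{\text{ATTC}}$ satisfies \BAL via a counting argument on the algorithm: whenever an object $o \in \omega_i$ is removed at some step, its owner $i$ lies on the removing trading cycle and simultaneously receives exactly one object, so $i$ ultimately receives $|\omega_i|$ objects. It satisfies \WELB because, at any step $t$ with $\omega_i \cap O^t \neq \emptyset$, agent~$i$ points to $\max_{P_i}(O^t \mid \mu_i^{t-1})$, which is weakly preferred under $P_i(\mu_i^{t-1})$ to every remaining endowed object---and hence to $\min_{P_i(\mu_i^{t-1})}(\omega_i)$---so every object assigned to $i$ is at least as good, conditional on her accumulated bundle, as her worst remaining endowed object. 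It satisfies \IGE by Proposition~\ref{efficiency equivalence on CL domain} combined with the fact that $\varphi^{\text{ATTC}}$ is core-selecting (\citet{fujita2018}) and hence \PE. Finally, \DSP is Proposition~\ref{ATTC is drop SP}.

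For sufficiency, suppose $\varphi$ satisfies the four properties but differs from $\varphi^{\text{ATTC}}$. Call $P \in \mathcal{CL}$ a \emph{conflict profile} if $\varphi(P) \neq \varphi^{\text{ATTC}}(P)$. For each conflict profile, let $\rho(P)$ be the earliest step of $\text{ATTC}(P)$ at which $\varphi(P)$ fails to execute all cycles in $\mathcal{C}_t(P)$, and define a size function adapted to conditional marginal preferences, e.g.,
\[
s(P) = \sum_{i \in N} \bigl|\{o \in O : o \mathrel{R_i(\omega_i)} \min_{P_i(\omega_i)}(\omega_i)\}\bigr|.
\]
Among conflict profiles minimizing $\rho$, I would select one, $P^*$, that further minimizes $s$. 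By minimality of $\rho$, $\varphi(P^*)$ and $\varphi^{\text{ATTC}}(P^*)$ agree on the assignments made in all steps prior to $t^* \coloneqq \rho(P^*)$. Pick a trading cycle $C=(o_1,i_1,\dots,o_k,i_k,o_{k+1}=o_1) \in \mathcal{C}_{t^*}(P^*)$ not fully executed by $\varphi(P^*)$. The aim is to show that $C$---or a cycle derived from it using the minimality in $s$ and \DSP to rule out beneficial drops---must be a Pareto-improving trading cycle at $\varphi(P^*)$ (in the sense of Section~\ref{sec:Efficiency}), contradicting \IGE.

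The main obstacle will be handling the interaction between drop strategies and conditional marginal preferences. Unlike in the lexicographic case, when agent~$i$ drops $x \notin \omega_i$, every conditional marginal preference $P_i(Y)$ with $x \notin Y$ places $x$ last, so a single drop cascades through the entire family of conditionals and can reshape the cycles formed at every step of ATTC, not just at $t^*$. One must ensure that such manipulations either (i) produce a conflict profile of strictly smaller $s$, contradicting minimality, or (ii) produce a different assignment for the manipulator, which \DSP rules out. Coordinating these reductions to pin down $C$ as Pareto-improving at $\varphi(P^*)$ is harder than in the lexicographic setting, because the bundle-level comparisons needed for \IGE at a conditionally lexicographic $P_i$ depend on the agent's accumulated bundle under $\varphi(P^*)$---not only on her top-ranked object---and this is precisely where Proposition~\ref{efficiency equivalence on CL domain} becomes essential, as it allows any single-object Pareto improvement detected locally to be promoted to a violation of full \PE on $\mathcal{CL}$.
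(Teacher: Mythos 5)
Your overall architecture matches the paper's: necessity is assembled from \citet{fujita2018} (core selection, hence \PE and \IGE), Proposition~\ref{ATTC is drop SP} for \DSP, and direct inspection of the algorithm for \BAL and \WELB; sufficiency is a minimal-counterexample argument that first minimizes the similarity index $\rho$ and then a size function $s$, picks an unexecuted cycle $C \in \mathcal{C}_{t}(P)$, and shows $C$ is Pareto improving at $\varphi(P)$, contradicting \IGE. However, there is a genuine gap in your choice of size function. You propose $s(P) = \sum_{i \in N} \lvert\{o \in O : o \mathrel{R_i(\omega_i)} \min_{P_i(\omega_i)}(\omega_i)\}\rvert$, conditioning only on $Y = \omega_i$. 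The crucial step of the argument (the analogue of Claim~\ref{claim22}) finds an object $o' \in \varphi_{i_k}(P) \cap O^t$ with $o' \notin \omega_{i_k}$ and invokes \WELB, which on $\mathcal{CL}$ is stated \emph{conditional on the assigned bundle}: it yields $o' \mathrel{P_{i_k}(\varphi_{i_k}(P))} \min_{P_{i_k}(\varphi_{i_k}(P))}(\omega_{i_k})$. After agent~$i_k$ drops $o'$, your $s$ need not strictly decrease, because nothing guarantees that $o'$ was ranked above $\min_{P_{i_k}(\omega_{i_k})}(\omega_{i_k})$ in the conditional marginal preference given $\omega_{i_k}$; the two conditionals can order $o'$ on opposite sides of the worst endowed object. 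Without a strict decrease you cannot conclude $\rho(P') > \rho(P)$, so you cannot establish that $\varphi(P')$ executes the cycle handing $i_k$ the object $o_1$, and the \DSP contradiction never materializes. The paper's fix is to sum over \emph{all} conditioning bundles, $s(P) = \sum_{i\in N}\sum_{Y\in 2^{O}}\lvert\{o \in O \mid o \mathrel{R_i(Y)} \min_{P_i(Y)}(\omega_i)\}\rvert$: dropping $o'$ weakly decreases every term and strictly decreases the term at $Y = \varphi_{i_k}(P)$, which is exactly where \WELB bites. This is the one non-obvious adaptation from the lexicographic proof, and your formula misses it.

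Two smaller points. First, your closing claim that Proposition~\ref{efficiency equivalence on CL domain} is ``essential'' to promote the single-object improvement to a \PE violation is a misreading: the assumed axiom is \IGE itself, so exhibiting a Pareto-improving trading cycle at $\varphi(P)$ is already the contradiction; Proposition~\ref{efficiency equivalence on CL domain} is used only to restate the theorem in terms of \PE, not in its proof. Second, you correctly identify that a single drop cascades through the entire family of conditional marginal preferences, but the resolution is not just the size bookkeeping: one must also check that $P'_{i_k}(\varphi_{i_k}(P)\setminus O^t)$ and $P_{i_k}(\varphi_{i_k}(P)\setminus O^t)$ agree on all objects at least as good as $o_1$, so that $\mathrm{ATTC}(P)$ and $\mathrm{ATTC}(P')$ coincide through step $t$; this synchronization step should be made explicit before \DSP can be applied.
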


In light of Proposition~\ref{efficiency equivalence on CL domain}, Theorem~\ref{characterization of ATTC} says that $\varphi^{\mathrm{ATTC}}$ is the unique rule satisfying \BAL, \PE, \WELB, and \DSP. Thus, the theorem effectively extends Theorem~\ref{Multi-unit characterization with drop SP} from the lexicographic domain to the broader conditionally lexicographic domain, and it addresses open questions posed by \citet[p.~167]{altuntacs2023} and \citet[p.~531]{fujita2018}.

Table~\ref{TTC table} collects the key properties that $\varphi^{\mathrm{TTC}}$ and $\varphi^{\mathrm{ATTC}}$ enjoy on the three domains we consider.

\begin{table}[ht]
\caption{Properties of TTC and ATTC on various domains}\label{TTC table}
\centering
\renewcommand{\arraystretch}{1.15}
\begin{tabular}{l*{7}{c}}
\hline
Domain (Rule) & \BAL & \WELB & \IR & \PE & \IGE & \TP & \DSP \\
\hline
$\mathcal{L}$ (TTC)   & \checkmark & \checkmark & \checkmark & \checkmark & \checkmark & \checkmark & \checkmark \\
$\mathcal{R}$ (TTC)   & \checkmark & \checkmark & \checkmark &            & \checkmark & \checkmark &            \\
$\mathcal{CL}$ (ATTC) & \checkmark & \checkmark & \checkmark & \checkmark & \checkmark & ---        & \checkmark \\
\hline
\end{tabular}

\vspace{0.5em}
\begin{minipage}{0.9\textwidth}
    \footnotesize
    \textbf{Notes:} A \checkmark\ indicates that the rule satisfies the corresponding property on the given domain. 
    On $\mathcal{L}$ and $\mathcal{CL}$, \PE and \IGE coincide. We do not study \TP on $\mathcal{CL}$.
    \hfill $\diamond$
\end{minipage}
\end{table}

We conclude this section by showing the difficulty in obtaining positive results in more general preference domains. On any domain of monotonic preferences that strictly includes the conditionally lexicographic domain, the equivalence between \PE and \IGE breaks down. Thus, on larger domains, one must look beyond single-object exchanges in order to achieve efficiency.\footnote{If we broaden the definition of allocations to allow agents to be assigned empty bundles, then the conditionally lexicographic domain is a maximal domain \emph{among all strict preferences} on which \PE and \IGE coincide.}

\begin{proposition}\label{maximal domain 1}
Within the domain of monotonic preferences, the conditionally lexicographic domain is a maximal domain on which \IGE and \PE are equivalent.
\newline
(More precisely, if $\mathcal{CL} \subsetneq \mathcal{P} \subseteq \mathcal{M}$, then there is a set $N$ of agents, a preference profile $P \in \mathcal{P}$, and an allocation $\mu \in \mathcal{A}$ such that $\mu$ satisfies \IGE but not \PE at $P$.)
\end{proposition}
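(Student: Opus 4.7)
The plan is to exploit the hypothesis $\mathcal{CL} \subsetneq \mathcal{P}$ to extract a monotonic but non-conditionally-lexicographic preference $P^*$ from some agent's admissible set, and then to construct a two-agent instance in which $P^*$ drives a Pareto improvement that requires the simultaneous exchange of at least two objects. By Definition~\ref{characterization of CL preferences}, the failure of conditional lexicographicity yields disjoint $X, Y \subseteq O$ with $X \neq \emptyset$ such that, for every $x \in X$, some $Z \subseteq X \setminus \{x\}$ satisfies $(Y \cup Z) \mathrel{P^*} (Y \cup \{x\})$. A short case analysis shows that $|X| \geq 3$: if $|X| \leq 2$, then either $Z = \emptyset$ would contradict monotonicity, or the two resulting strict comparisons would produce an impossible cycle of the form $(Y \cup \{x'\}) \mathrel{P^*} (Y \cup \{x''\}) \mathrel{P^*} (Y \cup \{x'\})$.

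I would then choose $x_1 \in X$ to be the $P^*$-maximizer among the singleton extensions $\{Y \cup \{x\} : x \in X\}$. Applying the non-CL witness to $x = x_1$ produces a set $Z \subseteq X \setminus \{x_1\}$ with $(Y \cup Z) \mathrel{P^*} (Y \cup \{x_1\})$; monotonicity rules out $Z = \emptyset$, while $|Z| = 1$ would directly contradict the maximality of $x_1$, so $|Z| \geq 2$. With $x_1$ and $Z$ in hand, I construct the instance $N \coloneqq \{1, 2\}$, $O \coloneqq Y \cup X$, $\omega_1 \coloneqq Y \cup \{x_1\}$, and $\omega_2 \coloneqq X \setminus \{x_1\}$, letting agent~$1$ have preference $P_1 \coloneqq P^*$ and agent~$2$ have a lexicographic preference $P_2 \in \mathcal{L}_2 \subseteq \mathcal{CL}_2 \subseteq \mathcal{P}_2$ whose marginal preference ranks $x_1$ first, then the remaining objects of $X$ in any order, and finally the objects of $Y$ in any order.

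The target is to show that $\mu \coloneqq \omega$ satisfies \IGE but not \PE at $P = (P_1, P_2)$. For \PE, $\mu$ is Pareto dominated by $\mu^* \coloneqq (Y \cup Z,\ X \setminus Z)$: agent~$1$ strictly prefers $\mu^*_1 = Y \cup Z$ to $\mu_1 = Y \cup \{x_1\}$ by the choice of $Z$, and agent~$2$ strictly prefers $\mu^*_2 = X \setminus Z$ to $\mu_2 = X \setminus \{x_1\}$ because $x_1 \in \mu^*_2 \setminus \mu_2$ is the $P_2$-best element of the symmetric difference $\{x_1\} \cup Z$. The main obstacle is verifying \IGE, which here reduces to ruling out single-object exchanges between the two agents. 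If agent~$1$ surrenders $x_1$ in exchange for some $o_2 \in X \setminus \{x_1\}$, her new bundle $Y \cup \{o_2\}$ is strictly worse than $Y \cup \{x_1\}$ by the maximality of $x_1$; and if agent~$1$ instead surrenders some $y \in Y$ in exchange for $o_2 \in X \setminus \{x_1\}$, then agent~$2$ gives up an $X$-object for a $Y$-object and is strictly worse off under her lexicographic preference, since every element of $X$ beats every element of $Y$ in her marginal. Hence no single-object exchange is Pareto improving, so $\mu$ satisfies \IGE but not \PE, completing the argument.
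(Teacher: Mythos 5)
Your argument follows essentially the same route as the paper's proof: extract a monotonic, non-conditionally-lexicographic witness $P^*$ with disjoint sets $X,Y$, observe $|X|\geq 3$, pick the best singleton extension $x_1$ (the paper's $x^*$), pair agent~$1$ with a lexicographic partner who top-ranks $x_1$ and ranks all of $X$ above all of $Y$, and check that $(Y\cup\{x_1\},\,X\setminus\{x_1\})$ satisfies \IGE yet is Pareto dominated. Your \IGE verification and your choice of dominating allocation $\mu^*=(Y\cup Z,\,X\setminus Z)$ (the paper instead uses $(Y\cup(X\setminus x_1),\,\{x_1\})$, an immaterial difference) are both correct.

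The one genuine gap is the step ``$O\coloneqq Y\cup X$.'' The object set $O$ is a primitive of the model and cannot be redefined as part of the counterexample: the domain $\mathcal{P}$ consists of preferences over $2^O$ for the original $O$, the witness $P^*\in\mathcal{P}_1$ is such a preference, and every allocation must partition all of $O$ into nonempty bundles. The failure of conditional lexicographicity only guarantees that $X$ and $Y$ are disjoint subsets of $O$; in general $\overline{O}\coloneqq O\setminus(X\cup Y)$ is nonempty, and your two-agent instance leaves those objects unallocated (and restricting $P^*$ to $2^{X\cup Y}$ would take you outside $\mathcal{P}_1$). The paper closes exactly this case by adding a third agent endowed with $\overline{O}$ whose lexicographic preference ranks every object of $\overline{O}$ above everything else: $\overline{O}$ is then her most-preferred $|\overline{O}|$-subset, so she is strictly harmed by any single-object exchange, no Pareto-improving trading cycle can involve her, and the two-agent argument applies unchanged to agents $1$ and $2$. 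With that case added, your proof is complete.
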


\section{Related literature and discussion}\label{related literature}

Since its introduction for housing markets in the seminal paper by \citet{shapley1974}, TTC has become the subject of a large body of literature; see \citet{morrill2024top} for a survey. For housing markets with strict preferences, TTC is the canonical rule: it selects the unique core allocation (\citet{shapley1974,roth1977}), it is group strategy-proof (\citet{roth1982,bird1984}), and it is the unique rule satisfying \PE, \IR, and \SP (\citet{ma1994}). Theorem~\ref{Theorem: Single-unit characterization} strengthens \citeauthor{ma1994}'s (\citeyear{ma1994}) characterization by replacing \SP with the weaker incentive requirement \TP, leaving little scope for alternative rules in housing markets; see Section~\ref{sec:SS} for further discussion.

Early work on multi-object reallocation consists mostly of negative results. On general preference domains, \PE, \IR, and \SP are incompatible once some agent owns more than one object (\citet{sonmez1999}), and the impossibility persists even under lexicographic preferences (\citet{todo2014}). Moreover, the core can be empty even under additive (thus responsive) preferences (\citet{konishi2001}).

Recent studies obtain positive findings under restricted preference domains. For instance,  \PE, \IR, and \SP are compatible when agents' bundle preferences are based on ``dichotomous'' (\citet{andersson2021organizing}) or  ``trichotomous'' (\citet{manjunath2021,manjunath2024}) marginal preferences over objects (see also \citet{han2024blood}). In addition, under conditionally lexicographic preferences the core is guaranteed to be nonempty, and ATTC selects an allocation in the core (\citet{fujita2018}). Since the core need not be single-valued, our characterizations under (conditionally) lexicographic preferences provide a normative justification for core selection via (augmented) TTC (see Theorems~\ref{Multi-unit characterization with TP}, \ref{Multi-unit characterization with drop SP}, and~\ref{characterization of ATTC}).

In the closest paper to ours, \citet{altuntacs2023} consider the same model but focus on the lexicographic domain. They show that TTC is characterized by \PE, \DSP, and a property termed the \defterm{strong endowment lower bound}, which requires each agent's assignment to ``pairwise dominate'' her endowment according to her marginal preferences.\footnote{Formally, an allocation $\mu$ satisfies the \defterm{strong endowment lower bound} at $P$ if, for each agent $i$, there is a bijection $f_i : \omega_i \to \mu_i$ such that, for each $ o \in \omega_i$, $f_i(o) \mathrel{R_i} o$.} The strong endowment lower bound implies \BAL and \WELB on any preference domain, and under responsive preferences it also implies \IR (and \citeauthor{papai2003}'s (\citeyear{papai2003}) ``strong individual rationality''). Consequently, we establish the same uniqueness under substantially weaker criteria---replacing their strong endowment lower bound with \BAL and \WELB (Theorems~\ref{Multi-unit characterization with TP} and~\ref{Multi-unit characterization with drop SP}), and substituting \DSP with \TP (Theorem~\ref{Multi-unit characterization with TP}). We also obtain characterizations on richer domains---namely, the responsive domain (where TTC fails \PE and \DSP but satisfies the meaningful compromises \IGE and \TP) and the conditionally lexicographic domain (where ATTC satisfies \PE, \IR, and \DSP)---addressing open questions posed by \citet[p.~167]{altuntacs2023}.

We prove the uniqueness of TTC with a minimal-counterexample argument, using a novel minimality condition that combines the size criterion of \citet{sethuraman2016} and the similarity criterion of \citet{ekici2022}. Our proofs of Theorems~\ref{Multi-unit characterization with TP} and~\ref{characterization of ATTC} (and the omitted proof of Theorem~\ref{Theorem: Single-unit characterization}) are minimal in the sense that they invoke each axiom exactly once, further illustrating how the general approach of \citet{sethuraman2016} and \citet{ekici2022} can be quite fruitful when paired with a well-chosen minimality criterion. Our proof of Proposition~\ref{ATTC is drop SP} is based on a modified ATTC algorithm that is outcome-equivalent to the original ATTC algorithm; a similar modification is used by \citet{gonczarowski2023} to ``expose'' the strategy-proofness of TTC in single-object environments.

Related studies consider adaptations of TTC to multi-object reallocation problems with  additional structure. \citet{biro2022,biro2022serial} consider a multiple-copy Shapley-Scarf model, where each agent owns multiple copies of a homogeneous, agent-specific object (equivalently, every agent is indifferent between all objects with the same initial owner). Focusing on marginal rules for responsive preferences, \citet{biro2022} characterize a variant of TTC using subset-drop strategy-proofness, and they determine the capacity configurations under which this variant satisfies \PE, \IR, and \SP. Similarly, \citet{feng2022} characterize another variant of TTC for \citeauthor{moulin1995}'s (\citeyear{moulin1995}) multiple-type housing markets, where objects are partitioned into types and each agent consumes exactly one object of each type.


Our characterizations of TTC under responsive preferences complement characterizations of strategy-proof alternatives on the responsive domain. Strategy-proofness sharply restricts the admissible rules, leading to rather extreme rules under mild auxiliary axioms. \defterm{Sequential Dictatorships} are characterized by \SP, \PE, and ``non-bossiness''  (e.g., \citet{papai2001,ehlers2003scw,hatfield2009strategy,monte2015}), whereas \defterm{Segmented Trading Cycles rules} are characterized by \SP, strong individual rationality, non-bossiness, and ``trade sovereignty'' (\citet{papai2003,papai2007}). Our characterizations identify TTC as the sensible middle ground: though it fails two of the three ideals on the responsive domain (\PE and \SP), it delivers meaningful guarantees along all three dimensions (\IGE, \IR/\WELB, and \TP).

We relax \SP by ruling out restricted classes of simple manipulations: subset-drop, drop, and truncation strategies. This focus is consistent with computational results emphasizing the difficulty of manipulating TTC in practice (e.g., \citet{fujita2018,phan2022}). Drop strategies and their variants are studied by \citet{altuntacs2023} for multi-object reallocation under lexicographic preferences. 
Truncation has roots in the literature on matching markets, where it corresponds to shortening a list of acceptable partners (\citet{mongell1991sorority,roth1991incentives,roth1999truncation,ehlers2008}). Truncation strategies also appear in matching with contracts (e.g., \citet{afacan2016characterizations,hatfield2021stability}) and school-choice problems (\citet{shirakawa2024simple}), though our notion is closest to properties studied by \citet{kojima2013efficient} for multi-object assignment and \citet{biro2022,biro2022serial} for multi-object reallocation.

Finally, when relaxing marginality and considering more complex reporting languages, competitive-equilibrium approaches become natural benchmarks. Existence issues are typically avoided either by seeking equilibria in pseudomarkets for probability shares (e.g., \citet{hylland1979,echenique2021fairness,echenique2023balanced}) or by accepting only approximate market clearing (e.g., \citet{budish2011combinatorial,nguyen2022near,jantschgi2025competitive}). These procedures offer compelling efficiency and fairness properties on general preference domains, but they typically suffer from onerous reporting requirements and considerable computational complexity. Our study of ATTC under conditionally lexicographic preferences provides a practical alternative: its reporting language comprises LP-trees which are expressive but compact, and it delivers strong welfare guarantees using a procedure that is computationally tractable.

\newpage
\appendix

\section{Lexicographic preference trees}\label{Appendix: LP trees}

This appendix introduces lexicographic preference trees, which provide a compact representation of conditionally lexicographic preferences.

\begin{definition}
\label{LP definition}
A \defterm{lexicographic preference tree (LP tree)} on a nonempty subset $X \subseteq O$ is a directed binary tree $\tau_{i}$ such that
\begin{enumerate}
    \item each vertex $v$ is labeled with an object $o(v) \in X$.
    \item every object in $X$ appears exactly once on any path from the root to a leaf.
    \item every internal (non-leaf) vertex $v$ has two outgoing edges: (i) an ``in edge'' labeled $o(v)$, and (ii) a ``not-in edge'' labeled $\neg o(v)$.
\end{enumerate}
An LP tree on $O$ is referred to more succinctly as an \defterm{LP tree}.
\end{definition}

Given an LP tree $\tau_i$ and a bundle $X\subseteq O$, let $\tau_{i}\left(X\right) = (v_1, v_2, \dots, v_{|O|})$ be the unique root-to-leaf path in $\tau_i$ containing only edges consistent with $X$: for each $k = 1,\dots,|O|-1$, traverse the ``in edge'' from $v_k$ if $o(v_k) \in X$, and the ``not-in edge'' from $v_k$ if $o(v_k) \notin X$.
Given bundles
$A$ and $B$ with $A \neq B$, let $\tau_i(A,B)$ denote the last vertex common to both $\tau_{i}\left(A\right)$ and $\tau_{i}\left(B\right)$. Equivalently,  $\tau_{i}\left(A,B\right)$ is
the first vertex $v$ on $\tau_{i}\left(A\right)$ and
$\tau_{i}\left(B\right)$ with $o\left(v\right)\in (A \setminus B) \cup (B \setminus A)$. Figures~\ref{fig:LP_tree1} and \ref{fig:LP_tree_lexicographic1} provide a graphical representation of two LP trees on $O = \{a, b, c, d\}$, together with the paths consistent with $\{a,b \}$ and $\{a, c\}$.

\begin{definition}
The preference relation $P_{\tau_{i}}$ associated with an LP tree
$\tau_{i}$ on $O$ is defined by: 
\begin{enumerate}
\item for all $A,B\subseteq O$ with $A\neq B$, $\left[A \mathrel{P_{\tau_i}} B \iff o\left(\tau_i\left(A,B\right)\right)\in A\setminus B\right]$.
\item for all $A,B\subseteq O$, $\left[A \mathrel{R_{\tau_i}} B \iff\left(A=B\text{ or }A \mathrel{P_{\tau_i}}B\right)\right]$.
\end{enumerate}
\end{definition}

\begin{figure}[htb]
\centering
\begin{minipage}{0.46\textwidth}
    \centering
    \includegraphics[width=\textwidth]{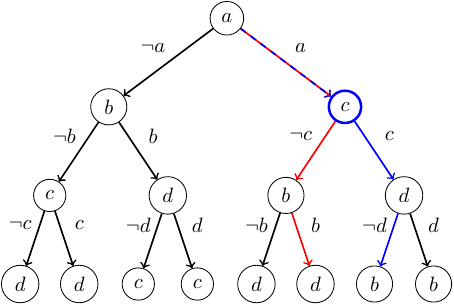}
    \caption{An LP tree $\tau_i$}
    \label{fig:LP_tree1}
\end{minipage}\hfill
\begin{minipage}{0.46\textwidth}
    \centering
    \includegraphics[width=\textwidth]{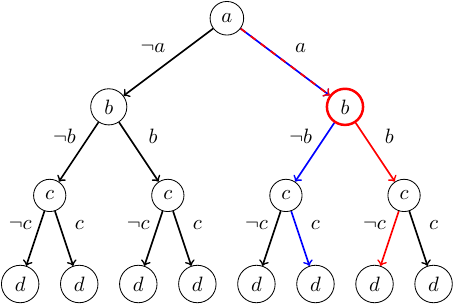}
    \caption{An LP tree $\tau^*_i$}
    \label{fig:LP_tree_lexicographic1}
\end{minipage}
\vspace{1em} 

\begin{minipage}{\textwidth}
    \small \textbf{Notes:} In both figures, the paths corresponding to the bundles $\{a,b\}$ and $\{a,c\}$ are highlighted in \textcolor{red}{red} and \textcolor{blue}{blue}, respectively. In Figure~\ref{fig:LP_tree1}, the last common vertex, $\tau_i(\{a, b\}, \{a, c\})$, is highlighted in \textcolor{blue}{blue}. Because $o(\tau_i(\{a, b\}, \{a, c\})) = c$ belongs to $\{a,c\} \setminus \{a,b\}$, we have $\{a,c\} \mathrel{P_{\tau_i}} \{a,b\}$. In Figure~\ref{fig:LP_tree_lexicographic1}, the last common vertex, $\tau^*_i(\{a, b\}, \{a, c\})$, is highlighted in \textcolor{red}{red}. Because $o(\tau^*_i(\{a, b\}, \{a, c\})) = b$ belongs to $\{a,b\} \setminus \{a,c\}$, we have $\{a,b\} \mathrel{P_{\tau^*_i}} \{a, c\}$. Since all paths in $\tau^*_i$ use the same object order, $P_{\tau^*_i}$ is purely lexicographic. \hfill $\diamond$
    
\end{minipage}
\end{figure}

A preference relation $P_{i}$ on $2^O$ is conditionally
lexicographic if and only if there exists an LP tree $\tau_{i}$ (on $O$) such that
$P_{i}=P_{\tau_{i}}$. In fact, there is a one-to-one correspondence between LP trees and conditionally lexicographic preferences on $O$, and we denote the unique LP tree associated with $P_i$ by $\tau_{P_{i}}$.
A lexicographic preference corresponds to an LP tree where all root-to-leaf paths use the same object order (i.e., all vertices at the same depth are labeled with the same object).

\paragraph{Worst-endowment lower bound.}
Given an LP tree $\tau_i$ and a vertex $v$ of $\tau_{P_i}$,
let $a\left(v\right)$ denote the set of objects labeling the ancestors
of $v$, including $v$ itself. That is, if $\left(v_{1},v_{2},\dots,v_{k}=v\right)$ is the
unique path from the root of $\tau_i$ to $v$, then $a\left(v\right)=\left\{ o\left(v_{1}\right),o\left(v_{2}\right),\dots,o\left(v_{k}\right)\right\} $.
Given a preference relation $P_i \in \mathcal{CL}_i$ and a bundle $X\in2^{O}$,
define $w_{P_i}(\omega_i \mid X)$ as the last vertex on the path $\tau_{P_i}(X)$ whose label belongs to $\omega_i$. Equivalently, $w_{P_{i}}\left(\omega_{i}\mid X\right)$ is the unique vertex
on $\tau_{P_{i}}\left(X\right)$ such that $o\left(w_{P_{i}}\left(\omega_{i}\mid X\right)\right)\in\omega_{i}$
and $\omega_{i}\subseteq a\left(w_{P_{i}}\left(\omega_{i}\mid X\right)\right)$. We illustrate this construction graphically in figures~\ref{fig:LP_tree1_WELB} and \ref{fig:LP_tree_lexicographic1_WELB}.

An allocation $\mu$ satisfies the \defprop{worst-endowment lower bound}
at a preference profile $P \in \mathcal{CL}$ if, for each $i\in N$, $\mu_{i}\subseteq a\left(w_{P_{i}}\left(\omega_{i}\mid\mu_{i}\right)\right)$.
This means every object in $\mu_{i}$ labels a vertex on $\tau_{P_{i}}\left(\mu_{i}\right)$
that appears before (or equals) vertex $w_{P_{i}}\left(\omega_{i}\mid\mu_{i}\right)$.
Intuitively, conditional on receiving $\mu_i$, no object assigned to agent~$i$ is worse than every object in her endowment $\omega_{i}$. When agents have purely lexicographic preferences, this definition aligns with the original \WELB.

\begin{figure}[htb]
\centering
\begin{minipage}{0.46\textwidth}
    \centering
    \includegraphics[width=\textwidth]{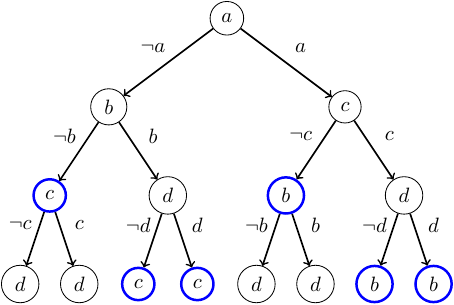}
    \caption{The LP tree $\tau_i$}
    \label{fig:LP_tree1_WELB}
\end{minipage}\hfill
\begin{minipage}{0.46\textwidth}
    \centering
    \includegraphics[width=\textwidth]{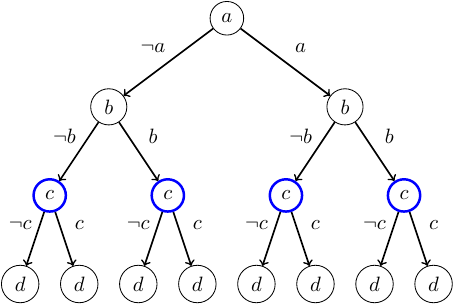}
    \caption{The LP tree $\tau^*_i$}
    \label{fig:LP_tree_lexicographic1_WELB}
\end{minipage}
\vspace{1em} 

\begin{minipage}{\textwidth}
    \small \textbf{Notes:} Figures~\ref{fig:LP_tree1_WELB} and \ref{fig:LP_tree_lexicographic1_WELB} illustrate the construction of $w_{P_{i}}(\omega_{i} \mid X)$. Suppose agent~$i$'s endowment is $\omega_i = \{b, c\}$. In both figures, the vertices corresponding to $w_{P_i}(\omega_i \mid X)$ for various $X  \in 2^O$ are highlighted in \textcolor{blue}{blue}. Notice that on any path from the root to a leaf, there is exactly one such vertex. Under $\tau_i$, \WELB precludes assigning agent~$i$ object $d$ together with any of the bundles $\emptyset, \{a\}, \{a,b\}$, or $\{c\}$; that is, she does not receive a bundle in $\{\{d\}, \{a,d\}, \{a,b,d\}, \{c,d\}\}$. Under $\tau^*_i$, \WELB precludes any bundle containing object $d$. \hfill $\diamond$ 
\end{minipage}
\end{figure}
\paragraph{Drop strategies.}

Given $P_i \in \mathcal{CL}_i$, let $P'_i$ be obtained from $P_i$ by dropping object $x$. Then $P'_i$ is represented by the LP tree $\tau_{P'_i}$, obtained from $\tau_{P_i}$ by moving $x$ to the bottom of every root-to-leaf path. Formally, let $\tau_{P_i}^{-x}$ denote the LP tree on $O \setminus \{x\}$ representing the restriction of $P_i$ to subsets of $O \setminus \{x\}$.\footnote{We can construct $\tau_{P_i}^{-x}$ as follows. For each vertex $v$ of $\tau_{P_i}$, let $T_v$ denote the maximal subtree of $\tau_{P_i}$ consisting of a vertex $v$ of $\tau_{P_i}$ together with all of its successors, and let $v'$ be the child of $v$ whose incoming edge $(v, v')$ is labeled with $\neg o(v)$. For each vertex $v$ of $\tau_{P_i}$ such that $o(v) = x$, simply replace $T_v$ with the subtree $T_{v'}$ (or the empty tree if $v$ is a leaf of $\tau_{P_i}$).} To construct $\tau_{P'_i}$ from $\tau_{P_i}^{-x}$, at each leaf of $\tau_{P_i}^{-x}$ append two child nodes labeled with $x$, connected by an ``in edge'' and a ``not-in edge'' accordingly. This modification ensures that $x$ is the least desirable addition to any bundle. Figures~\ref{fig:LP_tree_dropStrategy_color1} and \ref{fig:LP_tree_dropStrategy_color2} illustrate this construction.


\begin{figure}[htb]
\centering
\begin{minipage}{0.46\textwidth}
    \centering
    \includegraphics[width=\textwidth]{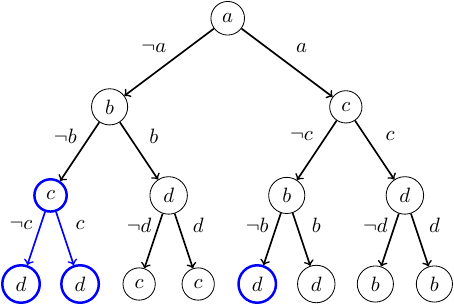}
    \caption{The LP tree $\tau_i$}
    \label{fig:LP_tree_dropStrategy_color1}
\end{minipage}\hfill
\begin{minipage}{0.46\textwidth}
    \centering
    \includegraphics[width=\textwidth]{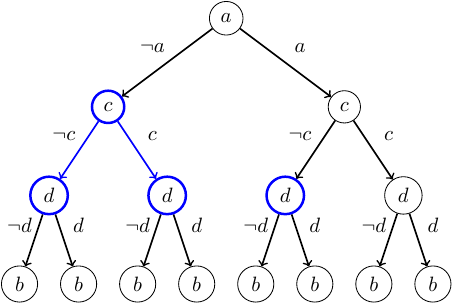}
    \caption{The LP tree $\tau'_i$ obtained by dropping $b$}
    \label{fig:LP_tree_dropStrategy_color2}
\end{minipage}
\vspace{1em} 

\begin{minipage}{\textwidth}
    \small \textbf{Notes:} Figure~\ref{fig:LP_tree_dropStrategy_color1} displays the original LP tree from Figures~\ref{fig:LP_tree1} and \ref{fig:LP_tree1_WELB}. To illustrate the drop strategy where agent~$i$ drops object $b$, we construct the LP tree $\tau'_i$ shown in Figure~\ref{fig:LP_tree_dropStrategy_color2}. At each internal vertex $v$ labeled $b$, we do the following: (i) remove the subtree consisting of $v$ and all of its descendants, and (ii) append the subtree, shown in \textcolor{blue}{blue}, consisting of the child of $v$ reached via the ``not-in edge'' together with all of its descendants. Then, at each leaf of this modified tree, we append two child nodes labeled $b$, connected via an ``in edge'' and a ``not-in edge'' accordingly. \hfill $\diamond$
\end{minipage}
\end{figure}

\newpage

\section{Proofs}\label{appendix:proof}

\subsection{Proof of Lemma~\ref{individual-good + IR implies WELB}}
Suppose $\varphi$ is a marginal rule on $\mathcal{R}$ satisfying \IR. Let $P \in \mathcal{R}$ and denote $\mu \coloneqq \varphi(P)$.

Toward contradiction, suppose that $\mu$ violates \BAL at $P$. Then there exists an agent $i$ with $|\mu_i| > |\omega_i|$. Because responsiveness imposes no restrictions on comparisons among bundles with different cardinalities (and it does not require monotonicity), there exists $P^*_i \in \mathcal{R}_i$ with $P^*_i|_O = P_i|_O$ such that, for each bundle $X \subseteq \omega_i$,  $X \mathrel{P^*_i} \mu_i$.\footnote{\label{footnote: responsive preference construction}For instance, let $P^*_i$ order bundles first by cardinality (with smaller bundles preferred to larger bundles), and then order bundles with the same cardinality using the lexicographic order associated with $P_i|_O$. For example, if $O = \{a, b, c\}$ and $P_i|_O : a, b, c$, then $P^*_i : \emptyset , \{a\}, \{b\}, \{c\}, \{a, b\}, \{a, c\}, \{b, c\}, \{a, b, c\}$.} Then marginality of $\varphi$ implies that $\varphi(P^*_i,P_{-i}) = \varphi(P) = \mu$. By construction of $P^*_i$, we have $\omega_i \mathrel{P^*_i} \mu_i = \varphi_i(P^*_i, P_{-i})$, which contradicts \IR. Thus $\varphi$ satisfies \BAL.

Now suppose for a contradiction that $\mu$ violates \WELB at $P$. Then there is some agent $i \in N$ such that $\min_{P_i}(\omega_i) \mathrel{P_i} \min_{P_i}(\mu_i)$.
By \BAL, we also have $|\mu_i| = |\omega_i|$. Consequently, there exists $P'_i \in \mathcal{R}_i$ with $P'_i |_O = P_i |_O$ and $\omega_i \mathrel{P'_i} \mu_i$. However, marginality of $\varphi$ implies that $\varphi_i(P'_i, P_{-i}) = \varphi_i(P) = \mu_i$. This contradicts \IR.
\hfill{}$\blacksquare$

\subsection{Proof of Lemma~\ref{Proposition: drop SP + WELB implies subset-drop SP} }

Suppose $\varphi$ satisfies \DSP and \WELB. Let $P'_{i}$ be obtained from
$P_{i}$ by dropping some subset $X \subseteq O \setminus \omega_{i}$.
Suppose that $X=\left\{ x_{1},x_{2},\dots,x_{k}\right\} $, where
$x_{1} \mathrel{P_{i}} x_{2} \mathrel{P_{i}} \cdots \mathrel{P_{i}} x_{k}$. Then $P'_{i}$ is obtained
from $P_i$ via a sequence of $k$ drop strategies. That is, $P'_{i}=P_{i}^{k}$,
where $P_{i}^{0}=P_{i}$ and $P_{i}^{1},\dots,P_{i}^{k}$ are such
that, for each $\ell\in\left\{ 1,\dots,k\right\} $, $P_{i}^{\ell}$
is obtained from $P_{i}^{\ell-1}$ by dropping object $x_{\ell}$. 
\begin{claim}
\label{Claim 1}For each $\ell =  1,\dots,k$, $\varphi_{i}(P_{i}^{\ell-1},P_{-i}) \mathrel{R_{i}} \varphi_{i}(P_{i}^{\ell},P_{-i})$.
\end{claim}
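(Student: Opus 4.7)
The plan is to apply \DSP directly to the one-step drop from $P_i^{\ell-1}$ to $P_i^{\ell}$ (since $x_\ell \in O \setminus \omega_i$, the report $P_i^{\ell}$ is indeed a drop strategy for $P_i^{\ell-1}$) and then translate the resulting comparison from $P_i^{\ell-1}$ to the true preference $P_i$ using \WELB. Denote $A \coloneqq \varphi_i(P_i^{\ell-1},P_{-i})$ and $B \coloneqq \varphi_i(P_i^{\ell},P_{-i})$. From \DSP I obtain $A \mathrel{R_i^{\ell-1}} B$, and the goal is to upgrade this to $A \mathrel{R_i} B$.

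The bridge from $P_i^{\ell-1}$ to $P_i$ rests on two observations. First, because each drop strategy only relocates an object in $O \setminus \omega_i$, the marginals $P_i^{\ell-1}|_O$ and $P_i|_O$ coincide on $\omega_i$, so $\min_{P_i^{\ell-1}}(\omega_i) = \min_{P_i}(\omega_i)$. Moreover, $P_i^{\ell-1}|_O$ is obtained from $P_i|_O$ by moving $x_1,\dots,x_{\ell-1}$ to the very bottom (below all of $\omega_i$), while preserving the relative order of all remaining objects. Second, applying \WELB at $(P_i^{\ell-1},P_{-i})$ rules out any object in $\{x_1,\dots,x_{\ell-1}\}$ from $A$, since these are ranked below $\min_{P_i^{\ell-1}}(\omega_i)$ under $P_i^{\ell-1}$. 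The analogous application at $(P_i^{\ell},P_{-i})$ gives $B \cap \{x_1,\dots,x_\ell\} = \emptyset$, hence in particular $B \subseteq O \setminus \{x_1,\dots,x_{\ell-1}\}$.

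The main step, which I expect to be the conceptual heart of the argument, is then a lexicographic-comparison lemma: if two bundles $A,B$ are both contained in $O \setminus \{x_1,\dots,x_{\ell-1}\}$, then the $P_i$- and $P_i^{\ell-1}$-rankings between them agree. This follows because the lexicographic comparison is governed by $\max_{P}(A \triangle B)$, the symmetric difference $A \triangle B$ lies in $O \setminus \{x_1,\dots,x_{\ell-1}\}$, and $P_i|_O$ and $P_i^{\ell-1}|_O$ induce the same linear order on $O \setminus \{x_1,\dots,x_{\ell-1}\}$. Combining this equivalence with the \DSP-conclusion $A \mathrel{R_i^{\ell-1}} B$ delivers $A \mathrel{R_i} B$, completing the proof of Claim~\ref{Claim 1}.

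The main obstacle to anticipate is precisely this mismatch between the reported preference under which \DSP operates and the true preference in which the claim is stated; without \WELB, the assignments $A$ and $B$ could in principle include some $x_j$ with $j < \ell$, and then the two rankings might disagree. It is \WELB that confines both assignments to the ``common agreement region'' $O \setminus \{x_1,\dots,x_{\ell-1}\}$ and thereby lets the single-step \DSP comparisons chain together into the telescoping inequality needed for Lemma~\ref{Proposition: drop SP + WELB implies subset-drop SP}.
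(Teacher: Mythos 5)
Your proposal is correct and follows essentially the same route as the paper's proof: apply \DSP to get the comparison under the reported preference $R_i^{\ell-1}$, use \WELB to confine both assignments to $O \setminus \{x_1,\dots,x_{\ell-1}\}$, and then transfer the comparison to $R_i$ via the agreement of the two marginal orders on that set together with lexicographicity. The only cosmetic difference is that the paper wraps the argument in an (unnecessary) induction on $\ell$, whereas you prove each one-step comparison directly, which is if anything slightly cleaner.
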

\begin{proof} 
The proof is by induction on $\ell$. Clearly, $\varphi_{i}(P)=\varphi_{i}(P_{i}^{0},P_{-i}) \mathrel{R_{i}} \varphi_{i}(P_{i}^{1},P_{-i})$
by \DSP. For the inductive step, suppose
that $\ell \in \{ 1,\dots,k-1\} $ is such that 
\[
\varphi_{i}(P)=\varphi_{i}(P_{i}^{0},P_{-i}) \mathrel{R_{i}} \varphi_{i}(P_{i}^{1},P_{-i}) \mathrel{R_{i}} \cdots \mathrel{R_{i}} \varphi_{i}(P_{i}^{\ell},P_{-i}).
\]
It suffices to show that $\varphi_{i}(P_{i}^{\ell},P_{-i}) \mathrel{R_{i}} \varphi_{i}(P_{i}^{\ell+1},P_{-i})$.
By \DSP, we have $\varphi_{i}(P_{i}^{\ell},P_{-i}) \mathrel{R_i^{\ell}} \varphi_{i}(P_{i}^{\ell+1},P_{-i})$. Moreover, \WELB implies that $$\varphi_{i}(P_{i}^{\ell},P_{-i}) \subseteq O\setminus \{ x_{1},\dots,x_{\ell}\} 
\quad \text{and} \quad
\varphi_{i}(P_{i}^{\ell+1},P_{-i}) \subseteq O \setminus \{ x_{1},\dots,x_{\ell+1}\} \subseteq O \setminus \{ x_{1},\dots,x_{\ell}\}.$$
Because $P_{i}^{\ell} |_{O \setminus \{ x_{1},\dots,x_{\ell}\}} = P_{i} |_{O \setminus \{ x_{1},\dots,x_{\ell}\}}$, and $P_i$ and $P^{\ell}_i$ are lexicographic,
we have $\varphi_{i}(P_{i}^{\ell},P_{-i}) \mathrel{R_{i}} \varphi_{i}(P_{i}^{\ell+1},P_{-i})$,
as desired.
\end{proof}
It follows from Claim \ref{Claim 1} that 
\[
\varphi_{i}(P)=\varphi_{i}(P_{i}^{0},P_{-i}) \mathrel{R_{i}} \varphi_{i}(P_{i}^{1},P_{-i}) \mathrel{R_{i}} \cdots \mathrel{R_{i}} \varphi_{i}(P_{i}^{k},P_{-i}) = \varphi_i(P'_i, P_{-i}).
\]
Therefore, $\varphi_{i}(P) \mathrel{R_{i}} \varphi_i(P'_i, P_{-i})$. \hfill{}$\blacksquare$

\subsection{Proof of Lemma~\ref{lemma:BAL implied on responsive domain}}
Suppose $\varphi$ is a marginal rule on $\mathcal{R}$ satisfying \WELB and \TP. Let $P \in \mathcal{R}$ and denote $\mu \coloneqq \varphi(P)$.

Toward contradiction, suppose that $\mu \coloneqq \varphi(P)$ violates \BAL at $P$. Then there is an agent $i \in N$ such that $|\mu_i| > |\omega_i|$. Because responsiveness imposes no restrictions on comparisons among bundles with different cardinalities (and it does not require monotonicity), there exists $P^*_i \in \mathcal{R}_i$ such that $P^*_i|_O = P_i|_O$ and, for each bundle $X \subseteq \omega_i$, $X \mathrel{P^*_i} \mu_i$ (see footnote~\ref{footnote: responsive preference construction}). Then marginality of $\varphi$ implies that $\varphi(P^*_i, P_{-i}) = \varphi(P) = \mu$.

Now let $P'_i \in \mathcal{R}_i$ be a truncation of $P^*_i$ obtained by dropping the set $O \setminus \omega_i$ (thus $P'_i|_O$ ranks every object in $\omega_i$ above all others). 
Then \WELB implies that $\varphi_i(P'_i, P_{-i}) \subseteq \omega_i$. By construction of $P^*_i$, we have $\varphi_i(P'_i, P_{-i}) \mathrel{P^*_i} \mu = \varphi_i(P^*_i,P_{-i})$. Thus $\varphi$ fails \TP, a contradiction.
\hfill{}$\blacksquare$

\subsection{Proof of Theorem~\ref{Multi-unit characterization with TP}}

Toward contradiction, suppose that $\varphi$ satisfies the properties but $\varphi \neq \varphi^{\text{TTC}}$. We will select a preference profile according to a particular minimality criterion, and then show that the minimality of the profile leads to a contradiction. We start by introducing some notation.

For each $P \in \mathcal{L}$ and each $t \in \mathbb{N}$, recall that $\mathcal{C}_t(P)$ denotes the set of trading cycles that arise at step $t$ of $\text{TTC}(P)$. (We assume that $\mathcal{C}_t(P) = \emptyset$ if $\text{TTC}(P)$ terminates before step $t$.) Define the ``size'' function $s:\mathcal{L} \to \mathbb{N}$ such that, for each $P \in \mathcal{L}$, 
$$
s(P) = \sum_{i\in N}\left|\left\{ o\in O\mid o \mathrel{R_{i}} \min_{P_{i}}\left(\omega_{i}\right)\right\} \right|.
$$
The ``similarity'' function $\rho: \mathcal{L} \to \mathbb{N} \cup \{\infty\}$ measures the earliest point of divergence between $\varphi$ and $\text{TTC}(P)$. Formally, for each $P \in \mathcal{L}$, 
$$
\rho(P) = \inf \{t \in \mathbb{N} \mid \varphi(P) \text{ does not execute each trading cycle in } \mathcal{C}_t(P)\},$$
where it is understood that $\rho(P) = \infty$ precisely when $\varphi(P) = \varphi^{\text{TTC}}(P)$.

Among all profiles that minimize the similarity function $\rho$, let $P$ be one that further minimizes the size function $s$. Thus, for any profile $P' \in \mathcal{L}$, either (i) $\rho(P)<\rho(P')$, or (ii) $\rho(P)=\rho(P')$ and $s(P) \leq s(P')$.

Let $t \coloneqq \rho(P)$. Then $\varphi(P)$ executes each trading cycle in $\bigcup_{\tau = 1}^{t-1} \mathcal{C}_\tau(P)$, but it does not execute some trading cycle $C \in \mathcal{C}_t(P)$. Suppose that
\[
C=\left(o_1, i_{1},o_2,i_{2},\dots,o_k,i_k,o_{k+1}=o_1\right).
\]
Because $\varphi(P)$ does not execute $C$, there is an agent $i_\ell \in N(C)$ who is not assigned the object she points to within $C$, i.e., $o_{\ell+1} \notin \varphi_{i_\ell}(P)$. Without loss of generality, let $i_\ell = i_k$. Thus, $o_{k+1} = o_1 \notin \varphi_{i_k}(P)$. 

Let $O^t$ denote the set of objects available at the beginning of Step $t$ of $\text{TTC}(P)$, i.e., $O^t \coloneqq O \setminus \left(\bigcup_{\tau = 1}^{t-1} \bigcup_{C' \in \mathcal{C}_\tau(P)} O(C')\right)$. Then $\varphi_{i_k}^{\text{TTC}}(P) \setminus O^t$ is the set of objects assigned to agent~$i_k$ before step~$t$ of $\text{TTC}(P)$; moreover $\varphi_{i_k}(P) \setminus O^t = \varphi^{\text{TTC}}_{i_k}(P) \setminus O^t$.

\begin{claim}
$\varphi_{i_k}(P) \cap O^t = \omega_{i_k} \cap O^t$. \label{claim2}
\end{claim}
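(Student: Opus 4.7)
\emph{Step 1 (cardinality).} I first show $|\varphi_{i_k}(P) \cap O^t| = |\omega_{i_k} \cap O^t|$. Because $t = \rho(P)$, $\varphi(P)$ executes every trading cycle in $\bigcup_{\tau < t} \mathcal{C}_\tau(P)$, so the assignments of the objects in $O \setminus O^t$ coincide under $\varphi(P)$ and $\varphi^{\mathrm{TTC}}(P)$. In particular, $\varphi_{i_k}(P) \setminus O^t = \varphi^{\mathrm{TTC}}_{i_k}(P) \setminus O^t$. Each endowed object $o \in \omega_{i_k} \setminus O^t$ is removed by a unique cycle at some step $\tau < t$ involving $i_k$, and in every such cycle $i_k$ relinquishes exactly one endowed object while receiving exactly one new object; hence $|\varphi^{\mathrm{TTC}}_{i_k}(P) \setminus O^t| = |\omega_{i_k} \setminus O^t|$. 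Combined with \BAL, this yields $|\varphi_{i_k}(P) \cap O^t| = |\omega_{i_k}| - |\omega_{i_k} \setminus O^t| = |\omega_{i_k} \cap O^t|$.

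\emph{Step 2 (truncation at $o_1$).} Suppose, for contradiction, that $\varphi_{i_k}(P) \cap O^t \neq \omega_{i_k} \cap O^t$. By Step~1, there exists $o^{**} \in (\varphi_{i_k}(P) \cap O^t) \setminus \omega_{i_k}$. Since $o_1 = \max_{P_{i_k}}(O^t)$ and $o_1 \notin \varphi_{i_k}(P)$, we have $o_1 \mathrel{P_{i_k}} o^{**}$; combined with \WELB and $o^{**} \notin \omega_{i_k}$, this gives $o_1 \mathrel{P_{i_k}} o^{**} \mathrel{P_{i_k}} \min_{P_{i_k}}(\omega_{i_k})$. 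Let $P'_{i_k}$ be the truncation of $P_{i_k}$ at $o_1$. A routine induction on $\tau \leq t$ shows $\max_{P'_{i_k}}(O^\tau) = \max_{P_{i_k}}(O^\tau)$, since each such max lies at or above $o_1$ in $P_{i_k}$ and the truncation only demotes non-endowment objects strictly below $o_1$. Consequently, $\text{TTC}$ executes the same trading cycles through step $t$ under $(P_{i_k}, P_{-i_k})$ and $(P'_{i_k}, P_{-i_k})$, so $\varphi^{\mathrm{TTC}}_{i_k}(P'_{i_k}, P_{-i_k})$ agrees with $\varphi^{\mathrm{TTC}}_{i_k}(P)$ outside $O^t$ and contains $o_1$.

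\emph{Step 3 (closing the contradiction).} I split on $\rho(P'_{i_k}, P_{-i_k})$. If $\rho(P'_{i_k}, P_{-i_k}) > t$ (including the case in which $(P'_{i_k}, P_{-i_k})$ is not a conflict profile), then $\varphi(P'_{i_k}, P_{-i_k})$ executes the same cycles as $\varphi^{\mathrm{TTC}}$ through step $t$; hence $o_1 \in \varphi_{i_k}(P'_{i_k}, P_{-i_k}) \setminus \varphi_{i_k}(P)$ and the two assignments agree outside $O^t$. Since $P_{i_k}$ is lexicographic and $o_1 = \max_{P_{i_k}}(O^t)$ is the top of their symmetric difference, $\varphi_{i_k}(P'_{i_k}, P_{-i_k}) \mathrel{P_{i_k}} \varphi_{i_k}(P)$, contradicting \TP. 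If instead $\rho(P'_{i_k}, P_{-i_k}) \leq t$, the minimality of $P$ forces $\rho(P'_{i_k}, P_{-i_k}) = t$ and $s(P'_{i_k}, P_{-i_k}) \geq s(P)$; but the truncation demotes $o^{**}$ from strictly above $\min_{P_{i_k}}(\omega_{i_k})$ in $P_{i_k}$ to strictly below it in $P'_{i_k}$, while leaving every other agent's contribution to $s$ unchanged, so $s(P'_{i_k}, P_{-i_k}) < s(P)$---a contradiction.

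\emph{Main obstacle.} The delicate point is the choice of truncation threshold. Taking $y = o_1$ is precisely what makes both branches of Step~3 work: it preserves every max used in the cycles through step $t$ (each lies at or above $o_1$), yet it simultaneously demotes $o^{**}$ from above to below $\min_{P_{i_k}}(\omega_{i_k})$. A higher threshold would fail to move $o^{**}$ (so $s$ would not strictly decrease), while a lower one could alter the cycles of $\text{TTC}$ through step $t$ and break the lexicographic comparison used to invoke \TP.
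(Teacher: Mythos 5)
Your proof is correct and follows essentially the same route as the paper's: establish the cardinality equality via \BAL, extract a non-endowed object $o'\in\varphi_{i_k}(P)\cap O^t$, truncate $P_{i_k}$ at $o_1$ to strictly decrease the size function $s$ while preserving the cycles of $\mathrm{TTC}$ through step $t$, and derive a \TP violation from the minimality of $P$. The only cosmetic difference is that you split Step~3 into two explicit branches, whereas the paper infers $\rho(P')>t$ directly from $s(P')<s(P)$ and the choice of $P$; the underlying argument is identical.
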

\begin{proof}  
Suppose otherwise. By \BAL and the fact that $\left|\varphi_{i_{k}}\left(P\right)\setminus O^{t}\right|=\left|\omega_{i_{k}} \setminus O^{t}\right|$,
we must have $\left|\varphi_{i_{k}}\left(P\right)\cap O^{t}\right|=\left|\omega_{i_{k}}\cap O^{t}\right|$.
Consequently, $\varphi_{i_{k}}\left(P\right)\cap O^{t}\neq\omega_{i_{k}}\cap O^{t}$
implies there is an object $o' \in  \varphi_{i_k}(P) \cap O^t$ with $o' \not \in \omega_{i_k}$.

By the definition of $\text{TTC}(P)$, we know that $o_1 = \max_{P_{i_k}}(O^t)$. Because $o' \in \varphi_{i_k}(P) \cap O^t$ and $o_1 \notin \varphi_{i_k}(P)$, it follows that $o_1 \mathrel{P_{i_k}} o'$. Furthermore, \WELB implies that $o'\mathrel{P_{i_k}} \min_{P_{i_k}}(\omega_{i_k}).$
 
 Let $P'_{i_k}$ be the truncation of $P_{i_k}$ at $o_1$. Thus, $P'_{i_k}$ is obtained from $P_{i_k}$ by dropping the set $\{o \in O \setminus \omega_{i_k} \mid o_1 \mathrel{P_{i_k}} o\}$.
 
 Let $P' \coloneqq (P'_{i_k}, P_{-i_k})$. Then $o_1 \mathrel{P_{i_k}} o' \mathrel{P_{i_k}} \min_{P_{i_k}}(\omega_{i_k})$ implies that $s(P') < s(P)$. Consequently, the choice of $P$ implies that $\rho(P') > \rho(P) = t$.

By definition of $P'$, the two runs $\text{TTC}(P)$ and $\text{TTC}(P')$ execute precisely the same trading cycles at every step $\tau \leq t$, i.e., $\mathcal{C}_\tau(P)= \mathcal{C}_\tau(P')$ for $\tau=1,\ldots,t$. 
Thus, $\rho(P') > t$ implies that $\varphi(P')$ executes all trading cycles in $\bigcup_{\tau = 1}^{t}\mathcal{C}_\tau(P)$. In particular,
$$(\varphi_{i_k}(P) \setminus O^t) \cup \{o_1\} \subseteq \varphi_{i_k}(P').$$
Since $P_{i_k}$ is lexicographic and $\max_{P_{i_k}}(O^t) = o_{1} \notin \varphi_{i_k}(P)$, we must have $\varphi_{i_k}(P') \mathrel{P_{i_k}} \varphi_{i_k}(P)$. This contradicts \TP.
\end{proof}

Next, we show that at least two agents are involved in $C$.

\begin{claim}\label{claim3}
	$|N(C)|>1$.
\end{claim}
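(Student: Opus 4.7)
The plan is to argue by contradiction, leveraging Claim~\ref{claim2} directly. Suppose $|N(C)|=1$, so $k=1$ and $C=(o_1,i_1,o_2=o_1)$ is a self-loop in which $i_1=i_k$ points to her own object. By the definition of a trading cycle at $\omega$, each object $o_\ell$ on $C$ is owned by the agent $i_\ell$ it precedes, so in the self-loop case we have $o_1\in\omega_{i_k}$. Moreover, since $o_1$ appears on a trading cycle at step~$t$ of $\mathrm{TTC}(P)$, we have $o_1\in O^t$, and therefore $o_1\in\omega_{i_k}\cap O^t$.

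Now apply Claim~\ref{claim2}, which asserts $\varphi_{i_k}(P)\cap O^t=\omega_{i_k}\cap O^t$. This gives $o_1\in\varphi_{i_k}(P)$, contradicting the choice of $i_k$ as the agent on $C$ whose TTC-assignment fails, i.e., $o_{k+1}=o_1\notin\varphi_{i_k}(P)$. Hence $k\geq 2$ and $|N(C)|>1$.

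The argument is short and presents no real obstacle: the only subtle point is simply recognising that a length-one trading cycle is precisely a self-loop in which the agent points at one of her own endowed objects that is still present in $O^t$, so Claim~\ref{claim2} immediately forces that object into $\varphi_{i_k}(P)$ and yields the contradiction. No use of \BAL, \WELB, \IGE, or \TP beyond what was already consumed to prove Claim~\ref{claim2} is needed here.
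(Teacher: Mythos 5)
Your proposal is correct and follows essentially the same route as the paper: assume $|N(C)|=1$, note that the self-loop forces $o_1\in\omega_{i_k}\cap O^t$, and invoke Claim~\ref{claim2} to conclude $o_1\in\varphi_{i_k}(P)$, contradicting $o_1\notin\varphi_{i_k}(P)$. Your version merely makes explicit the (correct) observation that $o_1\in O^t$, which the paper leaves implicit.
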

\begin{proof}
Toward contradiction, suppose that $|N(C)|=1$. Thus  $i_1 = i_k$ and $C = (o_{1}, i_1, o_{1})$. Then $o_{1} \in \omega_{i_k}$ and Claim~\ref{claim2} implies that $o_{1} \in \varphi_{i_k}(P)$, a contradiction.
\end{proof}

By Claim~\ref{claim2} and Claim~\ref{claim3}, agent~$i_{k-1}$ points to $o_{k}$ on $C$, yet she does not receive $o_{k}$. An argument similar to the proof of Claim~\ref{claim2} implies that $\varphi_{i_{k-1}}(P) \cap O^t = \omega_{i_{k-1}} \cap O^t$. 

Proceeding by induction, we conclude that for each agent~$i_\ell$ involved in $C$, we have $\varphi_{i_\ell}(P) \cap O^t = \omega_{i_\ell} \cap O^t$. Thus, $C$ is a Pareto-improving trading cycle at $P$. It follows that $\varphi$ is not \IGE, a contradiction.
\hfill{}$\blacksquare$

\subsection{Proof of Proposition~\ref{prop:possibility}}

Fix $P \in \mathcal{R}$ and $i \in N$, and let $P'_i$ be a truncation of $P_i$ at an object $x \in O$.\footnote{If $P'_i$ is obtained from $P_i$ by dropping $O \setminus \omega_i$, then trivially $\varphi_i^{TTC}(P) \mathrel{R_i} \varphi_i^{TTC}(P'_i,P_{-i}) = \omega_i$.} Denote $P' \coloneqq (P'_i, P_{-i})$ and $U \coloneqq \{o \in O \mid o \mathrel{R_i} x\}$, and observe that $P_i|_U = P'_i|_U$. Consider two cases.

\begin{casenv}
    \item Suppose $\varphi_i^{\text{TTC}}(P) \subseteq U$. Then, by the definition of $\text{TTC}(P)$ and $\text{TTC}(P')$, the allocations $\varphi^\text{TTC}(P)$ and $\varphi^{TTC}(P')$ execute precisely the same trading cycles. Thus $\varphi_i^{\text{TTC}}(P) = \varphi_i^{\text{TTC}}(P')$.
    \item Suppose $\varphi_i^{\text{TTC}}(P) \nsubseteq U$. Then agent~$i$ receives at least one object in $O \setminus U$ at $\varphi^{\text{TTC}}(P)$. Let $t$ be the earliest step of $\text{TTC}(P)$ at which agent~$i$ points to an object in $O \setminus U$. 

Let $O^t$ be the set of objects remaining at the beginning of step $t$ of $\text{TTC}(P)$. Since $P_i |_U = P'_i |_U$, we know that $\text{TTC}(P)$ and $\text{TTC}(P')$ execute the same trading cycles at steps $1, \dots, t-1$. Thus, $O^t$ is the set of objects remaining at step $t$ of $\text{TTC}(P')$. Moreover, we have
\begin{equation}
    \varphi^{\text{TTC}}_i(P) \setminus O^t=\varphi^{\text{TTC}}_i(P') \setminus O^t. \label{plusone}
\end{equation}

By construction, $P'_i$ ranks every object in $\omega_i \cap O^t$ above every object in $O^t \setminus \omega_i$. 
Thus, at each step $\tau \geq t$ of $\text{TTC}(P')$ where agent~$i$ is involved in a trading cycle, she is assigned an object in $\omega_i$. Consequently,
\begin{equation}
\varphi^{\text{TTC}}_i(P') \cap O^t = \omega_i \cap O^t. \label{plustwo}
\end{equation}
On the other hand, at each step $\tau \ge t$ of $\text{TTC}(P)$ where agent~$i$ is involved in a trading cycle, she is assigned an object weakly better than the object she relinquishes from her endowment. In other words, there is a bijection $\sigma: \omega_i \cap O^t  \to \varphi^{\text{TTC}}_i(P) \cap O^t $ such that, for each $o \in \omega_i \cap O^t $, $  \sigma(o) \mathrel{R_i} o$.

Together with (\ref{plusone}) and (\ref{plustwo}), we conclude that there is a bijection $\pi : \varphi_i^{\text{TTC}}(P') \to \varphi_i^{\text{TTC}}(P)$ such that, for each $o\in \varphi^{\text{TTC}}_i(P')$, $\pi(o) \mathrel{R_i} o$. Since $P_i$ is responsive, we must have $\varphi^{\text{TTC}}_i(P) \mathrel{R_i} \varphi^{\text{TTC}}_i(P')$.\footnote{Starting from $\varphi^{\text{TTC}}_i(P')$, replace each object $o \in \varphi^{\text{TTC}}_i(P')$ with object $\pi(o)$, one at a time, and apply the definition of responsiveness.}
\hfill{}$\blacksquare$
\end{casenv}

\subsection{Proof of Proposition~\ref{prop:NOM}}

Let $P \in \mathcal{R}$, $i \in N$, and $P'_i \in \mathcal{R}_i$ be such that $\varphi^{\mathrm{TTC}}_i(P'_i, P_{-i}) \mathrel{P_i} \varphi^{\mathrm{TTC}}_i(P)$.

\begin{itemize}
    \item[(i)] Let $P_i|_O : o_1, \dots, o_{|\omega_i|}, \dots$. In particular, $P_i|_O$ ranks every object in $\{o_1, \dots, o_{|\omega_i|}\}$ above every object in $O \setminus \{o_1, \dots, o_{|\omega_i|}\}$. For each $j \in N \setminus \{i\}$, let $P^*_j|_O:\dots,o_1, \dots, o_{|\omega_i|}$. In particular, $P^*_j|_O$ ranks every object in $O \setminus \{o_1, \dots, o_{|\omega_i|}\}$ above every object in $\{o_1, \dots, o_{|\omega_i|}\}$. Then $\varphi^{\mathrm{TTC}}_i(P_i, P^*_{-i}) = \{o_1, \dots, o_{|\omega_i|}\}$ (otherwise $\varphi^{\mathrm{TTC}}$ would not satisfy \IGE). Since $P_i$ is responsive, $\{o_1, \dots, o_{|\omega_i|}\}$ is the most-preferred $|\omega_i|$-subset of $O$ according to $P_i$. Consequently, 
    $B_{P_i} \left(
    \mathcal{O}_i(P_i)
    \right) \mathrel{R_i}
    B_{P_i} \left(
    \mathcal{O}_i(P'_i)
    \right)$.
    
    \item[(ii)] For each $j \in N \setminus \{i\}$, let $\omega_j =\{o^j_1 ,\dots ,o^j_{|\omega_j|}\}$ and $P^*_j|_O:o^j_1, \dots, o^j_{|\omega_j|},\dots$. In particular, $P^*_j|_O$ ranks every object in $\omega_j$ above every object in $O \setminus \omega_j$. Then $\varphi_i^{\mathrm{TTC}}(P_i, P^*_{-i}) = \varphi_i^{\mathrm{TTC}}(P'_i, P^*_{-i}) = \omega_i$. Consequently, since $\varphi^{\mathrm{TTC}}$ satisfies \IR on $\mathcal{R}$, we have 
    $W_{P_i} \left(
    \mathcal{O}_i(P_i)
    \right) = \omega_i \mathrel{R_i} W_{P_i} \left(
    \mathcal{O}_i(P'_i)
    \right)$.    \hfill{}$\blacksquare$
\end{itemize}

\subsection{Proof of Proposition~\ref{ATTC is drop SP}}

Fix an agent $i \in N$. To prove that $i$ cannot manipulate using drop strategies, we consider an alternative procedure that proceeds exactly as in the ATTC algorithm except that, at each step in which multiple trading cycles arise, it executes only the cycles that do not involve $i$. Thus any trading cycle involving agent~$i$ is temporarily deferred, and it is executed only when it is the unique remaining cycle. It is easy to see that the order in which cycles are executed does not affect the final allocation; hence this modified procedure yields the same allocation as the original ATTC algorithm.\footnote{A similar modification is used by \citet{gonczarowski2023} to ``expose'' the strategy-proofness of TTC in single-object environments.}

Formally, given a preference profile $P \in \mathcal{CL}$, the allocation $\varphi^\mathrm{ATTC}(P)$ is equivalently obtained via the following algorithm, which we denote $\mathrm{ATTC}^i(P)$. The algorithm defines, for each step $t$, several ingredients that will be used throughout the proof: (i) the remaining set of objects $O^t(P)$, (ii) the directed graph $G^t(P)$, (iii) the full set $\widetilde{\mathcal{C}}_t(P)$ of cycles that \emph{arise} at step $t$, (iv) the subset $\overline{\mathcal{C}}_t(P)$ of trading cycles that are \emph{executed} at step $t$, and (v) the partial allocation $\mu^t(P)$ constructed by the end of step $t$.

\noindent \rule{1\columnwidth}{1pt}
\paragraph{Algorithm:} $\mathrm{ATTC}^i(P)$
\begin{description}
    \item [\textnormal{\emph{Initialization}:}] Set $\mu^{0}(P) \coloneqq\left(\emptyset\right)_{j\in N}$
    and $O^{1}(P)\coloneqq O$.
    \item [\textnormal{\emph{Step~$t \geq 1$}:}] ~
        Construct a directed graph $G^t(P)$ on $N \cup O^t(P)$ as follows. Each agent~$j$ points to $\max_{P_j}(O^t(P) \mid \mu_j^{t-1}(P))$, and each object in $O^t(P)$ points to its owner. Let $\widetilde{\mathcal{C}}_t(P)$ be the resulting set of trading cycles, and let
        $$
        \overline{\mathcal{C}}_t(P) = \begin{cases}
            \{C \in \widetilde{\mathcal{C}}_t(P) \mid i \notin N(C) \}, & \text{if } |\widetilde{\mathcal{C}}_t(P)| \geq 2 \\
            \widetilde{\mathcal{C}}_t(P), & \text{if }|\widetilde{\mathcal{C}}_t(P)| = 1.
        \end{cases}
        $$
        Execute all cycles in $\overline{\mathcal{C}}_t(P)$, yielding the partial allocation $\mu^t(P)$ with
        $$
        \mu^t_j(P) = \begin{cases}
            \mu^{t-1}_j(P) \cup \{\max_{P_j}(O^t(P) \mid \mu^{t-1}_j(P)) \}, & \text{if } j \in \bigcup_{C \in \overline{\mathcal{C}}_t(P)} N(C) \\
            \mu^{t-1}_j(P), & \text{otherwise.}
        \end{cases}
        $$
        Remove all objects involved in executed trading cycles to form the set $O^{t+1}(P) = O^t(P) \setminus \bigcup_{C \in \overline{\mathcal{C}}_t(P)} O(C)$. Proceed to step~$t+1$ if $O^{t+1}(P) \neq \emptyset$; proceed to Termination otherwise.
        
    \item [\textnormal{\emph{Termination}:}] Let $T$ be the earliest step with $O^{T+1}(P) = \emptyset$. Return $\varphi^{\text{ATTC}}\left(P\right)\coloneqq\mu^{T}(P)$.
\end{description}
\rule{1\columnwidth}{1pt}

Let $P \in \mathcal{CL}$. Let $\tilde{P}_i \in \mathcal{D}_i(P_i)$ be obtained from $P_i$ by dropping an object $x \in O \setminus \omega_i$, and denote $\tilde{P} \coloneqq (\tilde{P}_i, P_{-i})$. We prove that $\varphi^{\mathrm{ATTC}}_i(P) \mathrel{R_i} \varphi^{\mathrm{ATTC}}_i(\tilde{P})$ by comparing the two runs of the modified ATTC algorithm.\footnote{A direct step-by-step comparison of $\mathrm{ATTC}(P)$ and $\mathrm{ATTC}(\tilde{P})$ is awkward because agent~$i$ may receive the same object at different steps in the two runs, creating a timing mismatch. The modified algorithm is used to synchronize the two runs $\mathrm{ATTC}^i(P)$ and $\mathrm{ATTC}^i(\tilde{P})$ up to the earliest step at which $i$ receives a different object.}

If $\overline{\mathcal{C}}_t(P) = \overline{\mathcal{C}}_t(\tilde{P})$ for every step $t$, then the two runs execute exactly the same cycles; hence $\varphi^{\mathrm{ATTC}}(P) = \varphi^{\mathrm{ATTC}}(\tilde{P})$ and the claim holds.

Otherwise, let $t$ be the earliest step such that $\overline{\mathcal{C}}_t(P) \neq \overline{\mathcal{C}}_t(\tilde{P})$. Then $\overline{\mathcal{C}}_s(P) = \overline{\mathcal{C}}_s(\tilde{P})$ for each step $s < t$, which implies that the partial allocations and the remaining objects coincide at the beginning of step $t$ in the two runs, i.e.,
$$
\mu^{t-1}(P) = \mu^{t-1}(\tilde{P}) \quad \text{and} \quad O^t(P) = O^t(\tilde{P}).
$$

Consider the directed graphs $G^t(P)$ and $G^t(\tilde{P})$ constructed at step~$t$ in the two runs. Because $P_{-i} = \tilde{P}_{-i}$, $\mu^{t-1}(P) = \mu^{t-1}(\tilde{P})$, and $O^t(P) = O^t(\tilde{P})$, every agent $j \in N \setminus \{i\}$ points to the same object in both graphs. Hence, the sets of cycles not involving $i$ coincide:
\begin{equation}\label{eq:cyles without i coincide}
\{C \in \widetilde{\mathcal{C}}_t(P) \mid i \notin N(C)\} = \{C \in \widetilde{\mathcal{C}}_t(\tilde{P}) \mid i \notin N(C)\}.
\end{equation}
If the common set in (\ref{eq:cyles without i coincide}) were nonempty, then by definition of $\mathrm{ATTC}^i(P)$ and $\mathrm{ATTC}^i(\tilde{P})$ we would have 
$$
\overline{\mathcal{C}}_t(P) = \{C \in \widetilde{\mathcal{C}}_t(P) \mid i \notin N(C)\} = \{C \in \widetilde{\mathcal{C}}_t(\tilde{P}) \mid i \notin N(C)\} = \overline{\mathcal{C}}_t(\tilde{P}),$$
contradicting the choice of $t$. Thus
$$
\{C \in \widetilde{\mathcal{C}}_t(P) \mid i \notin N(C)\} = \{C \in \widetilde{\mathcal{C}}_t(\tilde{P}) \mid i \notin N(C)\} = \emptyset.
$$

Since agent~$i$ can be involved in at most one cycle in $G^t(P)$ and in $G^t(\tilde{P})$, and each graph contains a cycle, we must have 
$\overline{\mathcal{C}}_t(P) = \widetilde{\mathcal{C}}_t(P) = \{C\}$ and $\overline{\mathcal{C}}_t(\tilde{P}) = \widetilde{\mathcal{C}}_t(\tilde{P}) = \{\tilde{C}\}$
for some trading cycles $C, \tilde{C}$ with $i \in N(C) \cap N(\tilde{C})$. Moreover, $C \neq \tilde{C}$ because $\overline{\mathcal{C}}_t(P) \neq \overline{\mathcal{C}}_t(\tilde{P})$. Thus agent~$i$ must point to different objects in $G^t(P)$ and $G^t(\tilde{P})$ (since all other agents point to the same object). Because $Y \coloneqq \mu^{t-1}_i(P) = \mu^{t-1}_i(\tilde{P})$ and $\tilde{P}_i$ is obtained from $P_i$ by dropping object $x$, the conditional marginal preferences $P_i(Y)|_O$ and $\tilde{P}_i(Y)|_O$ agree on $O \setminus \{x\}$; hence if $i$ points to different objects in the two graphs, then she necessarily points to $x$ in $C$ but not in $\tilde{C}$.
Since $C$ is the unique cycle executed at step~$t$ of $\mathrm{ATTC}^i(P)$, we have $x \in \varphi^{\mathrm{ATTC}}_i(P)$.

Because $x \notin \omega_i$ and $\tilde{P}_i$ is obtained from $P_i$ by dropping $x$, the object $x$ is ranked below $\min_{\tilde{P}_i(Z)}(\omega_i)$ according to every conditional marginal preference $\tilde{P}_i(Z)|_O$ associated with $\tilde{P}_i$. Since $\varphi^{\mathrm{ATTC}}$ satisfies \WELB, we must have $x \notin \varphi^{\mathrm{ATTC}}_i(\tilde{P})$.

Finally, since agent~$i$ points to $x$ at step~$t$ of $\mathrm{ATTC}^i(P)$, we have $
x = \max_{P_i} \left(O^t(P) \mid \mu^{t-1}_i(P) \right)$.
Since $P_i$ is conditionally lexicographic and $x \in \varphi^{\mathrm{ATTC}}_i(P) \setminus \varphi^{\mathrm{ATTC}}_i(\tilde{P})$, agent~$i$ strictly prefers to complete her partial assignment $\mu^{t-1}_i(P)$ by adding the set $\varphi^{\mathrm{ATTC}}_i(P) \cap O^t(P)$ instead of $\varphi^{\mathrm{ATTC}}_i(\tilde{P}) \cap O^t(\tilde{P})$. Since 
\begin{align*}
    \varphi^{\mathrm{ATTC}}_i(P) & 
    = \mu^{t-1}_i(P) \cup  \left(\varphi^{\mathrm{ATTC}}_i(P) \cap O^t(P)\right) \\ 
    \text{and} \quad 
    \varphi^{\mathrm{ATTC}}_i(\tilde{P}) & = \mu^{t-1}_i(P) \cup  \left(\varphi^{\mathrm{ATTC}}_i(\tilde{P}) \cap O^t(\tilde{P})\right),
\end{align*} 
we conclude that
$\varphi^{\mathrm{ATTC}}_i(P) \mathrel{P_i} \varphi^{\mathrm{ATTC}}_i(\tilde{P})$.\hfill{}$\blacksquare$




\subsection{Proof of Theorem~\ref{characterization of ATTC}}

Toward contradiction, suppose that $\varphi$ satisfies the properties but $\varphi \neq \varphi^{\text{ATTC}}$. The argument here mirrors the proof of Theorem~\ref{Multi-unit characterization with TP} for the lexicographic domain (see the sketch in Section~\ref{results for lexicographic preferences}).

For each $P \in \mathcal{CL}$ and each $t \in \mathbb{N}$, recall that $\mathcal{C}_t(P)$ denotes the set of trading cycles that arise at step~$t$ of $\text{ATTC}\left(P\right)$. (We assume that $\mathcal{C}_t(P) = \emptyset$ if $\text{ATTC}(P)$ terminates before step $t$.) The ``size'' function $s:\mathcal{CL} \to \mathbb{N}$ is defined, for each $P \in \mathcal{CL}$, by
$$
s(P) = \sum_{i\in N}\sum_{Y\in2^{O}}\left|\{o \in O \mid o \mathrel{R_i(Y)} \min_{P_i(Y)}(\omega_i)\}\right|,
$$
The ``similarity'' function $\rho: \mathcal{CL} \to \mathbb{N} \cup \{\infty\}$ is such that, for each $P \in \mathcal{CL}$, 
$$
\rho(P) = \inf \{t \in \mathbb{N} \mid \varphi(P) \text{ does not execute each trading cycle in } \mathcal{C}_t(P)\},$$
where it is understood that $\rho(P) = \infty$ precisely when $\varphi(P) = \varphi^{\text{ATTC}}(P)$.

Among all profiles that minimize the similarity function $\rho$, let $P$ be one that further minimizes the size function $s$. Thus, for any profile $P' \in \mathcal{CL}$, either (i) $\rho(P)<\rho(P')$, or (ii) $\rho(P)=\rho(P')$ and $s(P) \leq s(P')$.

Let $t \coloneqq \rho(P)$. Then $\varphi(P)$ executes each trading cycle in $\bigcup_{\tau = 1}^{t-1} \mathcal{C}_\tau(P)$, but it does not execute some trading cycle $C \in \mathcal{C}_t(P)$. Suppose that
\[
C=\left(o_1, i_{1},o_2,i_{2},\dots,o_k,i_{k},o_{k+1} = o_1\right).
\]

Because $\varphi(P)$ does not execute $C$, there is an agent $i_\ell \in N(C)$ who is not assigned the object she points to within $C$, i.e., $o_{\ell+1} \notin \varphi_{i_\ell}(P)$. Without loss of generality, let $i_\ell = i_k$. Thus, $o_{k+1} = o_{1} \notin \varphi_{i_k}(P)$. 

Let $O^t$ denote the set of objects available at the beginning of Step $t$ of $\text{ATTC}(P)$, i.e., $O^t \coloneqq O \setminus \left(\bigcup_{\tau = 1}^{t-1} \bigcup_{C' \in \mathcal{C}_\tau(P)} O(C')\right)$. Then $\varphi_{i_k}^{\text{ATTC}}(P) \setminus O^t$ is the set of objects assigned to agent~$i_k$ before step~$t$ of $\text{ATTC}(P)$; moreover $\varphi_{i_k}(P) \setminus O^t = \varphi^{\text{ATTC}}_{i_k}(P) \setminus O^t$.

\begin{claim}
$\varphi_{i_k}(P) \cap O^t = \omega_{i_k} \cap O^t$. \label{claim22}
\end{claim}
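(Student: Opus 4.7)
The plan is to mirror the proof of Claim~\ref{claim2}, substituting a single drop strategy for the truncation used there. Toward contradiction, assume $\varphi_{i_k}(P) \cap O^t \neq \omega_{i_k} \cap O^t$. Exactly as in Claim~\ref{claim2}, \BAL together with the identity $|\varphi_{i_k}(P) \setminus O^t| = |\omega_{i_k} \setminus O^t|$ (both equal the number of trading cycles containing $i_k$ in the first $t-1$ steps of $\mathrm{ATTC}(P)$) yields $|\varphi_{i_k}(P) \cap O^t| = |\omega_{i_k} \cap O^t|$, so there exists $o' \in (\varphi_{i_k}(P) \cap O^t) \setminus \omega_{i_k}$. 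Since $o_1 = \max_{P_{i_k}}(O^t \mid \mu^{t-1}_{i_k}) \notin \varphi_{i_k}(P)$, we have $o_1 \neq o'$ and $o_1 \mathrel{P_{i_k}(\mu^{t-1}_{i_k})} o'$.

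The next step is to check that agent~$i_k$ does not point to $o'$ at any step $\tau < t$ of $\mathrm{ATTC}(P)$. Otherwise, no cycle involving her and $o'$ could close at step $\tau$ (such a cycle would remove $o'$, contradicting $o' \in O^t$); her partial allocation would therefore remain unchanged through step $t-1$; and $o'$ would remain the top of the available set under her unchanged conditional marginal preference, so she would still point to $o'$ at step~$t$, contradicting $o_1 \neq o'$. Let $P'_{i_k}$ be the drop strategy obtained from $P_{i_k}$ by dropping $o'$. Induction on $\tau$ using the previous observation shows that $i_k$'s pointing in $\mathrm{ATTC}(P'_{i_k}, P_{-i_k})$ agrees with her pointing in $\mathrm{ATTC}(P)$ through step $t$, so $\mathcal{C}_\tau(P'_{i_k}, P_{-i_k}) = \mathcal{C}_\tau(P)$ for each $\tau \leq t$; in particular, $C \in \mathcal{C}_t(P'_{i_k}, P_{-i_k})$. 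Moreover, $s(P'_{i_k}, P_{-i_k}) < s(P)$ because $o'$ falls below every endowed object in every conditional marginal preference associated with $P'_{i_k}$, so the minimality of $P$ forces $\rho(P'_{i_k}, P_{-i_k}) > t$. Writing $P' \coloneqq (P'_{i_k}, P_{-i_k})$, it follows that $\varphi(P')$ executes $C$, giving $o_1 \in \varphi_{i_k}(P')$; \WELB applied at $P'$ also yields $o' \notin \varphi_{i_k}(P')$.

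The remaining step, which I expect to be the main obstacle, is to show $\varphi_{i_k}(P') \mathrel{P_{i_k}} \varphi_{i_k}(P)$ by comparing LP-trees (Appendix~\ref{Appendix: LP trees}). Setting $Y_0 \coloneqq \mu^{t-1}_{i_k} = \varphi_{i_k}(P) \setminus O^t = \varphi_{i_k}(P') \setminus O^t$, the paths $\tau_{P_{i_k}}(\varphi_{i_k}(P))$ and $\tau_{P_{i_k}}(\varphi_{i_k}(P'))$ coincide with $\tau_{P_{i_k}}(Y_0)$ on an initial segment, each leaving it at the first vertex it encounters whose label lies in the corresponding bundle---namely the vertex labeled $\max_{P_{i_k}(Y_0)}(\varphi_{i_k}(P) \cap O^t)$ for the first path and $\max_{P_{i_k}(Y_0)}(\varphi_{i_k}(P') \cap O^t)$ for the second. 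Because $o_1 = \max_{P_{i_k}(Y_0)}(O^t)$ lies in $\varphi_{i_k}(P') \cap O^t$ but not in $\varphi_{i_k}(P)$, the second of these vertices is labeled $o_1$ and is reached strictly earlier on $\tau_{P_{i_k}}(Y_0)$; at that vertex $\tau_{P_{i_k}}(\varphi_{i_k}(P'))$ diverges from $\tau_{P_{i_k}}(Y_0)$ while $\tau_{P_{i_k}}(\varphi_{i_k}(P))$ still agrees with it. Hence the last common vertex of the two bundle-paths is this one, labeled $o_1 \in \varphi_{i_k}(P') \setminus \varphi_{i_k}(P)$, so by the definition of $P_{i_k}$ we obtain $\varphi_{i_k}(P') \mathrel{P_{i_k}} \varphi_{i_k}(P)$, contradicting \DSP.
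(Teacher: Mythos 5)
Your proof follows the paper's argument essentially step for step: the cardinality count via \BAL, the choice of $o' \in (\varphi_{i_k}(P)\cap O^t)\setminus\omega_{i_k}$, the drop strategy $P'_{i_k}$ obtained by dropping $o'$, the use of the minimality of $P$ to force $\rho(P'_{i_k},P_{-i_k}) > t$ and hence $o_1 \in \varphi_{i_k}(P'_{i_k},P_{-i_k})$, and the final contradiction with \DSP. Your synchronization of $\mathrm{ATTC}(P)$ and $\mathrm{ATTC}(P'_{i_k},P_{-i_k})$ through step $t$ (via the observation that $i_k$ never points to $o'$ before step $t$) and your LP-tree derivation of $\varphi_{i_k}(P'_{i_k},P_{-i_k}) \mathrel{P_{i_k}} \varphi_{i_k}(P)$ are, if anything, more explicit than the paper's corresponding steps.

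There is one genuine gap: the strict inequality $s(P'_{i_k},P_{-i_k}) < s(P)$. The reason you give---that $o'$ sits below every endowed object in every conditional marginal preference associated with $P'_{i_k}$---shows at most that $o'$ contributes nothing to $s(P'_{i_k},P_{-i_k})$; it does not show that $o'$ contributed something to $s(P)$, which is what strictness requires. With only $s(P'_{i_k},P_{-i_k}) \le s(P)$, the minimality of $P$ (which minimizes $s$ \emph{among} minimizers of $\rho$) is consistent with $\rho(P'_{i_k},P_{-i_k}) = \rho(P) = t$, so the conclusion $\rho(P'_{i_k},P_{-i_k}) > t$ does not follow and the argument stalls. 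The missing ingredient is exactly where \WELB at $P$ enters the paper's proof: since $o' \in \varphi_{i_k}(P)$ and $o' \notin \omega_{i_k}$, \WELB gives $o' \mathrel{P_{i_k}(\varphi_{i_k}(P))} \min_{P_{i_k}(\varphi_{i_k}(P))}(\omega_{i_k})$, so the term of $s(P)$ indexed by $Y = \varphi_{i_k}(P)$ counts $o'$ while the corresponding term of $s(P'_{i_k},P_{-i_k})$ does not. (Your only appeal to \WELB is at the profile $(P'_{i_k},P_{-i_k})$, to conclude $o' \notin \varphi_{i_k}(P'_{i_k},P_{-i_k})$, a fact your final LP-tree step does not actually use.) The omission is local and easily repaired, but it is the step at which the \WELB axiom does its work in this claim.
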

\begin{proof} 
Suppose otherwise. By \BAL and the fact that $\left|\varphi_{i_{k}}\left(P\right)\setminus O^{t}\right|=\left|\omega_{i_{k}} \setminus O^{t}\right|$,
we must have $\left|\varphi_{i_{k}}\left(P\right)\cap O^{t}\right|=\left|\omega_{i_{k}}\cap O^{t}\right|$.
Consequently, $\varphi_{i_{k}}\left(P\right)\cap O^{t}\neq\omega_{i_{k}}\cap O^{t}$
implies there is an object $o' \in  \varphi_{i_k}(P) \cap O^t$ with $o' \not \in \omega_{i_k}$.

By the definition of $\text{ATTC}(P)$, we know that $o_{1} = \max_{P_{i_k}}(O^t \mid \varphi_{i_k}(P) \setminus O^t)$. Because $o' \in \varphi_{i_k}(P) \cap O^t$ and $o_{1} \notin \varphi_{i_k}(P)$, it follows that $o_1 \mathrel{P_{i_k}(\varphi_{i_k}(P) \setminus O^t)} o'$ (thus $o_1 \neq o'$). Furthermore, \WELB implies that $o' \mathrel{P_{i_k}(\varphi_{i_k}(P))} \min_{P_{i_k}(\varphi_{i_k}(P))}(\omega_{i_k})$.
 
 Let $P'_{i_k}$ be the drop strategy obtained from $P_{i_k}$ by dropping object $o'$.  Let $P' \coloneqq (P'_{i_k}, P_{-i_k})$. Then, for each $Y \subseteq O$, 
 $$
 \left|\{o \in O \mid o \mathrel{R'_i(Y)} \min_{P'_i(Y)}(\omega_i)\}\right| \leq \left|\{o \in O \mid o \mathrel{R_i(Y)} \min_{P_i(Y)}(\omega_i)\}\right|,
 $$
 with strict inequality for $Y = \varphi_{i_k}(P)$ because $o' \mathrel{P_{i_k}(\varphi_{i_k}(P))} \min_{P_{i_k}(\varphi_{i_k}(P))}(\omega_{i_k})$. Thus $s(P') < s(P)$. Consequently, the choice of $P$ implies that $\rho(P') > \rho(P) = t$.

By definition of $P'_{i_k}$, the two conditional marginal preferences $P'_{i_k}(\varphi_{i_k}(P) \setminus O^t)$ and $P_{i_k}(\varphi_{i_k}(P) \setminus O^t)$ agree for all objects at least as good as $o_1$. In particular, $\max_{P'_{i_k}}(X \mid Y) = \max_{P_{i_k}}(X \mid Y)$ whenever $Y \subseteq \varphi_{i_k}(P) \setminus O^t$ and $X \supseteq O^t$. Consequently, $\text{ATTC}(P)$ and $\text{ATTC}(P')$ execute precisely the same trading cycles at every step $\tau \leq t$, i.e., $\mathcal{C}_\tau(P)= \mathcal{C}_\tau(P')$ for $\tau=1,\ldots,t$. 
Thus, $\rho(P') > t$ implies that $\varphi(P')$ executes all trading cycles in $\bigcup_{\tau = 1}^{t}\mathcal{C}_\tau(P)$. In particular,
$$(\varphi_{i_k}(P) \setminus O^t) \cup \{o_{1}\} \subseteq \varphi_{i_k}(P').$$
Since $P_{i_k}$ is conditionally lexicographic and $\max_{P_{i_k}}(O^t \mid \varphi_{i_k}(P) \setminus O^t) = o_{1} \notin \varphi_{i_k}(P)$, Definition~\ref{characterization of CL preferences} implies that $\varphi_{i_k}(P') \mathbin{P_{i_k}} \varphi_{i_k}(P)$. This contradicts \DSP.
\end{proof}

Next, we show that at least two agents are involved in $C$.

\begin{claim}\label{claim33}
	$|N(C)|>1$.
\end{claim}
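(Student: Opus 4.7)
The plan is to mimic the self-loop argument used to prove the analogous Claim~\ref{claim3} in the proof of Theorem~\ref{Multi-unit characterization with TP}. No new ingredients specific to the conditionally lexicographic domain should be required, since the key combinatorial fact -- that a singleton trading cycle must be a self-loop at an owned object -- does not depend on preferences. The engine of the argument is Claim~\ref{claim22}, which has already been established above.

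First, I would argue by contradiction and suppose $|N(C)| = 1$. Then $i_1 = i_k$, and the cyclic sequence $C = (o_1, i_1, \dots, o_k, i_k, o_{k+1} = o_1)$ collapses to the self-loop $(o_1, i_k, o_1)$. By the definition of a trading cycle (applied to the endowment allocation), the unique object on $C$ must belong to the unique agent on $C$, so $o_1 \in \omega_{i_k}$. Moreover, since $C \in \mathcal{C}_t(P)$, object $o_1$ is still available at the start of step~$t$ of $\mathrm{ATTC}(P)$, i.e., $o_1 \in O^t$. Hence $o_1 \in \omega_{i_k} \cap O^t$.

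Next, I would invoke Claim~\ref{claim22}, which gives the identity $\varphi_{i_k}(P) \cap O^t = \omega_{i_k} \cap O^t$. Combined with the previous observation, this yields $o_1 \in \varphi_{i_k}(P)$. But $i_k$ was chosen precisely so that she does not receive the object she points to on $C$, i.e., $o_{k+1} = o_1 \notin \varphi_{i_k}(P)$. This contradiction completes the proof.

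I do not anticipate any genuine obstacle: the argument is essentially a one-line deduction from Claim~\ref{claim22}, and the definition of a trading cycle at $\omega$ rules out singleton cycles at non-endowed objects automatically. The only thing to double-check is that the selection conventions (labelling the offending agent as $i_k$ and setting $o_{k+1} = o_1$) are consistent with the degenerate case $k=1$, which they are.
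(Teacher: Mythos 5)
Your proposal is correct and is essentially identical to the paper's own proof: both derive the contradiction by noting that a singleton cycle forces $o_1 \in \omega_{i_k} \cap O^t$, which by Claim~\ref{claim22} puts $o_1$ in $\varphi_{i_k}(P)$, contradicting $o_1 \notin \varphi_{i_k}(P)$. Your version merely makes explicit the intermediate observation $o_1 \in O^t$, which the paper leaves implicit.
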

\begin{proof}
Toward contradiction, suppose that $|N(C)|=1$. Thus  $i_1 = i_k$ and $C = (o_{1}, i_1, o_{1})$. Then $o_{1} \in \omega_{i_k}$ and Claim~\ref{claim22} implies that $o_{1} \in \varphi_{i_k}(P)$, a contradiction.
\end{proof}

By Claim~\ref{claim22} and Claim~\ref{claim33}, agent~$i_{k-1}$ points to $o_{k}$ on $C$, yet she does not receive $o_{k}$. An argument similar to the proof of Claim~\ref{claim22} implies that $\varphi_{i_{k-1}}(P) \cap O^t = \omega_{i_{k-1}} \cap O^t$. 

Proceeding by induction, we conclude that for each agent~$i_\ell$ involved in $C$, we have $\varphi_{i_\ell}(P) \cap O^t = \omega_{i_\ell} \cap O^t$. Thus, $C$ is a Pareto-improving trading cycle at $P$. It follows that $\varphi$ is not \IGE, a contradiction.
\hfill{}$\blacksquare$

\subsection{Proof of Proposition~\ref{maximal domain 1}}
Without loss of generality, $P_1 \in \mathcal{M}_1 \setminus \mathcal{CL}_1$. Since $P_1$ is not conditionally lexicographic, there exist disjoint subsets $X, Y \in 2^O$ with $X \neq \emptyset$ such that
$$\text{for all }x \in X \text{, there exists }Z_x \subseteq X \setminus x \text{ such that }(Y \cup Z_x) \mathrel{P_1}(Y \cup x).$$
Because $P_1$ is \defterm{monotone}, $Z_x \subseteq X \setminus x$ implies $ [Y \cup (X \setminus x)] \mathrel{R_1} (Y \cup Z_x) $.
Thus, we have
$$\text{for all }x \in X,\quad [Y \cup (X \setminus x)] \mathrel{P_1} (Y \cup x).$$
Note also that $|X| \geq 3.$\footnote{If $X$ is a singleton, say $X = \{x\}$, then $Y \mathrel{P_1} (Y \cup x)$, a violation of monotonicity. If $X$ contains two objects, say $X = \{x, y\}$, we would have $(Y \cup x) \mathrel{P_1} (Y \cup y) \mathrel{P_1} (Y \cup x)$, a violation of transitivity.}

Let $x^* \in X$ be such that $(Y \cup x^*) \mathrel{R_1} (Y \cup x)$ for all $x \in X$.
Then, since $|X| \geq 3$ and $P_1$ is monotonic, it follows that
$$\text{for all }x \in X,\quad [Y \cup (X \setminus x)] \mathrel{P_1} (Y \cup x^*) \mathrel{R_1} (Y \cup x).$$
\begin{casenv}
\item Suppose $X \cup Y = O$. Let $N = \{1, 2\}$, and let $P_2 \in \mathcal{L}$ 
be such that (i) $\max_{P_2}(O) = x^*$, and (ii) for all $x \in X$, $x \mathrel{P_2} Y$. Then, in particular, 
$$x^* \mathrel{P_2} [Y \cup (X \setminus x^*)] \mathrel{R_2} (X \setminus x^*).$$
Consider the allocations
$$\mu\coloneqq (Y\cup x^{*},X\setminus x^{*})\quad\text{and}\quad\overline{\mu}\coloneqq (Y\cup\left(X\setminus x^{*}\right),x^{*}).$$
Then $\mu$ is not \PE at $P$ because it is Pareto dominated by $\overline{\mu}$ at $P$.

We claim that $\mu$ satisfies \IGE at $P$. Indeed, consider any trading cycle $C = (x, 2, y, 1, x)$, where $x \in \mu_2$ and $y \in \mu_1$. Executing $C$ yields the allocation $\mu'$, where
$$\mu'_1=\left(Y\cup\left\{ x^{*},x\right\} \right)\setminus y \quad\text{and}\quad \mu'_2 = \left(X\setminus\left\{ x^{*},x\right\} \right)\cup y.$$
If $y \in Y$, then this exchange harms agent 2, i.e., $\mu_2 \mathrel{P_2} \mu'_2$. On the other hand, if $y = x^*$, then this exchange harms agent 1 because $\mu_1 = (Y \cup x^*) \mathrel{P_1} (Y \cup x) = \mu'_1$. Hence, $\mu$ satisfies \IGE.
\item Suppose $X \cup Y \subsetneq O$. Let $N = \{1, 2, 3\}$ and denote $\overline{O} \coloneqq O \setminus (X \cup Y)$. Let $P_2 \in \mathcal{L}$ be such that (i) $\max_{P_2}(O) = x^*$, and (ii) for all $x \in X$, $x \mathrel{P_2} (O \setminus X)$. Let $P_3 \in \mathcal{L}$ be such that (i) for all $z \in \overline{O}$, $z \mathrel{P_3} (X \cup Y)$.

Consider the allocations
$$\mu\coloneqq (Y\cup x^{*},X\setminus x^{*},\overline{O}) \quad\text{and}\quad \overline{\mu}\coloneqq (Y\cup\left(X\setminus x^{*}\right),x^{*},\overline{O} ).$$
Then $\mu$ is not \PE at $P$ because it is Pareto dominated by $\overline{\mu}$ at $P$.

We claim that $\mu$ satisfies \IGE at $P$. Because $\mu_3$ is agent 3's top-ranked $|\overline{O}|$-subset of $O$, any trading cycle at $\mu$ involving agent 3 must harm agent 3. Thus, any Pareto-improving trading cycle at $\mu$ must involve only agents 1 and 2. However, there is no such exchange by the argument in Case 1. Hence, $\mu$ satisfies \IGE. \hfill{}$\blacksquare$
\end{casenv}

\section{Examples}\label{Appendix: independence}

The examples below illustrate the logical independence of the properties used in Theorems~\ref{Multi-unit characterization with TP}--\ref{responsive characterization with IR} (see Tables~\ref{tab:properties_L} and~\ref{tab:properties_R} for a summary of which properties each example satisfies on $\mathcal{L}$ and $\mathcal{R}$). Since $\mathcal{L} \subseteq \mathcal{R}$, any rule defined on $\mathcal{R}$ is also viewed as a rule on $\mathcal{L}$ by restriction.

Concretely, for each property appearing in one of these theorems, we exhibit a rule (defined on the relevant domain) that violates this property while satisfying all other properties in the theorem. To keep notation light, some examples fix convenient values for the set $N$ of agents and the endowment $\omega$. Analogous constructions can be given on $\mathcal{CL}$; we omit them for brevity.

\begin{example}\label{No-trade rule} \defterm{No-trade rule} (fails \IGE; satisfies \MAR, \BAL, \WELB, \IR, and \SP)

Let $\varphi^{\text{NT}}$ be the \defterm{no-trade rule} on $\mathcal{R}$, which selects $\omega$ for each problem. Then $\varphi^{\text{NT}}$ satisfies \MAR, \BAL, \WELB, \IR, and \SP on $\mathcal{R}$, but it fails \IGE on  $\mathcal{L}$ (and thus on $\mathcal{R}$).
\hfill{} $\diamond$
\end{example}

\begin{example}\label{Balanced SD} \defterm{Balanced Serial Dictatorship} (fails \WELB and \IR; satisfies \MAR, \BAL, \IGE, and \SP)

Let $\varphi^{\mathrm{BSD}}$ be the \defterm{Balanced Serial Dictatorship} on $\mathcal{R}$, defined recursively as follows: for each $P \in \mathcal{R}$, $\varphi^{\mathrm{BSD}}_1(P) = \max_{P_1}(O, |\omega_1|)$ and, for each $i = 2, \dots, n$,
$$\varphi^{\mathrm{BSD}}_i(P) = \max_{P_i}\left(O \setminus \bigcup_{j = 1}^{i-1} \varphi^{\mathrm{BSD}}_j(P), |\omega_i|\right),$$
where $\max_{P_i}(X, k)$ denotes the most-preferred $k$-subset of $X$ at $P_i$. Then $\varphi^{\mathrm{BSD}}$ satisfies \MAR, \BAL, \IGE, and \SP on $\mathcal{R}$, but it fails \WELB and \IR on $\mathcal{L}$ (and thus on $\mathcal{R}$).
\hfill{} $\diamond$
\end{example}

\begin{example}\label{Deviation from TTC} \defterm{A rule $\varphi^{\neg \TP}$} (fails \TP and \DSP; satisfies \MAR, \BAL, \IGE, \WELB, and \IR)

Let $N = \{1, 2\}$, $O = \{a, b, c\}$, and $\omega = (\{a,b\}, \{c\})$. Fix $P^* \in \mathcal{L}$ with $P^*_1|_O : c, a, b$ and $P^*_2|_O: a, b, c$, and let $\mu^* = (\{a, c\}, \{b\})$. Define the rule $\varphi^{\neg \TP}$ on $\mathcal{R}$ by
$$
\varphi^{\neg \TP}(P) =
\begin{cases}
    \mu^*, & \text{if } P|_O = P^*|_O \\
    \varphi^{\mathrm{TTC}}(P), & \text{otherwise}.
\end{cases}
$$
It is easy to see that $\varphi^{\neg \TP}$ satisfies \MAR, \BAL, \IGE, \WELB, and \IR on $\mathcal{R}$. 

We show that $\varphi^{\neg \TP}$ fails \TP and \DSP on $\mathcal{L}$. Consider $P'_2 \in \mathcal{L}_2$ with $P'_2|_O : a, c, b$. Then $\varphi^{\neg \TP}_2(P^*_1, P'_2) = \{a\} \mathrel{P^*_2} \{b\} = \varphi^{\neg \TP}_2(P^*)$, so $P'_2$ is a profitable manipulation for agent~$2$ at $P^*$. Since $P'_2$ is both a truncation strategy and a drop strategy for $P^*_2$ (it is obtained by dropping the singleton tail subset $\{o \in O \setminus \omega_2 \mid b \mathrel{R^*_2} o\} = \{b\}$), $\varphi^{\neg \TP}$ fails \TP and \DSP on $\mathcal{L}$  (and thus on $\mathcal{R}$).
\hfill{} $\diamond$
\end{example}

\begin{example}\label{NOT BAL} \defterm{A rule $\varphi^{\neg \BAL}$ on $\mathcal{L}$} (fails \BAL; satisfies \PE, \WELB, \IR, and \SDSP)

Let $N = \{1, 2\}$, $O = \{a, b, c\}$, and $\omega = (\{a, b\}, \{c\})$. Consider the lexicographic profile $P^*$ with $P_1^*: c, a, b$ and $P^*_2 : a, b, c$, and let $\mu^* \coloneqq (\{c\}, \{a,b\})$. Define the rule $\varphi^{\neg \BAL}$ on $\mathcal{L}$ by
$$
\varphi^{\neg \BAL}(P) =
\begin{cases}
    \mu^*, & \text{if } P = P^* \\
    \varphi^{\mathrm{TTC}}(P), & \text{otherwise}.
\end{cases}
$$
It is easy to see that $\varphi^{\neg \BAL}$ satisfies \PE, \WELB, and \IR. 

We show that $\varphi^{\neg \BAL}$ satisfies \DSP (and thus \SDSP by Lemma~\ref{Proposition: drop SP + WELB implies subset-drop SP}).
Fix $P \in \mathcal{L}$, agent $i \in N$, and a drop strategy $P'_i \in \mathcal{D}_i(P_i) \setminus \{P_i\}$. Denote $P' \coloneqq (P'_i, P_{-i})$ and consider three cases.

\begin{casenv}
    \item Suppose $P \neq P^*$ and $P' \neq P^*$. Then $\varphi^{\neg \BAL}(P) = \varphi^{\mathrm{TTC}}(P)$ and $\varphi^{\neg \BAL}(P') = \varphi^{\mathrm{TTC}}(P')$, so the deviation is not profitable because $\varphi^{\mathrm{TTC}}$ satisfies \SDSP on $\mathcal{L}$.
    \item Suppose $P = P^*$. If $i = 2$, then at $P_2$ the assignment $\varphi^{\neg \BAL}(P) = \mu_2^*$ is agent 2's most-preferred feasible bundle (i.e., among all nonempty, proper subsets of $O$), so no deviation can be profitable. If $i = 1$, then any proper drop strategy for $P_1$ must drop object $c \in O \setminus \omega_1$, giving $P'_1 : a, b, c$. Thus $\varphi^{\neg \BAL}_1(P) = \{c\} \mathrel{P_1} \{a, b\} = \varphi^{\neg \BAL}_1(P')$.
    \item The case $P' = P^*$ is impossible. Indeed, under $P^*_1$ the bottom object is $b \in \omega_1$, so $P^*_1$ is not a proper drop strategy for any $P_1 \in \mathcal{L}$. Similarly, under $P^*_2$ the bottom object is $c \in \omega_2$, so $P^*_2$ is not a proper drop strategy for any $P_2 \in \mathcal{L}$.
    \hfill{} $\diamond$
\end{casenv}
\end{example}

\begin{example}\defterm{A rule $\varphi^{\neg \WELB}$ on $\mathcal{L}$} (fails \WELB; satisfies \BAL, \IGE, \IR, and \SDSP)
\label{no characterization with IR}

Let $N = \{1, 2\}$, $O = \{a, b, c, d\}$, and $\omega = (\{a, b\}, \{c, d\})$. Consider the lexicographic profile $P^*$ with $P^*_1 : c, a, b, d$ and $P^*_2 : a, b , c, d$, and let $\mu^* = (\{c, d\}, \{a, b\})$. Define $\varphi^{\neg \WELB}$ on $\mathcal{L}$ by
$$
\varphi^{\neg \WELB}(P) = \begin{cases}
    \mu^*, & \text{if } P = P^* \\
    \varphi^{\mathrm{TTC}}(P), & \text{otherwise}.
\end{cases}
$$
It is easy to see that $\varphi^{\neg \WELB}$ satisfies \BAL, \IGE, and \IR. Furthermore,  $\varphi^{\neg \WELB}$ fails \WELB at $P^*$ because $d \in \varphi_1^{\neg \WELB}(P^*)$ although $\min_{P_1^*}(\omega_1) = b \mathrel{P^*_1} d$. 

We now show that $\varphi^{\neg \WELB}$ satisfies \SDSP (hence \TP and \DSP). Fix $P \in \mathcal{L}$, agent $i \in N$, and a subset-drop strategy $P'_i \in \mathcal{S}_i(P_i) \setminus \{P_i\}$. Denote $P' \coloneqq (P'_i, P_{-i})$ and consider three cases.

\begin{casenv}
    \item Suppose $P = P^*$. If $i = 2$, then $\varphi^{\neg \WELB}_2(P) = \mu^*_2$ is agent~$2$'s most-preferred two-object bundle at $P^*_2$, so no deviation can be profitable. If $i = 1$, then any proper subset-drop strategy for $P_1$ must drop object $c \in O \setminus \omega_1$ (possibly together with $d$), which means that $P'_1$ is either $P^1_1:a, b, d, c$ or $P^2_1:a, b, c, d$. In each case, $\varphi^{\mathrm{TTC}}$ assigns $\varphi^{\mathrm{TTC}}(P') = \omega$. Thus $\varphi^{\neg \WELB}_1(P) = \{c,d\} \mathrel{P_1} \{a, b\} = \varphi^{\neg \WELB}_1(P')$.
    \item Suppose $P' = P^*$. Then necessarily $i = 1$ and $P_2 = P^*_2$ (since $P^*_2$ ranks an object $d \in \omega_2$ at the very bottom, it is not a proper subset-drop strategy for any $P_2 \in \mathcal{L}_2$).
    Moreover, $P'_1 = P^*_1$ can be obtained from $P_1$ only by dropping object $d$, which means that $P_1$ is either $P^1_1 :d, c, a, b$, $P^2_1: c, d, a, b$, or $P^3_1:c, a, d, b$. In each case, $\varphi^{\mathrm{TTC}}$ assigns $\varphi^{\mathrm{TTC}}(P) = \mu^* = \varphi^{\neg \WELB}(P')$, so the deviation is not profitable.
    \item Suppose $P \neq P^*$ and $P' \neq P^*$. Then $\varphi^{\neg \WELB}(P) = \varphi^{\mathrm{TTC}}(P)$ and $\varphi^{\neg \WELB}(P') = \varphi^{\mathrm{TTC}}(P')$, so the deviation is not profitable because $\varphi^{\mathrm{TTC}}$ satisfies \SDSP on $\mathcal{L}$.
    \hfill $\diamond$
\end{casenv}
\end{example}

\begin{example}\label{non-marginal}
\defterm{A rule $\varphi^{\neg \MAR}$ on $\mathcal{R}$} (fails \MAR; satisfies \IGE, \WELB, \IR, and \TP)

Let $N = \{1, 2\}$ and $\omega = (\{a, b, d\}, \{c, e\})$. Fix an allocation $\mu^* \coloneqq (\{b, c, e\}, \{a, d\})$. Let 
$$
\mathcal{R}^* = \{P^* \in \mathcal{R} \mid P^*_1|_O:a,e,b,c,d \text{ and } P^*_2|_O:a,c,b,d,e\}.
$$
Since $\varphi^{\mathrm{TTC}}$ is marginal, for each $P \in \mathcal{R}^*$ we have $\varphi^{\mathrm{TTC}}(P) = (\{a, d, e\}, \{b, c\})$.

Define the rule $\varphi^{\neg \MAR}$ on $\mathcal{R}$ by
$$
\varphi^{\neg \MAR}(P) = 
\begin{cases}
\mu^*, & \text{if } P \in \mathcal{R}^* \text{ and } \mu^* \text{ Pareto dominates } \varphi^{\mathrm{TTC}}(P) =(\{a, d, e\}, \{b, c\}), \\
\varphi^{\mathrm{TTC}}(P), & \text{otherwise.} 
\end{cases}
$$
One easily checks that $\varphi^{\neg \MAR}$ satisfies \IGE, \WELB, and \IR (and \BAL). 

We show that $\varphi^{\neg \MAR}$ satisfies \TP. Fix $P \in \mathcal{R}$ and $i \in N$. Let $P'_i \in \mathcal{T}_i(P_i)$ and denote $P' \coloneqq (P'_i, P_{-i})$. Denote the unique agent in $N \setminus \{i\}$ by $j$. We consider two cases.

\begin{casenv}
\item Suppose $P \in \mathcal{R}^*$. Consider two further subcases.
    \begin{casenv}
        \item Suppose $\varphi_j^{\mathrm{TTC}}(P) \mathrel{P_j} \mu^*_j$. Then $\mu^*$ does not Pareto dominate $\varphi^{\mathrm{TTC}}(P)$ at $P$, so the definition of $\varphi^{\neg \MAR}$ implies that $\varphi^{\neg \MAR}_i(P) = \varphi^{\mathrm{TTC}}(P)$ and $\varphi^{\neg \MAR}_i(P') = \varphi^{\mathrm{TTC}}(P')$. Thus \TP for $\varphi^{\mathrm{TTC}}$ implies that $P'_i$ is not a profitable manipulation.
        \item Suppose $\mu_j^* \mathrel{R_j} \varphi_j^{\mathrm{TTC}}(P)$. Then the definition of $\varphi^{\neg \MAR}$ implies that $\varphi^{\neg \MAR}_i(P)$ is the most-preferred bundle in $\{\mu_i^*, \varphi_i^{\mathrm{TTC}}(P)\}$ according to $P_i$. If $P' \in \mathcal{R}^*$, then $\varphi_i(P') \in \{\mu_i^*, \varphi_i^{\mathrm{TTC}}(P)\}$, which implies $\varphi^{\neg \MAR}_i(P) \mathrel{R_i} \varphi^{\neg \MAR}_i(P')$. If $P' \notin \mathcal{R}^*$, then \TP for $\varphi^{\mathrm{TTC}}$ implies that
        $$
        \varphi^{\neg \MAR}_i(P) \mathrel{R_i} \varphi_i^{\mathrm{TTC}}(P) \mathrel{R_i} \varphi^{\mathrm{TTC}}_i(P') = \varphi^{\neg \MAR}_i(P').
        $$
    \end{casenv}
\item Suppose $P \in \mathcal{R} \setminus \mathcal{R}^*$. If $P' \in \mathcal{R} \setminus \mathcal{R}^*$, then \TP for $\varphi^{\mathrm{TTC}}$ implies that
$$
\varphi^{\neg \MAR}_i(P) = \varphi_i^{\mathrm{TTC}}(P) \mathrel{R_i} \varphi^{\mathrm{TTC}}_i(P') = \varphi^{\neg \MAR}_i(P').
$$
Furthermore, it is not possible that $P' \in \mathcal{R}^*$. Indeed, $P' \in \mathcal{R}^*$ would imply $P'_i|_O \neq P_i|_O$, in which case $P'_i|_O$ is obtained from $P_i|_O$ by dropping a \emph{nonempty} tail subset $X \subseteq O \setminus \omega_i$. In particular, this means $\min_{P'_i}(O) \notin \omega_i$, a contradiction.
\hfill $\diamond$
\end{casenv}
\end{example}

\newpage

\bibliographystyle{econ-econometrica}
\phantomsection\addcontentsline{toc}{section}{\refname}\bibliography{references}


\end{document}